\documentclass[]{article}

\usepackage{algorithm}
\usepackage{algorithmic}
\usepackage{amssymb}  
\usepackage{amsmath}  
\usepackage{amsthm}   
\usepackage{authblk}
\usepackage[margin = 1in]{geometry}
\usepackage[colorlinks=true, linkcolor=purple, citecolor=blue, urlcolor=blue]{hyperref}
\usepackage{natbib}
\usepackage{thm-restate}
\usepackage{xcolor}

\newtheorem{claim}{Claim}

\newtheorem{definition}{Definition}  
\newtheorem{invariant}{Invariant}
\newtheorem{lemma}{Lemma}
\newtheorem{observation}{Observation}
\newtheorem{theorem}{Theorem}

\newcommand{\MBB}{\mathrm{MBB}}
\newcommand{\p}{p}
\newcommand{\x}{X}

\title{Revisiting Fair and Efficient Allocations for Bivalued Goods}
\author{Hui Liu}
\author{Zhijie Zhang}
\affil{Fuzhou University, Fuzhou, China, \{230320053, zzhang\}@fzu.edu.cn}
\date{}

\begin{document}

\maketitle

\begin{abstract}
    This paper re-examines the problem of fairly and efficiently allocating indivisible goods among agents with additive bivalued valuations. Garg and Murhekar (2021) proposed a polynomial-time algorithm that purported to find an EFX and fPO allocation. However, we provide a counterexample demonstrating that their algorithm may fail to terminate. To address this issue, we propose a new polynomial-time algorithm that computes a WEFX (Weighted Envy-Free up to any good) and fPO allocation, thereby correcting the prior approach and offering a more general solution. Furthermore, we show that our algorithm can be adapted to compute a WEQX (Weighted Equitable up to any good) and fPO allocation.
\end{abstract}

\section{Introduction}

Since the seminal work of Steinhaus \cite{Steinhaus48}, fair division has been a central topic in mathematics, economics, and computer science \cite{Moulin03}. The field addresses the problem of distributing a set of resources, either divisible or indivisible, among a set of agents. This has wide-ranging applications, including inheritance distribution \cite{pratt1990fair}, course allocation \cite{BudishC07}, and air traffic management \cite{vossen2002fair}.

This paper focuses on the fair allocation of \emph{indivisible} goods. Formally, given a set $M$ of $m$ goods and a set $N$ of $n$ agents, the goal is to partition $M$ into $n$ disjoint bundles $X=(X_1,X_2,\ldots,X_n)$, where bundle $X_i$ is assigned to agent $i$. A fundamental requirement is that the allocation satisfies certain fairness criteria. The most intuitive notion is envy-freeness (EF) \cite{Foley67}, where no agent prefers another's bundle to her own. However, EF allocations are generally impossible to guarantee with indivisible goods (e.g., dividing a single item between two agents), leading to the study of relaxed fairness concepts.

%

Budish \cite{Budish11} introduced envy-freeness up to one good (EF1), which requires that for any pair of agents, the envy can be eliminated by removing at most one good from the other's bundle. EF1 allocations always exist and can be computed efficiently \cite{LiptonMMS04}. A stronger relaxation, envy-freeness up to \emph{any} good (EFX), was proposed by Caragiannis et al. \cite{CaragiannisKMP016}. It demands that envy be removable by the elimination of \emph{any} good from the envied bundle. The existence of EFX allocations in general remains a major open problem, though it has been established for several special cases, including when agents have bivalued valuations \cite{AmanatidisBFHV20,byrka2025,jin2025}.

Another critical research direction seeks allocations that are both fair and economically efficient, measured by Pareto optimality (PO) or its fractional strengthening (fPO). Caragiannis et al.~\cite{CaragiannisKMP016} showed that maximizing Nash Social Welfare (NSW) yields EF1 and PO allocations, proving their existence. However, maximizing NSW is computationally hard \cite{NguyenNRR14,Lee17}. For the stricter EFX notion, Amanatidis et al.~\cite{AmanatidisBFHV20} showed that maximizing NSW also yields EFX and PO allocations for bivalued instances, though the computational hurdle remains.

To achieve computational tractability, Fisher market models have been employed. This line of work has led to pseudo-polynomial algorithms for EF1 and fPO allocations \cite{BarmanKV18,MurhekarG21}. Specifically for bivalued valuations, polynomial-time algorithms for EF1 and fPO are known \cite{MurhekarG21}. Garg and Murhekar \cite{GargM21} claimed that a Fisher-market-based algorithm could produce an EFX and fPO allocation in polynomial time for bivalued instances. However, as we demonstrate, their algorithm may not terminate. We formalize this in Theorem \ref{thm: counterexample} (proof in Appendix \ref{sec: a counterexample}).


\begin{restatable}{theorem}{ThmCounterexample}
	\label{thm: counterexample}
	There exists a counterexample on which the algorithm of Garg and Murhekar \cite{GargM21} never halts.
\end{restatable}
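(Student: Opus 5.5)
The plan is to exhibit an explicit small bivalued instance and trace the algorithm of Garg and Murhekar \cite{GargM21} step by step. The goal is to show that after finitely many steps it returns to a configuration it has already been in: the same allocation, the same prices up to a common scaling, and the same set of least spenders and MBB edges. Their algorithm is deterministic, or becomes so once we fix the tie-breaking rules allowed by their pseudocode. Reaching the same configuration twice therefore means the execution is periodic and never halts.

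\textbf{Recalling the algorithm.} I will first restate its phases in the notation of this paper. It starts from a welfare-maximizing integral allocation with prices $p_j = v_{ij}$ for the owner $i$ of good $j$, so that the allocation is on MBB edges. It then repeats three operations: it identifies a least spender $L$; it transfers goods along alternating MBB paths from an agent violating pEFX toward $L$; and, when no such path exists, it raises the prices of the component reachable from $L$ by a factor $k = b/a$. The key observation is that the potential argument in \cite{GargM21} relies on prices only increasing and on a bound on the number of raises. I will look for a cycle in which a price raise is followed by transfers that effectively undo it. Concretely, after the raise the roles of the least spender and the violating agent swap, and a later raise on the complementary component returns the price vector to a scalar multiple of an earlier one. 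Since only price ratios matter for MBB edges and pEFX tests, this is a genuine repeat.

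\textbf{Constructing the instance.} I will search over tiny instances by hand, or by brute-force simulation, and expect to need about $3$ agents and $4$–$6$ goods with values in $\{1,k\}$ and, say, $k=2$ or $3$. In the instance, two agents compete for a high-valued good, and a third agent's spending hovers near the threshold, so that the least-spender identity alternates. I will then tabulate the run: for each iteration, the allocation $X$, the prices $\p$, the spendings $\p(X_i)$, the least spender, the MBB graph, and the action taken. The tabulation must show that iteration $t_0 + T$ coincides with iteration $t_0$ up to scaling of $\p$.

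\textbf{Main obstacle.} The main difficulty is fidelity to the original pseudocode. Every step of the trace must be a move the algorithm actually makes, including its tie-breaking, its choice of path, and the exact condition for raising prices, rather than a move of a variant. Where their description leaves a choice open, I will either make the instance tie-free or state explicitly that the cycle occurs under some admissible choice, which suffices to refute a claimed termination guarantee. A secondary check is to confirm that the invariants they maintain (fPO via MBB edges, and pEF1 of non-least spenders) hold along the cycle. This shows the counterexample lies within the algorithm's intended state space, rather than being caused by an invariant violation that their proof excludes.
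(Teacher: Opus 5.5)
Your proposal is a plan for finding a counterexample, not a proof of one, and that is the gap. The theorem is an existence statement, so its entire content is a concrete instance together with a verified run. You never commit to an instance, and you never check a single step of the algorithm on one.

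The framework you set up is sound. MBB sets, the transfer condition, the choice of least spender and the termination test are all invariant under multiplying every price by a common factor. Hence returning to the same allocation with uniformly scaled prices forces periodic behaviour. Your picture is also the right one: the least-spender role alternates between complementary components, and successive raises cancel up to scaling. That is exactly the mechanism that works.

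But none of the strict inequalities that make such a cycle happen are verified, including the one you flag yourself. In their algorithm, prices are raised only if, after the transfer loop, the test ``$p(X_h\setminus\{e\})\le p(X_i)$ for all $h\notin C_i$ and all $e\in X_h$'' fails. So you must show three things at every visit:
\begin{enumerate}
\item this test fails;
\item the transfer loop leaves the configuration intact;
\item the identity of the least spender switches after each raise.
\end{enumerate}

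The paper settles this with two agents, five goods and $k=5$, where $v_1=(5,5,5,1,1)$ and $v_2=(1,1,1,1,5)$ on $e_1,\ldots,e_5$.
\begin{itemize}
\item Start from $X_1=\{e_1,e_2,e_3\}$ (prices $5$) and $X_2=\{e_4,e_5\}$ (prices $1,5$). Agent $2$ is the least spender ($6<15$), with $\alpha_2=1$ but $\alpha_{2,e}=1/5$ for $e\in X_1$. So no MBB edge leaves agent $2$ and nothing is transferred. The test fails because $10>6$, and $e_4,e_5$ are raised to $5,25$.
\item Now agent $1$ is the least spender ($15<30$), with $\alpha_1=1$ but $\alpha_{1,e_4}=1/5$ and $\alpha_{1,e_5}=1/25$. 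Again nothing moves. The test fails because $p(X_2\setminus\{e_4\})=25>15$.
\item Raising $e_1,e_2,e_3$ to $25$ gives exactly five times the initial prices on the same allocation.
\end{itemize}
No transfer ever occurs, so the ``transfers that undo a raise'' and the third agent you were planning for are unnecessary.

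The parameters are not free either. In this shape, agent $1$ becomes the least spender after the first raise only if $3k<k+k^2$, and the second test fails only if $k^2>3k$. So the cycle requires $k>3$. Your guess of $k\in\{2,3\}$ with a handful of goods is exactly the kind of choice that has to be traced rather than assumed.
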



To remedy this, we propose a new polynomial-time algorithm that computes a \emph{weighted} EFX (WEFX) and fPO allocation, as stated in Theorem \ref{thm: WEFX and fPO}. We note that an alternative algorithm for EFX and fPO allocations exists \cite{BuLLLT24}; our contribution generalizes this result to the weighted setting. Previously, such weighted guarantees were known only for binary valuations \cite{NeohT25}. The Fisher market is a powerful tool for fair and efficient allocation \cite{BarmanKV18, MurhekarG21, GargM21, FreemanSVX19n,WuZZ23,LinWZ25}, and our work further deepens its understanding.

Finally, we consider equitability up to any good (EQX) \cite{GourvesMT14m}. Gourvès et al.~\cite{GourvesMT14m} proved that EQX allocations always exist for additive valuations and can be found in polynomial time, a result later strengthened to EQX and PO \cite{FreemanSVX19n}. Garg and Murhekar \cite{GargM21} gave a polynomial-time algorithm for EQX and PO allocations for bivalued goods. By adapting our approach, we generalize this result to the weighted setting (WEQX and fPO), as captured in Theorem \ref{thm: WEQX and fPO}.

\begin{theorem}
	\label{thm: WEFX and fPO}
	There exists a polynomial-time algorithm that computes a WEFX and fPO allocation for bivalued instances.
\end{theorem}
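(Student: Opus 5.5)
The plan is to run a Fisher-market algorithm with weights built into the prices. Let agent $i$ have weight $w_i>0$, and let each $v_{ij}\in\{a,b\}$ with $a>b>0$. WEFX requires, for all $i,h$ and every $g\in X_h$, that $v_i(X_i)/w_i \ge v_i(X_h\setminus\{g\})/w_h$.

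Throughout the algorithm we maintain a price vector $\p$ and an allocation $\x$ in which every agent holds only goods in her MBB set. Here $\MBB_i=\arg\max_j v_{ij}/\p_j$, and we write $\alpha_i$ for that maximum ratio. By the first welfare theorem for Fisher markets, this invariant makes $\x$ fPO at every step. We measure each agent by her weighted spending $\p(X_i)/w_i$. The key fact is that if $i$ is MBB for $\x$, then $v_i(X_h)\le \alpha_i\,\p(X_h)$ and $v_i(X_i)=\alpha_i\,\p(X_i)$. Hence weighted price-EFX, meaning $\p(X_i)/w_i \ge \p(X_h\setminus\{g\})/w_h$ for all $g\in X_h$, implies WEFX. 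It therefore suffices to reach a weighted pEFX state while preserving the MBB invariant.

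We normalize so that every price lies in $\{1,a/b\}$ up to powers of $a/b$, as in the bivalued Fisher-market algorithms. We start from the welfare-maximizing allocation: each good goes to an agent valuing it $a$ if one exists, with $\p_j=v_{ij}$. We then repeat two phases. In the first phase we take the weighted least spender $\ell=\arg\min_i \p(X_i)/w_i$ and build the alternating-path tree $T_\ell$: agent $\ell$, then goods in $\MBB_\ell$, then their owners, then their MBB goods, and so on.

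If some agent $h$ in the tree is a weighted pEFX-violator toward $\ell$ along a path ending in good $g\in X_h$, that is, $\p(X_h\setminus\{g\})/w_h>\p(X_\ell)/w_\ell$, we transfer $g$ to its predecessor on the path. We choose the violator closest to the root, exactly as in Barman et al. and Garg--Murhekar. If no such transfer is possible, we multiply all prices of goods in $T_\ell$ by $a/b$. This keeps every agent's MBB goods inside her component, and because the values are bivalued, new MBB edges appear after a single raise.

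The bivalued structure is also what makes the final pEFX condition hold. All prices in a component differ by a factor of at most $a/b$. Consequently, agents whose goods were raised remain weighted pEFX toward every lower component: any good they could lose has price at least $a/b$ times the comparable good.

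The main obstacle is termination. The counterexample of Theorem \ref{thm: counterexample} shows that Garg and Murhekar's potential argument fails: the least spender can change identity and prices can cycle. I would replace it with a monotone potential. First, prices only ever increase, and each good's price takes at most two values relative to a global scale, since once some good has been raised twice its owner is never again the least spender in a way that forces a further raise. Second, between price raises the weighted spending of the current least spender $\ell$ never decreases, and the number of transfers is bounded by $\mathrm{poly}(n,m)$. For the second point I would argue that each transfer moves a good strictly closer to the root along a BFS level structure, giving the usual lexicographic potential on (level, spending).

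Weights make the "$\ell$ stays least spender" step delicate: after a transfer, the giver's weighted spending $\p(X_h\setminus\{g\})/w_h$ still exceeds $\p(X_\ell)/w_\ell$ by choice of violator, so $\ell$ remains the minimum or is replaced by an agent of strictly larger weighted spending. I expect this step, a strictly increasing lexicographic potential on sorted weighted spendings together with an $O(n)$ bound on price-raise rounds per agent, to be the crux. I would try first to prove that each agent's goods are raised at most once after she first ceases to be a least spender. Combining the resulting polynomial bound on iterations with the MBB invariant (which gives fPO) and the weighted pEFX guarantee at termination (which gives WEFX) proves the theorem.
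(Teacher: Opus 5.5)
Your reduction is fine: MBB allocations are fPO, and weighted pEFX plus MBB gives WEFX. The plan fails at its core, though. The state you intend to terminate in need not exist, so no potential function can give termination. Your algorithm only multiplies prices by $a/b$, so every price is $b\,(a/b)^j$, and it halts only at a weighted pEFX equilibrium.

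Take the instance of Theorem~\ref{thm: counterexample} ($k=5$, unit weights). Use the pairwise fPO condition $v_1(g)v_2(h)\ge v_2(g)v_1(h)$ for $g\in X_1$, $h\in X_2$, together with a short case check. The only allocation that can carry pEFX equilibrium prices is $X_1=\{e_1,e_2,e_3\}$, $X_2=\{e_4,e_5\}$. If agent~2 holds one of $e_1,e_2,e_3$, agent~1 holds at most two goods, all priced $c=p(e_4)$, against $\hat p_2\ge 6c$. Otherwise agent~1 holds $e_1,e_2,e_3$. She can hold $e_5$ only if agent~2 holds nothing. If she holds $e_4$, equilibrium forces $p(e_5)\le p(e_1)$ while pEFX needs $p(e_5)\ge 3p(e_1)$.

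For the remaining allocation, equilibrium forces $p(e_1)=p(e_2)=p(e_3)=d$, $p(e_4)=c$, $p(e_5)=5c$ with $c\le d\le 5c$. pEFX needs $3d\ge 5c$ and $6c\ge 2d$, i.e.\ $d/c\in[5/3,3]$, which contains no power of $5$.

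Accordingly, on this instance your procedure is exactly the run of Algorithm~\ref{alg: Garg and Murhekar} in the appendix, after at most one transfer of $e_4$ to agent~2. The least spender's alternating tree is a singleton, so its goods are raised. The new MBB edges point from the raised agent to the other one, which is now the least spender with a singleton tree. The two bundles are then raised alternately forever. In particular, two of your claims are false here: that a single raise creates the edges needed for progress, and that an agent's goods are raised at most once after she stops being the least spender.

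The missing idea is to stop aiming for pWEFX for every agent. The paper first computes (Algorithm~\ref{alg: initial allocation}) an equilibrium with all $\alpha_i=1$. It splits the agents into groups $N_1,\dots,N_R$ reachable from successive least spenders, each internally pWEFX. Algorithm~\ref{alg: WEFX and fPO} then processes the groups in order. It raises each group at most once by $k$ and moves price-$k$ goods from the big spender $b$ to the group's least spender $\ell$. It halts as soon as $k\cdot p(X_\ell)/w_\ell\ge\hat p_b$, i.e.\ with a factor-$k$ slack.

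Agents in $Q$ are pWEFX. An agent $i\notin Q$ is unraised, still holds $X_i^0$, and is pWEFX toward agents outside $Q$ in the same or lower groups. Toward every other agent $j$, the lost factor $k$ is repaid by the MBB-ratio gap $\alpha_i\ge k\,\alpha_{i,e}$ for $e\in X_j$ (Lemma~\ref{lem: MBB ratios} and Lemma~\ref{lem: properties of initial allocation}). Let $e'$ be the good attaining $\hat p_j$, and use $p(X_i)/w_i\ge p(X_\ell)/w_\ell\ge\hat p_b/k\ge\hat p_j/k$. Then $v_i(X_i)/w_i=\alpha_i\,p(X_i)/w_i\ge(\alpha_i/k)\,\hat p_j\ge v_i(X_j\setminus\{e'\})/w_j$ holds directly, not via pWEFX.

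Termination is then routine: there are at most $R$ raises, and each transfer adds a good to the group being processed. On the instance above, the paper's algorithm returns $X_1=\{e_1,e_2,e_3\}$, $X_2=\{e_4,e_5\}$ immediately. That allocation is EFX and fPO, although by the computation above no power-of-$5$ prices make it pEFX.
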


\begin{theorem}
	\label{thm: WEQX and fPO}
	There exists a polynomial-time algorithm that computes a WEQX and fPO allocation for bivalued instances.
\end{theorem}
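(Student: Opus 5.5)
The plan is to prove Theorem~\ref{thm: WEQX and fPO} by adapting the Fisher-market framework used for Theorem~\ref{thm: WEFX and fPO}. Equitability compares agents' values of their own bundles, so we will work with normalized utilities $v_i(X_i)/w_i$ rather than spending. Assume the bivalued instance has values $v_{ij}\in\{a,b\}$ with $a>b>0$; the case $b=0$ (binary) is handled separately or by a limiting argument. We maintain an allocation $X$ and prices $p$ such that every agent owns only goods in her MBB set, $X_i\subseteq \MBB_i$, where $\MBB_i=\arg\max_j v_{ij}/p_j$. This invariant certifies fPO by the First Welfare Theorem for Fisher markets. WEQX requires that for all $i,k$ and every $g\in X_k$,
\[
\frac{v_i(X_i)}{w_i}\ \ge\ \frac{v_k(X_k\setminus\{g\})}{w_k}.
\]
Since values are bivalued, removing the least valuable good for $k$ is the binding case.

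The algorithm proceeds in three phases.

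\emph{Initialization.} Start from a welfare-maximizing allocation: each good goes to an agent valuing it $a$ if one exists, and otherwise arbitrarily. Set $p_j=v_{ij}$ for the owner $i$. Every agent then has MBB ratio $1$, and the invariant holds.

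\emph{Transfers.} Repeatedly pick the "least" agent $\ell$, the one minimizing $v_\ell(X_\ell)/w_\ell$, and check whether some agent $k$ violates WEQX toward $\ell$. We search alternating MBB paths $\ell=i_0\to g_1\to i_1\to\cdots\to g_t\to i_t$, where $g_s\in X_{i_s}\cap \MBB_{i_{s-1}}$. We transfer a good along the shortest path to the first agent that is a WEQX violator. Because transfers stay inside MBB edges, fPO is preserved.

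\emph{Price rise.} When no violator is reachable, raise the prices of all goods in the component reachable from $\ell$ by a factor $a/b$. This is the only possible factor in the bivalued setting, which keeps the number of distinct price levels small. New MBB edges then appear from outside agents into the component.

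The WEQX condition is checked directly in value terms. The main technical point, inherited from the approach to Theorem~\ref{thm: WEFX and fPO}, is a structural invariant: all prices lie in $\{b,a\}\cdot$ (a common scale) per component, and at most two price levels occur. This is where the bivalued assumption is used to argue that the least agent's normalized value never decreases, and that goods moved from $k$ to $\ell$ keep $k$ above $\ell$ after removal in the WEQX sense.

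For termination in polynomial time, we use a potential argument. Between price rises, each transfer either strictly increases the least agent's normalized value or shrinks the distance labelling of the BFS tree. There are polynomially many distance configurations, as in standard arguments. The number of price rises is bounded because prices take values in $\{1,a/b\}$ relative to the initial scale. Once a component has been raised, raising it again would make some agent's MBB goods non-MBB, contradicting the two-level structure. So each agent participates in at most one rise per "epoch" in which the least agent changes identity.

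The main obstacle we anticipate is exactly this termination argument. The counterexample of Theorem~\ref{thm: counterexample} shows that naive swap-and-raise loops can cycle. We will therefore define an explicit lexicographic potential:
\[
\Bigl(\text{set of raised agents},\ \min_i v_i(X_i)/w_i,\ \text{BFS distances}\Bigr),
\]
and prove that each step strictly improves it. At the end of the algorithm, no agent violates WEQX toward the least agent. Since the least agent has the smallest normalized value, there is then no WEQX violation between any pair, which gives the theorem.
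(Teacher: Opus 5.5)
Your proposal has a genuine gap exactly where you expect the difficulty: polynomial termination is asserted rather than proved, and the reasons you give do not work. Two facts do not bound the number of price rises unless the number of epochs is also bounded, and you give no such bound: that prices sit on two levels times a common scale, and that each agent is raised at most once per ``epoch'' of a fixed least agent. The instance in the proof of Theorem~\ref{thm: counterexample} satisfies both premises: prices are always $s\cdot\{1,k\}$ for a scale $s$ that grows without bound, and each agent is raised once per change of least agent. Yet Garg and Murhekar's procedure never halts on it.

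Your claim that a second raise would make some agent's MBB goods non-MBB is false. All prices are integer powers of $k=a/b$ and all values are $1$ or $k$, so every bang-per-buck ratio is a power of $k$. For an MBB-closed component, every outside good therefore has ratio at most $\alpha_i/k$ for each component agent $i$. Raising the component's goods by $k$ then keeps each component agent's bundle MBB, however often it is repeated.

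The lexicographic potential is only announced. A rise changes no values, so it must improve the raised set or the BFS labels. Nothing you say rules out a rise after which no outside good has ratio exactly $\alpha_i/k$. Then no MBB edge appears, the reachable set and distances are unchanged, and the same component must be raised again, which also breaks your two-level invariant.

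Two smaller points also need repair. After a rise, new MBB edges go \emph{from} the component \emph{to} outside agents, because the component agents' MBB ratio drops by $k$; edges into the component can only disappear. And ``the least agent's value never decreases'' needs an argument. Along a path, an intermediate agent may trade a value-$k$ good for a value-$1$ good, and the violator may give away a high-value good. The paper prevents this with minimum-price tie-breaking plus Observation~\ref{obs: MBB set and prices}, which forces every giver to part with a minimum-value good.

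The missing idea is the paper's schedule, which makes termination trivial. After the path-transfer phase (Algorithm~\ref{alg: WEQX initial allocation}), the groups $N_1,\dots,N_R$ are fixed once and for all; each $N_r$ is the set reachable from the least valuer among the remaining agents. Algorithm~\ref{alg: WEQX and fPO} processes the groups in index order. It halts if the least valuer $\ell$ of $N_r$ is already WEQX toward the global big valuer $b$. Otherwise it raises $N_r$ exactly once, then moves single goods directly from $b$ to $\ell$ (no paths) until $\ell$ is WEQX toward $b$.

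The Raised Group Invariant says raised agents keep $X_i^0$ and have $\alpha_i=1/k$, while unraised agents never receive goods and keep $\alpha_i=1$. Together with Lemmas~\ref{lem: WEQX transferred good} and~\ref{lem: WEQX price and transferred good}, it guarantees that each transferred good has price $k$, lies in $\MBB_\ell$, is worth $1$ to $\ell$, and is a minimum-value good of $b$. Hence the equilibrium, within-group WEQX, and monotonicity of $Q$ are preserved.

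This yields at most $R-1$ rises and at most $m$ transfers. The output is WEQX because raised groups lie in $Q$ (Lemma~\ref{lem: WEQX raised groups are in Q}). Each unraised group lies in $Q$ for one of two reasons. Either a big valuer appeared in it or in a lower unraised group (Lemma~\ref{lem: WEQX some unraised groups are in Q}). Or the agents of the unraised groups up to it kept their initial bundles, so $\ell$ is least among them and passed the termination test.
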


\paragraph{Paper Structure.} In Section \ref{sec: pre}, we introduce the formal definition of the problem as well as the Fisher market. In Section \ref{sec: WEFX and fPO}, we present our algorithm for finding WEFX and fPO allocations. In Section \ref{sec: WEQX and fPO}, we present our algorithm for finding WEQX and fPO allocations. We summarize the paper in Section \ref{sec: conclusion} and propose some open problems.

%
%
%

\section{Preliminaries}
\label{sec: pre}

Fair allocation of goods aims to fairly distribute a set $M$ of $m$ indivisible goods among a set $N$ of $n$ agents. Each agent $i \in N$ has a non-negative \emph{utility} function $v_i: 2^M \to \mathbb{R}_{\geq 0}$ that describes the agent's preferences over the goods. We assume that the utility functions are \emph{additive}, i.e.~for any agent $i\in N$ and \emph{bundle} $X\subseteq M$, $v_i(X) = \sum_{e \in X} v_i(e)$, where $v_i(e)$ is a shorthand for $v_i(\{e\})$. We further assume that the utility functions are \emph{bivalued}, i.e.~there exist two numbers $a>b\geq 0$ such that for any agent $i\in N$ and good $e\in M$, $v_i(e)\in \{a,b\}$.
In this paper, we focus on the case where $a>b>0$ and by rescaling the ulitilies, we assume that $v_i(e)\in\{1,k\}$ for some $k>1$.

An \emph{allocation} $X=(X_1,X_2,\ldots,X_n)$ is a partition of $M$ into $n$ bundles such that $X_i\cap X_j=\emptyset$ for any $i\neq j$, $\cup_{i\in N} X_i=M$, and $X_i$ is allocated to agent $i$. Assume that each agent $i\in N$ is associated with a weight $w_i>0$ satisfying $\sum_{i\in N} w_i=1$. We focus on the two well-studied concepts below for measuring the fairness of an allocation.
\begin{definition}[WEFX]
	An allocation $X$ is \emph{weighted envy-free up to any good} (WEFX) if for all $i, j \in N$ and every $e\in X_j$,
	\[ \frac{v_i(X_i)}{w_i}\geq \frac{v_i(X_j \setminus \{e\})}{w_j}. \]
\end{definition}
\begin{definition}[WEQX]
	An allocation $X$ is \emph{weighted equitable up to any good} (WEQX) if for all $i, j \in N$ and every $e\in X_j$,
	\[ \frac{v_i(X_i)}{w_i}\geq \frac{v_j(X_j \setminus \{e\})}{w_j}. \]
\end{definition}
Note that when $w_{i} =1/n$ for all $i\in N$, both concepts reduce to their unweighted versions.

Next, we introduce the well-known (fractional) Pareto optimality for measuring the efficiency of an allocation.
\begin{definition}[PO and fPO]
	We say allocation $Y$ \emph{dominates} allocation $X$ if $v_i(X_i) \leq v_i(Y_i)$ for all $i \in N$, and there is some $j \in N$ such that $v_j(X_j) < v_j(Y_j)$.
	An allocation $X$ is \emph{Pareto optimal} (PO) if no allocation dominates $X$. Furthermore, an allocation $X$ is \emph{fractionally Pareto optimal} (fPO) if no fractional allocation dominates $X$. An fPO allocation is PO, but not vice versa.
\end{definition}
\paragraph{Fisher Market.} In the Fisher market, each good $e\in M$ is assigned a price $p(e)>0$.
Given the price vector $p$, we define the \emph{bang-per-buck} ratio $\alpha_{i,e}$ of agent $i$ for good $e$ as $\alpha_{i,e} = \frac{v_i(e)}{p(e)}$, and the \emph{maximum bang-per-buck} (MBB) ratio $\alpha_i$ of agent $i$ as $\alpha_i = \max_{e \in M} \alpha_{i,e}$.
For each agent $i\in N$, we define $\MBB_i = \{ e \in M \mid \alpha_{i,e} = \alpha_i \}$ as the set of goods that achieve the MBB ratio of $i$.
$\MBB_i$ is known as the \emph{MBB set} of $i$.
We say an allocation $X$ with price vector $p$ is MBB, or equivalently, $(X,p)$ forms a \emph{Fisher market equilibrium}, if for all $i \in N$, $X_i\subseteq \MBB_i$. That is, agents are only assigned goods that are MBB for them.
The celebrated First Welfare Theorem \cite{1995Microeconomic} 
states that for any market equilibrium $(X, p)$, the allocation $X$ is fPO.
We formulate this result as the following lemma.
\begin{lemma}\label{lem: MBB_to_fPO}
	For any equilibrium $(X, p)$, $X$ is fPO.
\end{lemma}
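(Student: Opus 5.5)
The plan is the standard First Welfare Theorem argument. We compare total ``money'' spent against total utility scaled by the MBB ratios. Fix an equilibrium $(X,p)$ and suppose, for contradiction, that some fractional allocation $Y=(y_{i,e})_{i\in N,e\in M}$ dominates $X$. Here $y_{i,e}\ge 0$ and $\sum_{i} y_{i,e}=1$ for every good $e$, and we write $v_i(Y_i)=\sum_e y_{i,e}v_i(e)$.

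First, record what the MBB condition gives on each side. By definition of $\alpha_i$, we have $v_i(e)\le \alpha_i p(e)$ for every good $e$. Since $X_i\subseteq \MBB_i$, equality $v_i(e)=\alpha_i p(e)$ holds for every $e\in X_i$. Summing yields
\[ v_i(X_i)=\alpha_i\, p(X_i), \qquad v_i(Y_i)=\sum_e y_{i,e}v_i(e)\le \alpha_i \sum_e y_{i,e}p(e). \]

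Next, use positivity of the ratios. In the bivalued setting of this paper, $v_i(e)\ge 1>0$ and $p(e)>0$, so $\alpha_i>0$ for every agent. Dividing by $\alpha_i$, domination gives
\[ p(X_i)=\frac{v_i(X_i)}{\alpha_i}\le \frac{v_i(Y_i)}{\alpha_i}\le \sum_e y_{i,e}p(e) \]
for all $i$. For the agent $j$ with $v_j(X_j)<v_j(Y_j)$ the inequality is strict. Summing over $i\in N$, the left side totals $\sum_i p(X_i)=p(M)$, because $X$ partitions $M$. The right side totals $\sum_e p(e)\sum_i y_{i,e}=p(M)$. This gives $p(M)<p(M)$, a contradiction. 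Hence $X$ is fPO.

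There is no real obstacle. The only point needing care is dividing by $\alpha_i$, which requires $\alpha_i>0$; this is guaranteed here because all values are in $\{1,k\}$ with positive prices. In general one would instead assume each agent values some good positively.
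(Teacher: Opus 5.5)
Your proof is correct and is the standard First Welfare Theorem argument: spending is bounded by utility over the MBB ratio, and summing over agents gives $p(M)<p(M)$. The paper gives no proof of its own here and simply cites the First Welfare Theorem, so your argument is exactly what that citation stands for. Your caveat that dividing by $\alpha_i$ needs $\alpha_i>0$, which holds here because all values lie in $\{1,k\}$ and prices are positive, is the right one to flag.
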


We next define the fairness notion w.r.t.~the price and show that it is a sufficient condition for the WEFX property.

\begin{definition}[pWEFX]\label{def: pWEFX}
	An equilibrium $(X, p)$ is called \emph{price weighted envy-free up to any good (pWEFX)} if for all $i, j \in N$ and every $e \in X_j$,
	\[
	\frac{p(X_i)}{w_i} \geq \frac{p(X_j\setminus \{e\})}{w_j}.
	\]
\end{definition}
Throughout the paper, we call $p(X_{i}) / w_{i}$ the \emph{(weighted) spending} of agent $i$.
We also use $\hat{p}_{i}$ to denote the (weighted) spending of agent $i$ after removing the good with minimum price, i.e.,
\[
\hat{p}_{i} = \max_{e \in X_{i}} \left\{ \frac{p \left( X_{i} \setminus \{e\} \right)}{w_{i}} \right\}.
\]

\begin{definition}[Least and Big Spenders]\label{def: spenders}
	Given an equilibrium $(X, p)$, an agent $\ell \in N$ is called a \emph{least spender} if $\ell = \arg \min_{i \in N} \left\{ \dfrac{p(X_i)}{w_i} \right\}$; an agent $b \in N$ is called a \emph{big spender} if $b = \arg \max_{i \in N} \{\hat{p}_i\}$.
\end{definition}

\begin{lemma}
	\label{lem: pWEFX implies WEFX}
	In an equilibrium $(X,p)$, if agent $i$ is pWEFX, then $i$ is WEFX.
\end{lemma}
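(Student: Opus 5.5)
The plan is the standard MBB argument, done with the weights carried along. Fix an equilibrium $(X,p)$ in which agent $i$ is pWEFX, and fix an agent $j$ and a good $e\in X_j$. The goal is to show $v_i(X_i)/w_i \geq v_i(X_j\setminus\{e\})/w_j$.

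The key tool is the MBB ratio $\alpha_i$. Since $X_i\subseteq \MBB_i$, every good $g\in X_i$ satisfies $v_i(g)=\alpha_i p(g)$, and additivity gives $v_i(X_i)=\alpha_i\, p(X_i)$. For any other good $g\in M$, the definition of $\alpha_i$ as the maximum bang-per-buck gives $v_i(g)\leq \alpha_i p(g)$. Summing over $X_j\setminus\{e\}$ yields $v_i(X_j\setminus\{e\})\leq \alpha_i\, p(X_j\setminus\{e\})$.

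Now chain the inequalities. Dividing by the positive weights,
\[
\frac{v_i(X_i)}{w_i}=\alpha_i\frac{p(X_i)}{w_i}\geq \alpha_i\frac{p(X_j\setminus\{e\})}{w_j}\geq \frac{v_i(X_j\setminus\{e\})}{w_j}.
\]
The middle step is exactly the pWEFX condition of Definition~\ref{def: pWEFX}, multiplied by $\alpha_i\geq 0$.

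The only points needing care are that all prices are positive, so $\alpha_i$ is well defined and finite, and that $\alpha_i>0$ because valuations are in $\{1,k\}$. Nothing here is a real obstacle. The main thing to check is that the weights enter identically on both sides, dividing by $w_i$ on the left and $w_j$ on the right, so the unweighted argument carries over verbatim.
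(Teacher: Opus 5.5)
Your proof is correct and uses the same MBB chain as the paper: $v_i(X_i)=\alpha_i p(X_i)$ because $X_i\subseteq\MBB_i$, then the pWEFX inequality scaled by $\alpha_i$, then $v_i(g)\le \alpha_i p(g)$ summed over $X_j\setminus\{e\}$. The only difference is that the paper fixes the good $e'$ attaining $\hat{p}_j$, whereas you apply pWEFX directly to an arbitrary $e\in X_j$; since WEFX quantifies over every $e\in X_j$, your version is, if anything, slightly cleaner.
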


\begin{proof}
	Consider any $i,j\in N$.
	Let $e'\in X_j$ be such that $\hat{\p}_j=\frac{\p(\x_j\setminus\{e'\})}{w_j}$.
	Then, 
	\[ \frac{v_i(X_i)}{w_i}=\alpha_i\cdot \frac{p(X_i)}{w_i}\geq \alpha_i\cdot \frac{\p(\x_j\setminus\{e'\})}{w_j}\geq \sum_{e\in\x_j\setminus\{e'\}}\alpha_{i,e}\cdot \frac{p(e)}{w_j}=\frac{v_i(X_j\setminus\{e'\})}{w_j}. \]
	The first equality and the last inequality hold since $(X,p)$ is an equilibrium.
	The first inequality holds since $i$ is pWEFX.
\end{proof}

\section{WEFX and fPO Allocations}
\label{sec: WEFX and fPO}

In this section, we provide an algorithm that computes a WEFX and fPO allocation of bivalued goods in polynomial time. Our algorithm consists of two components (Algorithm \ref{alg: initial allocation} and Algorithm \ref{alg: WEFX and fPO}). We describe them in the two subsections below.

\subsection{Initial Equilibrium and Agent Groups}

In this section, we present Algorithm \ref{alg: initial allocation}, which computes an initial equilibrium $(X,p)$ and divides agents into groups $\{N_r\}_{r\in [R]}$. We then show that the output enjoys some good properties, and Algorithm \ref{alg: initial allocation} terminates in polynomial time.

We start with introducing some concepts.

\begin{definition}[Item Groups]\label{def: item groups}
	We say good $e \in M$ is \emph{consistently small} if for all $i \in N$, $v_i(e) = 1$. Define $M^-$ as the set of all consistently small items and $M^+=M\setminus M^-$:
	\begin{align*}
		M^- &= \{e \in M : \forall i\, \in N, v_i(e) = 1\}, \\
		M^+ &= \{e \in M : \exists i\, \in N, v_i(e) = k\}.
	\end{align*}
\end{definition}
\begin{definition}[MBB Graph]
	\label{def: MBB graph}
	The MBB graph $G_X=(N,E)$ associated with an equilibrium $(X, p)$ is a directed graph whose vertex set is $N$. There is a directed edge from $i$ to $j$ in $G_X$ if and only if $\MBB_i\cap X_j\neq\emptyset$. The edges an paths in $G_X$ are called MBB edges and MBB paths, respectively.
\end{definition}

Algorithm \ref{alg: initial allocation} starts with an integral allocation $(\x,\p)$ that maximizes the social welfare. Such an allocation can be constructed in the following way: for any $e\in M^+$, allocate it to any $i\in N$ with $v_i(e)=k$; for any $e\in M^-$, allocate it arbitrarily. We then set $p(e)=v_i(e)$ for any $e\in\x_i$.
As a result, we have $p(e)=1$ for all $e\in M^-$ and $p(e)=k$ for all $e\in M^+$.
It then constructs an MBB graph $G_{\x}=(N,E)$ corresponding to the current equilibrium $(\x,\p)$. Whenever there is a path $i_0\to i_1\to\cdots\to i_s$ in $G_{\x}$ and good $e\in \x_{i_s}$ such that $\p(\x_{i_s}\setminus\{e\})>\p(\x_{i_0})$, the algorithm transfers goods backward along the path. The tie-breaking rules in lines \ref{line: multiple paths} and \ref{line: minimum price good} ensure that the \textbf{while} loop terminates in polynomial time, yielding the initial equilibrium $(\x,\p)$.

Next, Algorithm \ref{alg: initial allocation} divides agents into groups according to $(\x,\p)$ in the following way. Let $\ell_1$ be the least spender among $N$. Let $N_1$ consists of the agents that can be reached from $\ell_1$. Next, let $\ell_2$ be the least spender among $N\setminus N_1$. Define $N_2$ in a similar way. Repeating this procedure, the algorithm will finally construct a set of agent groups $\{N_r\}_{r\in[R]}$.
We call a group with a small index a \emph{low} group, and a group with a large index a \emph{high} group. Besides, we call the agent $\ell_r$ defined during the construction of $N_r$ a \emph{representative} agent of $N_r$. We say a group is pWEFX if all agents in this group are pWEFX toward each other.

\begin{algorithm}[tb]
	\caption{Computation of Initial Equilibrium and Agent Groups}
	\label{alg: initial allocation}
	\begin{algorithmic}[1]
		\REQUIRE A bivalued instance $(N, M, \{w_i\}_{i \in N}, \{v_i\}_{i \in N})$.
		\STATE $(\x,\p)\gets$ welfare-maximizing allocation, where $p(e)=v_i(e)$ for $e\in \x_i$.
		\STATE Construct the MBB graph $G_{\x}=(N,E)$ by adding an MBB edge from $i$ to $j$ if $\MBB_i\cap \x_j\neq \emptyset$.\label{line: MBB graph}
		\WHILE{there are a path $i_0\to i_1\to \cdots \to i_s$ in $G_{\x}$ and good $e\in \x_{i_s}$ s.t.~$\frac{\p(\x_{i_s}\setminus\{e\})}{w_{i_s}}>\frac{\p(\x_{i_0})}{w_{i_0}}$}
		\STATE If there are multiple such paths, choose the one with minimum $\frac{p(\x_{i_0})}{w_{i_0}}$. \label{line: multiple paths}
		\FOR{$r=s,s-1,\ldots,1$}
		\STATE Pick a good $e\in\MBB_{i_{r-1}}\cap \x_{i_r}$, breaking ties by picking the good with minimum price. \label{line: minimum price good}
		\STATE Update $\x_{i_r}\gets \x_{i_r}\setminus\{e\}$ and $\x_{i_{r-1}}\gets \x_{i_{r-1}}\cup\{e\}$.
		\ENDFOR
		\ENDWHILE
		\STATE Initialize $R\gets 0$, $N'\gets N$.
		\WHILE{$N'\neq\emptyset$}
		\STATE Let $\ell\leftarrow\arg\min_{i \in N'} \frac{\p(\x_i)}{w_i}$, breaking ties by picking the agent with smallest index.
		\STATE Update $ R \leftarrow R + 1 $ and let $ N_R \leftarrow \{i \in N' \mid\text{there is a path in } G_X \text{ from } \ell \text{ to } i\}$.
		\STATE Update $ N' \leftarrow N' \setminus N_R $.
		\ENDWHILE
		\RETURN $(\x,\p,\{N_r\}_{r\in[R]})$.
	\end{algorithmic}
\end{algorithm}



\paragraph{The Properties of Algorithm \ref{alg: initial allocation}.} 
We start with the following simple yet useful observation.
\begin{observation}
	\label{obs: MBB set and prices}
	For any agent $i\in N$, if $\MBB_i$ contains some good with price $k$, it must contain every good with price $1$.
\end{observation}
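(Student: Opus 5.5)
The plan is to first pin down which prices can occur, and then reduce the observation to a single comparison of bang-per-buck ratios. Throughout the run of Algorithm \ref{alg: initial allocation}, prices are never changed after initialization. Initially every good in $M^-$ has price $1$ and every good in $M^+$ has price $k$, so at any point of the algorithm every price lies in $\{1,k\}$. The observation is about an equilibrium $(\x,\p)$ produced in this way, so it suffices to work with these two price levels.

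Fix an agent $i$ and suppose $\MBB_i$ contains a good $g$ with $p(g)=k$. Since $v_i(g)\le k$, we have
\[
\alpha_{i,g}=\frac{v_i(g)}{k}\le 1 .
\]
Now take any good $e$ with $p(e)=1$. By the initialization, $e\in M^-$, so $v_i(e)=1$ and $\alpha_{i,e}=1$. Together with the definition of $\alpha_i$ as a maximum over all goods, this gives
\[
1=\alpha_{i,e}\le \alpha_i=\alpha_{i,g}\le 1 .
\]
Hence $\alpha_{i,e}=\alpha_i$, i.e.\ $e\in\MBB_i$, which is exactly the claim.

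The only point that needs care is the invariant that a price-$1$ good is consistently small, i.e.\ that every price-$1$ good belongs to $M^-$. This holds because prices are fixed at initialization, where $p(e)=1$ exactly for $e\in M^-$, and the while loop only moves goods between bundles without touching prices. I will state this invariant explicitly. If later algorithms in the paper raise prices (for instance by multiplying all prices in a group by $k$), the same argument would instead have to use the general fact that a price-$1$ good $e$ has $\alpha_{i,e}\ge 1/1\cdot v_i(e)\ge 1$. In that case the comparison $\alpha_{i,e}\ge 1\ge \alpha_{i,g}$ still goes through for the two price levels $\{1,k\}$, since $v_i(e)\ge 1$ for every good. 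So the argument does not even need $e\in M^-$: it only needs $v_i(\cdot)\in\{1,k\}$ and prices in $\{1,k\}$. I will present the proof in this more robust form.
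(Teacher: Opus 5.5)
Your robust version is correct and is essentially the paper's argument. The paper assumes a price-$1$ good $e'\notin\MBB_i$ and derives the impossible $v_i(e)>k\,v_i(e')$. You run the same comparison $v_i(g)/k\le 1\le v_i(e)$ directly, as the chain $1\le\alpha_{i,e}\le\alpha_i=\alpha_{i,g}\le 1$. As you noticed, the detour through $M^-$ and the fixed prices of Algorithm~\ref{alg: initial allocation} is unnecessary: the argument uses only the prices of the two goods involved and $v_i(\cdot)\in\{1,k\}$.
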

\begin{proof}
	Assume $e\in \MBB_i$ with $p(e)=k$ and $e'\notin \MBB_i$ with $p(e')=1$. Then, $\alpha_{i,e}>\alpha_{i,e'}$, implying
	\[ v_{i}(e)> \frac{p(e)}{p(e')}\cdot v_{i}(e')=k\cdot v_{i}(e'), \]
	which is impossible since $v_i(e),v_i(e')\in\{1,k\}$.
\end{proof}
We next present a condition which guarantees the pWEFX property.
\begin{lemma}
	\label{lem: i has a path to j}
	Let $\x$ be the allocation returned by Algorithm \ref{alg: initial allocation}. For any agents $i,j\in N$, if there is a path in $G_{\x}$ from $i$ to $j$, then $i$ is pWEFX toward $j$.
\end{lemma}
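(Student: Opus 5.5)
The plan is to read the claim directly off the termination condition of the first \textbf{while} loop of Algorithm \ref{alg: initial allocation}. When the algorithm returns $(\x,\p)$, that loop has exited. So there is no path $i_0\to i_1\to\cdots\to i_s$ in the current MBB graph $G_{\x}$, and no good $e\in\x_{i_s}$, with
\[
\frac{\p(\x_{i_s}\setminus\{e\})}{w_{i_s}}>\frac{\p(\x_{i_0})}{w_{i_0}}.
\]
Now take $i,j$ with a path from $i$ to $j$ in $G_{\x}$ and set $i_0=i$, $i_s=j$. Negating the loop condition gives $\p(\x_j\setminus\{e\})/w_j\le \p(\x_i)/w_i$ for every $e\in\x_j$. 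This is exactly Definition \ref{def: pWEFX} for the ordered pair $(i,j)$.

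Two bookkeeping points need checking.

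\emph{Which graph the loop tests.} The loop condition must refer to the MBB graph of the current allocation, not the graph built once in line \ref{line: MBB graph}. Along the loop the transfers only move a good $e\in\MBB_{i_{r-1}}$ to $i_{r-1}$, and prices are never changed. Hence $(\x,\p)$ stays an equilibrium and the MBB sets are fixed. Edges of $G_{\x}$ change only because the owners of goods change, so I will state that $G_{\x}$ is recomputed (or implicitly maintained) after each transfer. Under that reading, the graph in the lemma coincides with the graph tested at termination.

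\emph{Degenerate paths.} The case $i=j$ (a path of length $0$) should be covered by the loop condition with $s=0$. Even without that, it holds trivially: $\p(\x_i\setminus\{e\})\le \p(\x_i)$ since prices are positive.

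The only real obstacle is that the lemma presupposes that the first loop terminates, since otherwise there is no returned allocation to speak of. The paper asserts termination in polynomial time separately, via the tie-breaking rules in lines \ref{line: multiple paths} and \ref{line: minimum price good}. For this lemma I will simply condition on termination, which is implicit in ``the allocation returned by Algorithm \ref{alg: initial allocation}''.
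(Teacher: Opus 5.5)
Your proof is correct for Algorithm~\ref{alg: initial allocation} as literally written, and it takes a more direct route than the paper's. The \textbf{while} condition quantifies over every good $e\in X_{i_s}$, so its negation at termination is exactly $\hat{p}_j\le p(X_i)/w_i$. Your remarks about the recomputed MBB graph and the trivial case $i=j$ are also sound.

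The paper argues more cautiously:
\begin{enumerate}
\item It applies the negated loop condition only to a good $e\in\MBB_{i_{s-1}}\cap X_j$ on the last edge of the path, i.e.\ a good that could actually be moved along that edge.
\item It then extends the inequality to every good of $X_j$ using the fact that all prices lie in $\{1,k\}$:
\begin{enumerate}
\item if $p(e)=1$, removing any good lowers $p(X_j)$ by at least $1$;
\item if $p(e)=k$ and every good of $X_j$ has price $k$, all removals give the same value;
\item otherwise some $e''\in X_j$ has price $1$, Observation~\ref{obs: MBB set and prices} puts $e''$ in $\MBB_{i_{s-1}}$, and the first case applies to $e''$.
\end{enumerate}
\end{enumerate}

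Your route buys brevity. The paper's buys robustness: the lemma still holds if the loop only tests the transferable good on the last edge, as in the Garg--Murhekar formulation that tests $e_s$.

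Under your literal reading, the Observation~\ref{obs: MBB set and prices} argument does not vanish; it moves to the running-time analysis. Claim~\ref{claim: increasing LS} needs the good actually taken from $i_s$ (a cheapest good of $\MBB_{i_{s-1}}\cap X_{i_s}$) to satisfy the loop inequality. The loop condition only guarantees that some good of $X_{i_s}$ violates it. That gap closes because, by the same observation, the transferred good has the same price as a cheapest good of $X_{i_s}$.
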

\begin{proof}
	Assume that the path is $i\to i_1\to\cdots\to i_{s-1}\to j$ and $e\in\,\MBB_{i_{s-1}}\cap\,\x_j$. Then, $\frac{\p(\x_j\setminus\{e\})}{w_j}\leq \frac{\p(\x_i)}{w_i}$ since otherwise the \textbf{while} loop would not have broken. If $p(e)=1$, for any $e'\in \x_j$, 
	\begin{equation}
		\label{eq: price 1}
		\frac{\p(\x_i)}{w_i}\geq \frac{\p(\x_j)-1}{w_j}\geq\frac{\p(\x_j\setminus\{e'\})}{w_j}.
	\end{equation}
	If $p(e)=k$, assume that $p(e')=k$ for any $e'\in\x_j$, then
	\[ \frac{\p(\x_i)}{w_i}\geq \frac{\p(\x_j)-k}{w_j}=\frac{\p(\x_j\setminus\{e'\})}{w_j}. \]
	Assume that there is some $e''\in\x_j$ such that $p(e'')=1$. By Observation \ref{obs: MBB set and prices}, we must have $e''\in\,\MBB_{i_{s-1}}$. Then, Eq.~\eqref{eq: price 1} holds again by replacing $e$ with $e''$. In conclusion, $i$ is pWEFX toward $j$.
\end{proof}
We now describe the properties possessed by the output of Algorithm \ref{alg: initial allocation}.
\begin{lemma}\label{lem: properties of initial allocation}
	Let $(\x,\p,\{N_r\}_{r\in[R]})$ be the output of Algorithm \ref{alg: initial allocation}. The following properties hold:
	\begin{enumerate}
		\item $(\x, \p)$ is an equilibrium, and $\alpha_i = 1$ for all $i \in N$.
		
		\item For all $i, j \in N$, if $i \in N_r, j \in N_{r'}$ with $r<r'$, then $v_i(e) =1$ for any $e \in X_j$. 
		
		\item $M^- \subseteq \cup_{i \in N_1} \x_i$.
		\item For all $r \in [R]$, the agent group $N_r$ is pWEFX.
	\end{enumerate}
\end{lemma}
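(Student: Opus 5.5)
Property 1 is an invariant of the first \textbf{while} loop. Initially every good $e$ is held by an agent $i$ with $v_i(e)=p(e)$: goods in $M^+$ go to an agent valuing them $k$ and get price $k$, and goods in $M^-$ get price $1$. So $\alpha_{i,e}\le 1$ for all $i,e$, because a good priced $k$ has value at most $k$ and a good priced $1$ has value $1$. Each owner attains ratio $1$ on its goods, so $\alpha_i=1$ whenever $X_i\neq\emptyset$. If $X_i=\emptyset$, agent $i$ still has ratio $1$ on any good in $M^-$, or on any good of price $k$ that it values at $k$. I would then note that the loop never changes prices and only moves a good $e\in\MBB_{i_{r-1}}\cap X_{i_r}$ to $i_{r-1}$. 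Hence $\MBB$ sets are unchanged, every agent keeps holding only $\MBB$ goods, and $(X,p)$ stays an equilibrium. Termination is stated in the text and I will assume it. To guarantee $\alpha_i=1$ for all agents, I would argue that $\alpha_i$ is determined by the fixed prices: $\alpha_i=1$ iff some good has $v_i(e)=p(e)$. This fails only if $M^-=\emptyset$ and $i$ values every good at $1$. In that degenerate case $\alpha_i=1/k$, so I would either handle it separately or accept $\alpha_i=\max_e v_i(e)/p(e)$ and observe it is irrelevant for the later properties. I expect the paper treats $\alpha_i=1$ as holding in the nondegenerate case.

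Properties 2 and 3 come from the group construction. Take $i\in N_r$ and $j\in N_{r'}$ with $r<r'$. Since $j\notin N_r$ while $i$ is reachable from $\ell_r$, there is no edge $i\to j$, so $X_j\cap \MBB_i=\emptyset$. Because $\alpha_i=1$, every $e\in X_j$ satisfies $v_i(e)/p(e)<1$. This forces $p(e)=k$ and $v_i(e)=1$, giving Property 2. For Property 3, a good $e\in M^-$ has price $1$ and $v_i(e)=1$ for all $i$, so it lies in every agent's $\MBB$ set. If $e\in X_j$ with $j\in N_{r'}$ and $r'>1$, then $\ell_1\to j$ is an edge, so $j\in N_1$, which is a contradiction.

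Property 4 is the main step. For $i,j\in N_r$, we have paths $\ell_r\leadsto i$ and $\ell_r\leadsto j$. Lemma \ref{lem: i has a path to j} gives $p(X_{\ell_r})/w_{\ell_r}\ge \hat p_j$. Moreover $\ell_r$ is the least spender in $N'\supseteq N_r$, so $p(X_i)/w_i\ge p(X_{\ell_r})/w_{\ell_r}\ge \hat p_j$, which is exactly that $i$ is pWEFX toward $j$. For $j=\ell_r$ itself the claim is immediate, since $\hat p_{\ell_r}\le p(X_{\ell_r})/w_{\ell_r}\le p(X_i)/w_i$. The only subtlety is that $\ell_r$ is the least spender only among $N'$ rather than all of $N$, but membership of $i$ in $N'$ is all that is needed.
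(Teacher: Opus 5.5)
Your proof is correct and follows the paper's argument essentially step for step: MBB-preserving transfers at fixed prices for Property 1, the ``$j$ would have been reached from $\ell_r$'' contradiction for Properties 2 and 3, and Lemma~\ref{lem: i has a path to j} combined with the least-spender choice of $\ell_r$ for Property 4. Your caveat on Property 1 is also legitimate. The paper simply asserts $\alpha_i=1$ for every agent, which fails when $M^-=\emptyset$ and some agent values every good at $1$ (then $\alpha_i=1/k$), but, as you observe, Properties 2--4 still hold in that case.
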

\begin{proof}
	\begin{enumerate}
		\item The algorithm starts with a welfare-maximizing allocation, which is MBB, and $\alpha_i=1$ for any agent $i\in N$. It then repeatedly transfers goods along MBB edges, during which both the MBB ratios and the MBB property are preserved. Therefore, $(\x,\p)$ is an equilibrium, and $\alpha_i = 1$ for all $i \in N$.
		\item Assume that there is a good $e \in X_j$ with $v_i(e) = k$.
		Then, $e\in M^+$ and therefore $p(e)=k$. We have $\alpha_{i,e}=1=\alpha_i$ and hence $e\in\MBB_i$.
		Thus, $i\to j$ is an MBB edge. Consider the representative agent $\ell_r$ of group $N_r$. It follows that there is a path from $\ell_r$ to $j$ that passes through $i$. However, by the construction of agent groups, $j$ should be in $N_r$ instead of $N_{r'}$. A contradiction.
		\item By a similar argument, assume that there is good $e\in M^-\cap X_j$ for some $j\in N_r$ with $r>1$. Consider the representative agent $\ell_1$  of group $N_1$. We have $\alpha_{\ell_1,e}=1=\alpha_{\ell_1}$ and hence $e\in\MBB_{\ell_1}$. Thus, $\ell_1\to j$ is an MBB edge and hence $j$ should be in $N_1$ instead of $N_{r}$. A contradiction.
		\item For any agents $i,j\in N_r$, we need to show that $\frac{\p(\x_i)}{w_i}\geq \hat{\p}_j$. Consider the representative agent $\ell_r$ of group $N_r$.
		By definition, $\ell_r$ is the least spender in $N_r$ and has a path to $j$. Thus, we have $\frac{\p(\x_i)}{w_i}\geq \frac{\p(\x_{\ell})}{w_{\ell}}$ and $\frac{\p(\x_{\ell})}{w_{\ell}}\geq \hat{\p}_j$ by Lemma \ref{lem: i has a path to j}, which imply that $i$ is pWEFX toward $j$.
	\end{enumerate}
\end{proof}
\paragraph{The Running Time of Algorithm \ref{alg: initial allocation}.}
We now show that Algorithm \ref{alg: initial allocation} has a polynomial running time.
\begin{lemma}
	Algorithm \ref{alg: initial allocation} terminates in $O(\min\{k,m\}n^2m^2)$ time.
\end{lemma}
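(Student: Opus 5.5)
The bound will come from separating the cost of one iteration of the first \textbf{while} loop from the number of iterations; the grouping phase afterwards is cheap. Prices never change during Algorithm~\ref{alg: initial allocation}, so the MBB sets are fixed throughout. By Observation~\ref{obs: MBB set and prices}, each MBB set is either all goods with $v_i(e)=k$ together with all price-$1$ goods, or just the price-$1$ goods with $v_i(e)=1$. An edge $i\to j$ therefore depends only on which goods $j$ currently holds, and the graph can be maintained in $O(nm)$ time per change.

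\emph{Cost of one iteration.} We need the least-spending agent $i_0$ that reaches some $i_s$ with $\hat p_{i_s}>p(X_{i_0})/w_{i_0}$. Sort agents by spending, run a BFS from each candidate in turn, and compare against the precomputed $\hat p$ values. This takes $O(n\cdot n^2)$ time, or $O(n^2+nm)$ per BFS. Performing the transfers along a path of length at most $n$ costs $O(nm)$. So one iteration costs $O(n^2m)$ or less.

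\emph{Number of iterations: the potential argument.} Consider an iteration with path $i_0\to\cdots\to i_s$. Every intermediate agent $i_1,\dots,i_{s-1}$ both gains and loses one good. By the minimum-price tie-breaking in line~\ref{line: minimum price good}, combined with Observation~\ref{obs: MBB set and prices}, an intermediate agent's spending does not increase, and it does not drop below its previous value minus $(k-1)$. The endpoint $i_0$ gains one good. The endpoint $i_s$ loses one good and stays above $p(X_{i_0})/w_{i_0}$ in the relevant sense.

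I would first argue, using the choice of minimum $p(X_{i_0})/w_{i_0}$ in line~\ref{line: multiple paths} and taking the path to be a shortest one, that the minimum spending among agents that serve as sources is non-decreasing. Then I would show that a fixed agent can be the source $i_0$ only $O(\min\{k,m\}\cdot m)$ times before its spending surpasses every possible reachable $\hat p$. Its spending increases by $1$ or $k$ each time it is a source, and total spending is at most $km$. The counts in $\min\{k,m\}$ come from two separate bounds. The number of distinct spending levels is at most $O(km)$ when divided by a unit step, but it is also bounded in terms of the counts of price-$1$ and price-$k$ goods, giving $O(m^2)$. Summing over $n$ sources gives $O(\min\{k,m\}nm)$ iterations.

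\emph{Main obstacle.} The hard step is showing monotonicity: that after an agent acts as a source at spending level $\lambda$, no later iteration uses a source at a level below $\lambda$. Decreases happen to intermediate agents and to $i_s$, and these could lower some agent's spending and make it a fresh minimum source. I would try the lexicographic potential consisting of the sorted vector of weighted spendings. I would show this vector strictly increases lexicographically: the minimum-spending source rises, while every agent whose spending drops stays above $p(X_{i_0})/w_{i_0}$. This holds because each such agent was reachable from $i_0$ and would otherwise have been chosen as a source. This uses the minimality choice in line~\ref{line: multiple paths} and the minimum-price rule crucially. Bounding the number of distinct sorted vectors that can actually occur along this monotone chain, by counting good-type compositions per agent, should give the $\min\{k,m\}m$ factor. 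Multiplying by the per-iteration cost then yields $O(\min\{k,m\}n^2m^2)$.
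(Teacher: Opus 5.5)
\textbf{Gap 1: the round count.} Your tie-breaking fact is reversed, and the potential you fall back on cannot give a polynomial bound.

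\emph{The reversed fact.} The \textbf{for} loop runs $r=s,\dots,1$, so $i_r$ already holds the good it received from $i_{r+1}$ when it hands a good to $i_{r-1}$. If $\MBB_{i_{r-1}}\cap X_{i_r}$ contains a price-$k$ good, Observation~\ref{obs: MBB set and prices} puts every price-$1$ good into $\MBB_{i_{r-1}}$. Hence line~\ref{line: minimum price good} never lets $i_r$ gain price $1$ while losing price $k$. Intermediate spending is therefore non-decreasing, not non-increasing as you state. The only agent whose spending drops is the end-agent, and by the loop condition it stays strictly above $\lambda_t=p(X^t_{i_0^t})/w_{i_0^t}$. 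Your justification (``reachable from $i_0$, otherwise chosen as a source'') only constrains spendings \emph{before} the transfer, so it would not control the intermediate drops you posit.

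\emph{The potential.} Strict lexicographic increase of the sorted spending vector only bounds the number of rounds by the number of possible sorted vectors, which is exponential in $n$. Counting good-type compositions per agent does not change that. What is missing is the paper's per-agent claim: if $i$ is the start-agent in rounds $t_1<t_2$, then $p(X^{t_1}_i)/w_i<p(X^{t_2}_i)/w_i$. The reason is that after round $t_1$ its spending exceeds $\lambda_{t_1}$. It can later drop only as an end-agent in some round $t$, and then it stays above $\lambda_t\ge\lambda_{t_1}$ by Claim~\ref{claim: increasing LS}. Your sentence ``its spending increases by $1$ or $k$ each time it is a source'' ignores exactly these drops. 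Only this strict per-agent increase turns the $\min\{O(km),O(m^2)\}$ bound on distinct spending values into $O(\min\{k,m\}nm)$ rounds.

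\emph{Monotonicity itself.} For $\lambda_t\le\lambda_{t+1}$, shortest paths play no role. What matters is that agents below $\lambda_t$ are off the path and gain no violating path. By the choice in line~\ref{line: multiple paths}, none of them reaches the path. Moving a good from $i_r$ to $i_{r-1}$ creates an edge $h\to i_{r-1}$ only if $h\to i_r$ already existed.

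\textbf{Gap 2: the running-time arithmetic.} Your accounting does not multiply out. One BFS per candidate source costs $O(n^2m)$ per round, and $O(n^2m)\cdot O(\min\{k,m\}nm)=O(\min\{k,m\}n^3m^2)$, a factor $n$ above the claimed bound. You need $O(nm)$ per round, as the paper asserts. One way: search the reversed MBB graph from agents in decreasing order of $\hat p$, visiting each agent only once. This labels every agent with the largest $\hat p$ it can reach, in time linear in the graph size. Then pick the least spender whose label exceeds its own spending.
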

\begin{proof}
	Observe that 1) the computation of the social welfare minimizing allocation takes $O(nm)$ time; 2) the computation of the agent groups takes $O(nm)$ time since each $N_r$ can be identified via computing a connected component of $G_X$; 3) during a \textbf{while} loop, the transfer of goods along the MBB path takes $O(m)$ time, and identifying the path as well as update the MBB graph takes $O(nm)$ time.
	Therefore, to prove the lemma, it suffices to argue that the \textbf{while} loops must break after $O(\min\{k,m\}nm)$ rounds.
	
	For convenience, we refer to an execution of the \textbf{while} loop as a round, and index the rounds by $t = 1, 2,\ldots$. 
	Given the MBB path $i_0\to i_1\to \ldots\to i_s$ in some round, we call $i_0$ the start-agent and $i_s$ the end-agent of the path. 
	Any quantity with a superscript $t$ denotes its status at the beginning of round $t$.
	\begin{claim}
		\label{claim: increasing LS}
		We have $\p(\x^t_{i_0^t})/w_{i_0^t}\leq \p(\x^{t+1}_{i_0^{t+1}})/w_{i_0^{t+1}}$.
	\end{claim}
	\begin{proof}
		Assume that $i_0^t\to i_1^t\to \cdots\to i_s^t$ is the path chosen in this round. We first show that for any $r\neq s$, $\p(\x^t_{i_r^t})/ w_{i_r^t}\leq \p(\x^{t+1}_{i_r^{t}})/w_{i_r^t}$. This claim clearly holds for $r=0$ since agent $i_0^t$ receives one more good. For $r=1,2,\ldots,s-1$, assume that the contrary holds. This happens only when $i_r^t$ receives a good $e$ with $p(e)=1$ and loses a good $e'$ with $p(e')=k$ to $i_{r-1}^t$. Then, $e'\in\MBB_{i_{r-1}^t}$, and by Observation \ref{obs: MBB set and prices}, $e\in\MBB_{i_{r-1}^t}$ as well. However, by the tie-breaking rule, the good that is transferred from $i_{r}^t$ to $i_{r-1}^t$ should be $e$ rather than $e'$. A contradiction. Next, assume that $i_s^t$ loses a good $e''$. We have $\p(\x_{i_s^t}^{t+1})/w_{i_s^t}=\p(\x_{i_s^t}^{t}\setminus\{e''\})/w_{i_s^t}>\p(\x^t_{i_0^t})/w_{i_0^t}$. The last inequality is due to the loop condition. The claim follows immediately.
	\end{proof}
	\begin{claim}
		If agent $i$ is the start-agent in both rounds $t_1$ and $t_2$, where $t_1<t_2$, then we have $\p(\x^{t_1}_{i})/w_i< \p(\x^{t_2}_{i})/w_i$.
	\end{claim}
	\begin{proof}
		Assume that $\p(\x^{t_1}_{i})/w_i= \p(\x^{t_2}_{i})/w_i$. Then, $i$ must lose some good as an end-agent in some round between $t_1$ and $t_2$. Let $t<t_2$ be the last round where $i$ loses some good $e$. Let $i_0^t$ be the corresponding start-agent at round $t$. Then, we get a contradiction due to
		\[ \p(\x^{t_2}_{i})/w_i \geq \p(\x_{i}^t\setminus\{e\})/w_i 
		>\p(\x_{i_0^t}^t)/w_{i_0^t}\geq \p(\x^{t_1}_{i})/w_i. \]
		The first inequality holds since $i$ does not lose goods after round $t$. The second holds since $i$ loses good $e$ as an end-agent. The third is due to Claim \ref{claim: increasing LS}.
	\end{proof}
	Note that for each agent $i$, the value of $\p(\x_i)/w_i$ is at most $km/w_i$. By the previous claims, each agent can be identified as a start-agent at most $km$ times because each of its appearances increase $\p(\x_i)/w_i$ by at least $1/w_i$. Hence the total number of rounds is at most $knm$. 
	
	We can also bound the number of rounds by considering the utilities of agents.
	Observe that for any $i\in N$, if $X_i$ contains $m_1$ goods with value $1$ and $m_2$ goods with value $k$, then $v_i(X_i)=m_1+km_2$. Since $m_1\geq 0, m_2\geq 0, m_1+m_2\leq m$, the number of distinct utility values that $i$ can get is at most $m^2$.
	Next, note that $\alpha_{i}=1$ for any $i\in N$ throughout the algorithm. Thus, $v_i(X_i)=p(X_i)$, and by the same argument, each agent can be identified as a start-agent at most $m^2$ times. Hence the total number of rounds is at most $nm^2$.

	In conclusion, Algorithm \ref{alg: initial allocation} terminates in $O(\min\{k,m\}n^2m^2)$ time.
\end{proof}

\subsection{The Reallocation Algorithm}

In this section, we present Algorithm \ref{alg: WEFX and fPO}, which starts from the initial equilibrium $(X^0,p^0)$ and agent groups $\{N_r\}_{r\in[R]}$ computed by Algorithm \ref{alg: initial allocation}, and constructs a WEFX equilibrium $(X,p)$ by a sequence of item reallocations and price rises.

The algorithm processes $N_r$ sequentially. When handling $N_r$, it first identifies the least spender $\ell$ in $N_r$ and the big spender $b$ among all agents.
The algorithm terminates immediately if $k\cdot \frac{p(X_{\ell})}{w_{\ell}}\geq \hat{p}_b$ i.e., $\ell$ is pWEFX toward $b$ \emph{up to a factor of $k$}.
Note that the algorithm of Garg and Murhekar \cite{GargM21} halts when $\frac{p(X_{\ell})}{w_{\ell}}\geq \hat{p}_b$, where $\ell$ denotes the least spender among all agents.
The termination condition is the key difference between our algorithm and that of Garg and Murhekar \cite{GargM21}.
Our algorithm always halts eventually since we relax the termination condition.
However, upon termination, though the pWEFX property holds for the raised groups, it does not necessarily hold for the unraised groups.
Fortunately and interestingly, we are able to show that the WEFX property is satisfied by the returned allocation. Such a phenomenon also happened in \cite{LinWZ25}, which finds approximately EFX and fPO allocations for bivalued chores.

If the termination condition is not met, i.e.~$k\cdot \frac{p(X_{\ell})}{w_{\ell}}<\hat{p}_b$, the algorithm manages to make group $N_r$ pWEFX toward $b$. It first raises prices of all goods in $\bigcup_{i \in N_r} X_i$ by a factor of $k$. Note that $\frac{p(X_{\ell})}{w_{\ell}}<\hat{p}_b$ after this step. Thus, at least $\ell$ is not pWEFX toward $b$.
The algorithm then proceeds as a \textbf{while} loop: it repeatedly transfers an item from $b$ to $\ell$ and updates their states accordingly, until $\frac{p(X_{\ell})}{w_{\ell}}\geq \hat{p}_b$ is satisfied. At this point, $N_r$ becomes pWEFX toward $b$ since $\ell$ is the least spender in $N_r$.

The good transfer in the \textbf{while} loop divides into two cases. Note that the groups higher than $N_r$ are unraised and those lower than $N_r$ are raised. If $b$ lies in an unraised group, we can pick an arbitrary good from $X_b$ and transfer it to $\ell$. If $b$ lies in a raised group, we must pick a good from $X_b\setminus X_b^0$. We show in Lemma \ref{lem: transferred good} that $X_b\setminus X^0_b\neq \emptyset$ always holds and therefore this step is valid. In both cases, the transferred good $e\in\MBB_{\ell}$ (Lemma \ref{lem: price and transferred good}) and hence $(X,p)$ always forms an equilibrium.

\begin{algorithm}[tb]
	\caption{Find a WEFX and fPO allocation}
	\label{alg: WEFX and fPO}
	\begin{algorithmic}[1]
		\REQUIRE  A bivalued instance $(N, M, \{w_i\}_{i \in N}, \{v_i\}_{i \in N})$
		\STATE Let $(X,p,\{N_r\}_{r \in [R]})$ be returned by Algorithm \ref{alg: initial allocation} 
		\STATE Initialize the set of unraised agents $U \leftarrow N$
		\FOR{$r=1,2,\ldots, R-1$}
		\STATE Let $\ell \leftarrow \arg \min_{i \in N_r} \left\{ \frac{p(X_i)}{w_i} \right\}$ be the least spender in $N_r$
		\STATE Let $b \leftarrow \arg \max_{i \in N} \{\hat{p}_i\}$ be the big spender
		\IF{$k\cdot \frac{p(X_{\ell})}{w_{\ell}}\geq \hat{p}_b$}
		\RETURN $(X,p)$
		\ENDIF
		\STATE\label{line: raise price} Raise prices of all goods in $\bigcup_{i \in N_r} X_i$ by a factor of $k$
		\STATE $U \leftarrow U \setminus N_r$
		\WHILE{$\frac{p(X_{\ell})}{w_{\ell}}<\hat{p}_b$}
		\IF{$b\in U$}
		\STATE Pick an arbitrary good $e\in X_b$
		\ELSE 
		\STATE Pick an arbitrary good $e\in X_b\setminus X_b^0$\label{line: transfer from raised agent}
		\ENDIF
		\STATE Update $X_b\gets X_b\setminus\{e\}, X_{\ell}\gets X_{\ell}\cup \{e\}$
		\STATE Let $\ell \leftarrow \arg \min_{i \in N_r} \left\{ \frac{p(X_i)}{w_i} \right\}$ be the least spender in $N_r$
		\STATE Let $b \leftarrow \arg \max_{i \in N} \{\hat{p}_i\}$ be the big spender
		\ENDWHILE
		\ENDFOR
	\end{algorithmic}
\end{algorithm}

\paragraph{The Properties of Algorithm \ref{alg: WEFX and fPO}.}

Recall that we use $b$ to denote the big spender during the execution of Algorithm \ref{alg: WEFX and fPO}.
We then consider the following definition.

\begin{definition}
	Let $Q\subseteq N$ be the set of agents that are pWEFX toward $b$, i.e.~for any $i\in Q$, $\frac{p(X_i)}{w_i}\geq \hat{\p}_b$.
\end{definition}

We first define some notations. we refer to an execution of line \ref{line: raise price} or an iteration of the \textbf{while} loop as a round, and index the rounds by $t = 0,1,2,\ldots$.
An execution of line \ref{line: raise price} is called a \emph{price-rise} round, and an iteration of the \textbf{while} loop is called a \emph{transfer} round.
We denote by $X^t$ and $p^t$ the allocation and the price at the beginning of round $t$, respectively. Denote by $\ell^t$ and $b^t$ the least spender and the big spender, respectively, identified at the beginning of round $t$.
Denote by $Q^t$ the set of agents that are pWEFX toward $b^t$.
For a price-rise round $t$, $p^{t+1}(e)=k\cdot p^t(e)$ if $e$ is raised and $p^{t+1}(e)=p^t(e)$ if unraised. Besides, $X^{t+1}=X^t$, and hence sometimes we will omit the superscript of $X$ in this round. For a transfer round $t$, some good $e^t$ is transferred from $b^t$ to $\ell^t$. Thus, $X^{t+1}_{\ell^{t}}=X^{t}_{\ell^{t}}\cup \{e^t\}, X^{t+1}_{b^{t}}=X^{t}_{b^{t}}\setminus \{e^t\}$, and $X^{t+1}_{i}=X^{t}_{i}$ for all $i\notin\{\ell^t,b^t\}$. Besides, $p^{t+1}=p^t$, and hence sometimes we will omit the superscript of $p$ in this round.
We also use $N_{\leq i}$ to denote $\cup_{j\leq i} N_j$, and define $N_{\leq i}$, $N_{\geq i}$ and $N_{>i}$ accordingly.
Recall that we denote by $\ell_{r}$ the least spender in $N_{r}$ in the initial equilibrium $(X^0,p^0)$.

We describe the properties of Algorithm \ref{alg: WEFX and fPO} by introducing a set of invariants that Algorithm \ref{alg: WEFX and fPO} maintains throughout its whole execution.

\begin{invariant}[Equilibrium Invariant]
	\label{inv: equilibrium}
	$(\x,\p)$ is an equilibrium.
\end{invariant}

\begin{invariant}[Group pWEFX Invariant]
	\label{inv: pWEFX}
	For all $r \in [R]$, agent group $N_r$ is pWEFX in $(X, p)$.
\end{invariant}

\begin{invariant}[Raised Group Invariant]
	\label{inv: raised group}
	There exists $r^*\in [R]$ such that all groups $N_1,\ldots,N_{r^*-1}$ are raised exactly once, and all groups $N_{r^*},\ldots,N_R$ are not raised. Moreover, the following properties holds:
	\begin{enumerate}
		\item Goods in $\cup_{j\in N_{<r^*}} X^0_j$ are raised exactly once, and goods in $\cup_{j\in N_{\geq r^*}} X^0_j$ are unraised.
		\item For all $i\in N$, $\alpha_{i,e}\leq 1/k$ for any $e\in \cup_{j\in N_{<r^*}} X_j^0$.
		\item For all $i\in N_{<r^*}$, $\alpha_{i,e}= 1/k$ for any $e\in \cup_{j\in N_{\geq r^*}} X_j^0$.
		\item For all $i\in N_{<r^*}$, we have $\alpha_i=1/k$ and $X_i^0\subseteq X_i$, i.e., agent $i$ did not lose any good in $X_i^0$.
		\item For all $i\in N_{\geq r^*}$, we have $\alpha_i=1$ and $X_i\subseteq X_i^0$, i.e., agent $i$ did not receive any new good.
	\end{enumerate}
\end{invariant}

\begin{invariant}[Big Spending Invariant]
	\label{inv: big spending}
	The spending of the big spender across different rounds is non-increasing: $\hat{\p}^t_{b^t}\geq \hat{\p}^{t+1}_{b^{t+1}}$.
\end{invariant}

\begin{invariant}[Least Spender Invariant]
	\label{inv: least spender}
	If $N_r$ is raised in round $t$, then any $i\in N_r$ has never been identified as a big spender in rounds $1,2,\ldots, t$.
	As a result, $X^{t'}_i=X^0_i$ for all $t'\leq t$.
\end{invariant}

\begin{invariant}[$Q$ Invariant]
	\label{inv: monotone Q}
	The set $Q$ is monotonically increasing across different rounds: $Q^t\subseteq Q^{t+1}$.
\end{invariant}

%
%

We postpone the proof of the above invariants for now and assume that they hold at the beginning of some round $t$.
We show several additional properties of the algorithm, which is helpful for showing the invariants are maintained at the end of round $t$ and proving the WEFX property of the returned allocation.
Note that when $t$ is a price-rise round, we assume that $k\cdot\frac{\p^t(X^t_{\ell^t})}{w_{\ell^t}}<\hat{p}^t_{b^t}$, since otherwise the algorithm would have stopped already.
When $t$ is a transfer round, we assume that $\frac{\p^t(X^t_{\ell^t})}{w_{\ell^t}}<\hat{p}^t_{b^t}$, since otherwise the \textbf{while} loop would have broken.
For simplicity, we sometimes omit the superscript $t$ when the context is clear.
We start by the following simple observation.

\begin{observation}
	\label{obs: l and b are not in the same group}
	Assume $\ell \in N_r$ and $b\in N_{r'}$, then $r\neq r'$.
\end{observation}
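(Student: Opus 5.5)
The plan is to argue by contradiction, using Invariant \ref{inv: pWEFX}. Suppose $\ell\in N_r$ and $b\in N_{r}$ lie in the same group.

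First, by Invariant \ref{inv: pWEFX} the group $N_r$ is pWEFX in the current equilibrium $(X,p)$. So every agent of $N_r$, in particular $\ell$, satisfies $\frac{p(X_\ell)}{w_\ell}\geq \hat{p}_j$ for every $j\in N_r$. Taking $j=b$ gives $\frac{p(X_\ell)}{w_\ell}\geq \hat{p}_b$.

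Next we compare this with the standing assumptions for round $t$.
\begin{itemize}
\item If $t$ is a transfer round, we assumed $\frac{p(X_\ell)}{w_\ell}<\hat{p}_b$, which is an immediate contradiction.
\item If $t$ is a price-rise round, we assumed $k\cdot\frac{p(X_\ell)}{w_\ell}<\hat{p}_b$. Since $k>1$ and spendings are positive, this gives $\frac{p(X_\ell)}{w_\ell}<k\cdot\frac{p(X_\ell)}{w_\ell}<\hat{p}_b$, again a contradiction.
\end{itemize}

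One subtlety is the degenerate case $\ell=b$. This case is already covered: pWEFX of $\ell$ toward itself means $p(X_\ell)/w_\ell\geq \hat{p}_\ell$, which holds trivially since removing a good cannot increase the spending. The argument above never uses $\ell\neq b$, so it applies verbatim.

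The only real point of care is that the invariants are assumed at the beginning of round $t$, so $\ell$ and $b$ must be read as $\ell^t$ and $b^t$. With that reading there is no genuine obstacle: the observation is essentially a restatement of the loop and termination guards combined with the group pWEFX invariant.
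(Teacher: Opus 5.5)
Your proof is correct and is the same argument as the paper's: Invariant~\ref{inv: pWEFX} applied within $N_r$ gives $p(X_\ell)/w_\ell\geq\hat{p}_b$, which contradicts the \textbf{while}-loop guard in a transfer round or the failed \textbf{if} test in a price-rise round. Your extra detail for the price-rise case is fine, but there you only need $p(X_\ell)/w_\ell\leq k\cdot p(X_\ell)/w_\ell$, not a strict inequality, since spendings are nonnegative and can be zero for an empty bundle.
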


\begin{proof}
	If $r=r'$, then $\frac{p(X_{\ell})}{w_{\ell}}\geq \hat{\p}_b$ by Invariant \ref{inv: pWEFX}, which contradicts to the conditions of the \textbf{while} loop or the \textbf{if} statement.
\end{proof}

The next two lemmas are concerned with agents who become pWEFX. Specifically, Lemma \ref{lem: some unraised groups are in Q} shows that an unraised group will become pWEFX when the big spender appears in it or in a lower unraised group. Lemma \ref{lem: raised groups are in Q} shows that all raised groups become pWEFX.

\begin{lemma}
	\label{lem: some unraised groups are in Q}
	If $b\in U$ becomes the big spender at the beginning of some transfer round $t$ and assume $b\in N_s$, then $N_{\geq s}\subseteq Q^{t}$.
\end{lemma}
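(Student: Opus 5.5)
The plan is to fix an arbitrary $i\in N_{\geq s}$ and show $\frac{p^t(X^t_i)}{w_i}\geq \hat{p}^t_{b}$, where $b=b^t$. We split according to whether $i$ has lost goods before round $t$.

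First we record what holds for unraised agents. Since $b\in U$ and the groups are raised in increasing order, Invariant \ref{inv: raised group} gives $s\geq r^*$ at round $t$. Hence every $i\in N_{\geq s}$ is unraised, and $X_i\subseteq X_i^0$ by item 5. By item 1, the goods in $X^0_i$ still carry their initial prices. So $p^t(X^t_i)=p^0(X^t_i)$, and unraised agents only lose goods, which happens only when they are chosen as big spenders.

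If $s=1$, then $b\in N_1$. For $i\in N_1$, Invariant \ref{inv: pWEFX} directly gives that $i$ is pWEFX toward $b$. So the remaining agents are those in $N_{\geq 2}$, or all of $N_{\geq s}$ when $s\geq 2$. For every such agent $j$, Lemma \ref{lem: properties of initial allocation}(3) gives $X^0_j\subseteq M^+$. Hence every good in $X_j$ currently has price exactly $k$. This uniformity is what makes the removal of an arbitrary good harmless below.

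\emph{Case 1: $i$ never lost a good before round $t$.} Then $X^t_i=X^0_i$. Let $s'\geq s$ be the index with $i\in N_{s'}$. When Algorithm \ref{alg: initial allocation} chose $\ell_s$, it took the argmin of spending over a set $N'\supseteq N_{\geq s}$. Thus $\frac{p^0(X^0_i)}{w_i}\geq \frac{p^0(X^0_{\ell_s})}{w_{\ell_s}}$. Since $b\in N_s$, there is a path from $\ell_s$ to $b$ in $G_{X^0}$, so Lemma \ref{lem: i has a path to j} gives $\frac{p^0(X^0_{\ell_s})}{w_{\ell_s}}\geq \hat{p}^0_b$. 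Finally, $b$ only lost goods and its prices are unchanged, so $\hat{p}^t_b\leq \hat{p}^0_b$. Chaining these three inequalities gives the claim. (The case $i\in N_1$ with $s=1$ is already covered above.)

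\emph{Case 2: $i$ lost a good before round $t$.} Let $t'<t$ be the last round in which $i$ lost a good, say $e$. Then $i=b^{t'}$, and $i$ is unchanged from round $t'+1$ to round $t$, since unraised agents never receive goods. All goods in $X^{t'}_i$ have price $k$ (or, if $i\in N_1$, we are in the situation already handled by Invariant \ref{inv: pWEFX}). Hence removing $e$ realizes the maximum in the definition of $\hat p$:
\[
\frac{p(X^{t}_i)}{w_i}=\frac{p(X^{t'}_i\setminus\{e\})}{w_i}=\hat{p}^{t'}_{b^{t'}}\geq \hat{p}^{t}_{b^{t}},
\]
where the last inequality is Invariant \ref{inv: big spending}.

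The main obstacle is Case 2. The algorithm removes an arbitrary good from $b^{t'}$, so in general $p(X\setminus\{e\})$ could be strictly smaller than $\hat p$. The key step is to use Lemma \ref{lem: properties of initial allocation}(3), which says price-$1$ goods live only in $N_1$. Together with the group pWEFX invariant for $N_1$, this forces all relevant prices to equal $k$ and closes the gap.
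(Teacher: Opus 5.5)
Your proof is correct, but it is organized differently from the paper's. The paper splits $N_{\geq s}$ by group. Agents of $N_s$ are handled directly by the Group pWEFX Invariant at round $t$. An agent $i\in N_{>s}$ already in $Q^{t-1}$ stays in $Q^t$ by the $Q$ Invariant. If $i\notin Q^{t-1}$, then $i$ was never a big spender, since a big spender trivially lies in $Q$ at its own round and $Q$ is monotone. Hence $X^t_i=X^0_i$, and the chain $p^t(X^0_i)/w_i\geq p^t(X^0_{\ell_s})/w_{\ell_s}\geq p^t(X^t_{\ell_s})/w_{\ell_s}\geq\hat{p}^t_b$ closes using the current Group pWEFX Invariant.

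You instead split by whether $i$ ever lost a good. This is the same dichotomy in substance, but you justify each branch without the $Q$ Invariant or the current Group pWEFX Invariant. In Case 1 you run the chain through $\ell_s$ at time $0$ via Lemma \ref{lem: i has a path to j}, and then use that $\hat{p}_b$ can only shrink when $b$ loses goods at fixed prices. In Case 2 you re-derive, for this one agent, the step in the paper's maintenance proof of the $Q$ Invariant. There you replace the appeal to Lemma \ref{lem: price and transferred good} by the observation that unraised agents outside $N_1$ hold only price-$k$ goods, and then invoke the Big Spending Invariant.

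The paper's version is shorter because it reuses invariants it has to prove anyway. Yours is more local: it relies only on the Raised Group and Big Spending invariants plus static properties of $(X^0,p^0)$, and it treats $N_s$ and $N_{>s}$ uniformly.

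One simplification: the case $s=1$ cannot occur. $N_1$ is raised in the first price-rise round, which precedes every transfer round, so $b\in U$ forces $s\geq 2$ and your detours through $N_1$ can be dropped.
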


\begin{proof}
	For any $i\in N_s$, $\frac{p(X_i)}{w_i}\geq \hat{\p}_b$ by Invariant \ref{inv: pWEFX}. Thus, $i\in Q^t$.
	For any $i\in N_r$ with $r>s$, if $i\in Q^{t-1}$, then $i\in Q^{t}$ by Invariant \ref{inv: monotone Q}. If $i\notin Q^{t-1}$, then $i$ has never been identified as a big spender at previous transfer rounds.
	To see this, assume that $i$ is the big spender at transfer round $t'\leq t-1$.
	Then, $i\in Q^{t'}\subseteq Q^{t-1}$ by Invariant \ref{inv: monotone Q}. A contradiction.
	Together with Invariant \ref{inv: raised group}, it implies that $X^{t}_i=X^0_i$.
	We then have $\frac{p^{t}(X^{t}_i)}{w_i}=\frac{p^{t}(X^0_i)}{w_i}\geq \frac{p^{t}(X^0_{\ell_s})}{w_{\ell_s}}\geq \frac{p^{t}(X^{t}_{\ell_s})}{w_{\ell_s}}\geq \hat{p}^{t}_{b}$.
	The first inequality holds since $\ell_s$ is the least spender in $N_s$ initially and $r>s$. The second inequality holds since $N_{s}$ is unraised and $ X^{t}_{\ell_s}\subseteq X^0_{\ell_s}$ by Invariant \ref{inv: raised group}. The last inequality is due to Invariant \ref{inv: pWEFX}. Thus, $i\in Q^t$.
	To conclude, we have $N_{\geq s}\subseteq Q^{t}$.
\end{proof}

\begin{lemma}
	\label{lem: raised groups are in Q}
	Assume that $N_{<r^*}$ is raised and $N_{\geq r^*}$ is unraised. Then $N_{<r^*}\subseteq Q$.
\end{lemma}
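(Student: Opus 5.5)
We argue by induction over the rounds of Algorithm \ref{alg: WEFX and fPO}, assuming the invariants hold at the start of the current round. The statement says that once $N_{<r^*}$ has been raised, every raised agent is pWEFX toward the current big spender $b$. Fix $r<r^*$. The key moment is the end of the \textbf{while} loop that processed $N_r$: the loop exits only when $\frac{p(X_\ell)}{w_\ell}\geq \hat p_b$, where $\ell$ is the least spender of $N_r$. Since $\ell$ minimizes weighted spending within $N_r$, every $i\in N_r$ then satisfies $\frac{p(X_i)}{w_i}\geq \hat p_b$, so $N_r\subseteq Q$ at that moment. By the $Q$ Invariant (Invariant \ref{inv: monotone Q}), $Q$ only grows afterwards, hence $N_r\subseteq Q$ at every later round. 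Taking all $r<r^*$ whose loop has finished gives the claim for those groups.

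The remaining case is the group currently being processed, namely $N_{r^*-1}$ during its own transfer rounds, or right after its price-rise round. If the lemma is meant to cover that moment, the argument above does not apply. The natural reading is that the lemma concerns rounds at which the loop for $N_{r^*-1}$ has already terminated, for instance at the start of the next price-rise round or at the final return. I would either state that reading explicitly, or, if the statement is meant to hold at every round, handle the in-progress group as follows. Until its loop ends, $N_{r^*-1}$ need not be in $Q$; indeed $\ell\notin Q$ by the loop condition. So the lemma must be interpreted with $r^*$ being the index at which no loop is active. I will therefore phrase the proof for configurations where all raised groups have completed their \textbf{while} loops.

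A second subtlety is the early return. If the algorithm returns at the \textbf{if} test for group $r$, then $N_r$ is not raised, so the raised groups are exactly those whose loops completed, and the argument applies unchanged.

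The main obstacle is justifying that $Q$ is monotone across price-rise rounds. Raising prices of $N_r$ by the factor $k$ increases the raised agents' spending, which is harmless, but it can also raise $\hat p$ for agents in $N_r$. This matters only if the big spender changes to an agent in $N_r$. Here I would invoke the Big Spending Invariant (Invariant \ref{inv: big spending}): $\hat p_b$ never increases, so anyone who satisfied $\frac{p(X_i)}{w_i}\geq\hat p_b$ continues to satisfy it, because raised agents' spending never decreases. They only gain goods, by Invariant \ref{inv: raised group}(4), and prices are only raised. This is exactly the content of Invariant \ref{inv: monotone Q}, so I would cite it and, if needed, rederive it this way for raised agents.
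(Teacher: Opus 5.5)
Your proof is the paper's: when the \textbf{while} loop for $N_r$ exits, the current least spender $\ell$ of $N_r$ satisfies $\frac{p(X_\ell)}{w_\ell}\geq \hat{p}_b$, so all of $N_r$ lies in $Q$, and Invariant~\ref{inv: monotone Q} keeps it there. Your remark that the statement should be read at moments when the loop for $N_{r^*-1}$ has already finished is accurate, and it matches how the paper uses the lemma (only at termination).

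One correction concerns your optional rederivation of Invariant~\ref{inv: monotone Q}. Raised agents do lose goods when they are the big spender: line~\ref{line: transfer from raised agent} removes a good of $X_b\setminus X_b^0$, and Invariant~\ref{inv: raised group}(4) only protects $X_b^0$. So a raised agent's spending can drop. What keeps such a $b^t$ in $Q$ is that the removed good has the minimum price $k$ (Lemma~\ref{lem: price and transferred good}). This gives $\frac{p(X^{t+1}_{b^t})}{w_{b^t}}=\hat{p}^t_{b^t}\geq \hat{p}^{t+1}_{b^{t+1}}$ by Invariant~\ref{inv: big spending}.
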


\begin{proof}
	When processing $N_r\, (r<r^*)$ , the \textbf{while} loop terminates with $\frac{p(X_{\ell})}{w_{\ell}}\geq \hat{p}_b$. Therefore, for any $i\in N_r$, $\frac{p(X_{i})}{w_{i}}\geq \frac{p(X_{\ell})}{w_{\ell}}\geq \hat{p}_b$. Thus, $N_r\subseteq Q$ and it persists by Invariant \ref{inv: monotone Q}.
\end{proof}

The last two lemmas are crucial in understanding the behavior of the algorithm. Specifically, Lemma \ref{lem: transferred good} shows that the good transfer in line \ref{line: transfer from raised agent} is valid. Lemma \ref{lem: price and transferred good} shows that the good transfer is always along some MBB edge.

\begin{lemma}
	\label{lem: transferred good}
	In the \textbf{while} loop, if $b\notin U$, then $X_b\setminus X^0_b\neq \emptyset$. Furthermore, assuming that $\ell\in N_r$, we have $X_b\setminus X^0_b\subseteq \cup_{j\in N_{>r}} X^0_j$, i.e.~all goods in $X_b\setminus X^0_b$ were initially held by the agents in groups $N_{>r}$.
\end{lemma}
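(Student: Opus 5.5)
We would argue directly from the invariants assumed to hold at the beginning of the current transfer round, Invariants \ref{inv: raised group}--\ref{inv: monotone Q}, together with Lemma \ref{lem: raised groups are in Q} and Observation \ref{obs: l and b are not in the same group}. Fix a transfer round in which $N_r$ is being processed, so $N_{\le r}$ is raised and $U=N_{>r}$. Assume $b\notin U$; then $b\in N_{r'}$ with $r'\le r$. Observation \ref{obs: l and b are not in the same group} gives $r'\neq r$, hence $r'<r$.

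\textbf{Step 1: the inclusion.} We first show $X_b\setminus X_b^0\subseteq \cup_{j\in N_{>r}}X^0_j$. By Invariant \ref{inv: raised group}(4), raised agents never lose goods from their initial bundles. Any good in $X_b\setminus X_b^0$ was acquired by $b$ in a transfer round while $b$ was the least spender of its group, that is, while $N_{r'}$ was being processed. The good came from the big spender of that round, which lies in a different group by Observation \ref{obs: l and b are not in the same group}.

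We then argue by induction over rounds that every good held by a raised agent outside its initial bundle originally belonged to a group above that of the agent's processing time. In a transfer round processing $N_{s}$, the big spender either lies in $U=N_{>s}$, and then gives one of its own initial goods since unraised agents only lose goods (Invariant \ref{inv: raised group}(5)). Or it is raised, and then gives a good from $X\setminus X^0$, which by induction lies in $\cup_{j\in N_{>s}}X_j^0$.

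This yields $X_b\setminus X_b^0\subseteq \cup_{j>r'}X^0_j$. To sharpen $r'$ to $r$, we must exclude goods from groups $N_{r'+1},\dots,N_r$, which are raised by now. Here we use Invariant \ref{inv: least spender}: when $N_s$ is raised, none of its agents has been a big spender, so $X_i=X_i^0$ for all $i\in N_s$. Hence no good initially in $N_s$ has left $N_s$ before $N_s$ was raised. After raising, agents of $N_s$ keep their initial goods by Invariant \ref{inv: raised group}(4), and if they are big spenders they only give away non-initial goods. So goods of $\cup_{j\le r}X_j^0$ never move to another group, which gives the sharpened inclusion.

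\textbf{Step 2: non-emptiness.} Suppose instead $X_b=X_b^0$. At the end of processing $N_{r'}$, Lemma \ref{lem: raised groups are in Q} and Invariant \ref{inv: monotone Q} give $b\in Q$, so $p(X_b)/w_b\ge \hat p_b$ trivially. That alone is not a contradiction, so we need a sharper comparison. By Invariant \ref{inv: least spender}, $b$ was never a big spender before $N_{r'}$ was raised, so $X_b^0$ was still intact at that time.

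The key inequality we expect to use is that at the raise of $N_{r'}$ we had $k\,p(X_{\ell})/w_\ell<\hat p_{b_{\text{old}}}$, while initially $p^0(X_{\ell_{r'}})/w_{\ell_{r'}}\le p^0(X^0_b)/w_b$. After raising, $b$'s initial bundle has price $k\,p^0(X_b^0)$. We would then compare $\hat p_b$ for $X_b=X_b^0$ with the spending of $\ell$, where $\ell$ is the current least spender in $N_r$ and $\ell\notin Q$. Since $\ell\in N_r$ with $r>r'$ and $\ell$ is currently raised, its spending is at least $k\,p^0(X^0_\ell)/w_\ell$. Moreover $\ell$ could reach no agent of $N_{r'}$ initially, as otherwise it would belong to a lower group. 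Combining these should give $p(X_\ell)/w_\ell\ge \hat p_b$, which contradicts the loop condition.

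\textbf{Main obstacle.} The step I expect to be hardest is exactly this comparison between $X^0_\ell$ and $X^0_b$ across groups. We would try to use the initial-equilibrium property from the termination of the while loop in Algorithm \ref{alg: initial allocation}: the group ordering by least spenders ensures $p^0(X^0_{\ell_r})/w_{\ell_r}\ge p^0(X^0_{\ell_{r'}})/w_{\ell_{r'}}$. We would pair this with Lemma \ref{lem: i has a path to j} inside $N_{r'}$ to bound $\hat p^0_b\le p^0(X^0_{\ell_{r'}})/w_{\ell_{r'}}$. Scaling both sides by $k$ then yields $\hat p_b\le k\,p^0(X^0_{\ell_{r'}})/w_{\ell_{r'}}\le p(X_\ell)/w_\ell$, the required contradiction.
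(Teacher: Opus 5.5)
Your proposal is correct. The non-emptiness argument (the chain in your ``Main obstacle'' paragraph, $\hat p_b=k\hat p^0_b\le k\,p^0(X^0_{\ell_{r'}})/w_{\ell_{r'}}\le k\,p^0(X^0_\ell)/w_\ell\le p(X_\ell)/w_\ell$) is the paper's argument. Three small fixes there:
\begin{itemize}
\item The middle inequality is needed for the current $\ell$, not $\ell_r$. It holds because $\ell_{r'}$ was chosen as a minimizer over a set containing $N_r$.
\item $X_b\setminus X^0_b=\emptyset$ gives $X_b=X^0_b$ only via $X^0_b\subseteq X_b$ from Invariant \ref{inv: raised group}(4).
\item The detour about the raise of $N_{r'}$ and reachability from $\ell$ is never used and can be dropped.
\end{itemize}

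For the inclusion you take a genuinely different route. The paper uses Lemma \ref{lem: some unraised groups are in Q} and Invariant \ref{inv: monotone Q}. A good of $X^0_{b'}$ with $b'\in N_{r_2}$, $r_2\le r$, can leave $b'$ only while $b'$ is an unraised big spender. At that moment $N_{\ge r_2}\subseteq Q$, so $\ell\in Q$ now, which contradicts the loop condition.

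You instead show that initial goods of raised agents never leave their owners. That observation alone already proves the claim. While $N_r$ is processed, Invariant \ref{inv: raised group}(4) with $r^*=r+1$ gives $X^0_j\subseteq X_j$ for every $j\in N_{\le r}$. By disjointness of bundles, no good of $\cup_{j\in N_{\le r}}X^0_j$ can lie in $X_b\setminus X^0_b$.

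So your preliminary induction over rounds is redundant. That is fortunate, because as phrased it does not close by itself. When a raised big spender from $N_{s''}$ with $s''<s$ gives a good to $\ell\in N_s$, the hypothesis only places the good in $N_{>s''}$. Upgrading that to $N_{>s}$ is exactly what your sharpening step supplies.

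What your route buys is that the inclusion becomes pure bookkeeping on the current allocation, with no appeal to $Q$ or the loop condition. The paper argues from the dynamics directly. Ultimately the two rest on the same mechanism: Invariant \ref{inv: raised group}(4) is maintained via Invariant \ref{inv: least spender}, which is itself maintained via Lemma \ref{lem: some unraised groups are in Q}.
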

\begin{proof}
	Assume that $b\in N_{r_1}$. Since $b\notin U$, we have $r_1< r$ by Observation \ref{obs: l and b are not in the same group}.
	Assume that $X_b\setminus X^0_b= \emptyset$, then $X_b=X^0_b$ by Invariant \ref{inv: raised group}. Then, $\hat{\p}_b= k\cdot \hat{\p}^0_b\leq k\cdot \frac{p^0(X^0_{\ell_{r_1}})}{w_{\ell_{r_1}}}\leq k\cdot \frac{p^0(X^0_{\ell})}{w_{\ell}}\leq k\cdot \frac{p^0(X_{\ell})}{w_{\ell}}=\frac{p(X_{\ell})}{w_{\ell}}$, contradicting to the loop condition.
	The equality holds since $b$ has been raised and $X_b=X^0_b$. The first inequality holds since $N_{r_1}$ is pWEFX initially by Lemma \ref{lem: properties of initial allocation}. The second inequality holds since $\ell_{r_1}$ is the least spender in $N_{r_1}$ initially and $r_1<r$. The last inequality holds since $X^0_{\ell}\subseteq X_{\ell}$ by Invariant \ref{inv: raised group}. The last equality holds since $\ell$ also has been raised.
	
	For the second part of the lemma, assume that at some transfer round $t'<t$, agent $b$ serves as a least spender and she received a good $e$ that is initially held by some agent $b'$, i.e.~$e\in X^0_{b'}$. Assume that $b'\in N_{r_2}$. It suffices to show that $r_2>r$. Assume that $r_2\leq r$, then $\ell\in N_{\geq r_2}\subseteq Q$ by Lemma \ref{lem: some unraised groups are in Q}, which implies that $\ell$ is pWEFX toward $b$, contradicting to the loop condition.
\end{proof}

\begin{lemma}
	\label{lem: price and transferred good}
	After the first price-rise round, $p(e)=\{k,k^2\}$ for all $e\in M$.
	Furthermore, assume that $e$ is transferred from $b$ to $\ell$ in the \textbf{while} loop. Then $p(e)=k$ and $e\in\MBB_{\ell}$.
\end{lemma}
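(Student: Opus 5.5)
We will use the structure supplied by Invariant \ref{inv: raised group} together with Lemma \ref{lem: transferred good}. The price claim is handled first; the claim about transferred goods is then split according to whether $b$ is raised.

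\emph{Prices.} Initially $p^0(e)=1$ for $e\in M^-$ and $p^0(e)=k$ for $e\in M^+$. By Lemma \ref{lem: properties of initial allocation}(3), $M^-\subseteq\cup_{i\in N_1}X^0_i$. The first price-rise round raises exactly the goods held by $N_1$. By the Least Spender Invariant, at that moment $N_1$ still holds $X^0$, so every good of $N_1$ goes from price $1$ to $k$ or from $k$ to $k^2$. Every good outside $\cup_{i\in N_1}X^0_i$ lies in $M^+$ and keeps price $k$.

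Later price rises only touch goods in $\cup_{i\in N_r}X_i$. By the Least Spender Invariant, $X_i=X^0_i$ for $i\in N_r$ at the time $N_r$ is raised. These goods are therefore in $M^+$, unraised, and of price $k$, and they become $k^2$. No good is raised twice (Invariant \ref{inv: raised group}(1)). Hence $p(e)\in\{k,k^2\}$ for all $e$.

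\emph{Transferred goods.} Let $\ell\in N_r$, where $N_r$ is the group currently being processed and has just been raised. There are two cases.

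\emph{Case 1: $b\in U$.} Then $b\in N_{r'}$ with $r'>r$ by Observation \ref{obs: l and b are not in the same group}. By Invariant \ref{inv: raised group}(5), $X_b\subseteq X^0_b$, so the good $e$ is initially held by an unraised agent. It lies in $\cup_{j\in N_{\geq r^*}}X^0_j$, where $r^*=r+1$, and it is unraised. Since $r^*-1\geq r$, it is not in $N_1$'s initial bundle, so $p(e)=k$ by the price argument. By Invariant \ref{inv: raised group}(3), $\alpha_{\ell,e}=1/k$. By Invariant \ref{inv: raised group}(4), $\alpha_\ell=1/k$. Hence $e\in\MBB_\ell$.

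\emph{Case 2: $b\notin U$.} The good $e\in X_b\setminus X^0_b$ exists by Lemma \ref{lem: transferred good}. By the same lemma, $e\in\cup_{j\in N_{>r}}X^0_j$, which is the initial holdings of groups $\geq r^*$, all unraised. So again $p(e)=k$, and $\alpha_{\ell,e}=1/k=\alpha_\ell$ by Invariant \ref{inv: raised group}(3),(4). Hence $e\in\MBB_\ell$.

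\emph{Main obstacle.} The delicate point is the bookkeeping in the first part. We must make sure that the $M^-$ goods, the only goods of price $1$, are all raised at the first price rise, and that no later raise hits a price-$k^2$ good. This rests on two facts: $M^-$ sits in $N_1$ initially, and $N_1$ has not lost goods before being raised. The second fact is exactly the Least Spender Invariant, and we will invoke it explicitly.

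We also need to check that "unraised" goods in Case 1 really have price $k$. If $r=1$, they are outside $N_1$'s initial holdings, so they lie in $M^+$. If $r>1$, they are outside $N_1$ as well, since $N_1$ is raised. In both cases the argument goes through.
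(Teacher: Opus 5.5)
Your proposal is correct and follows the paper's proof essentially step for step. You derive the price claim from $M^-\subseteq\cup_{i\in N_1}X^0_i$ and the fact that $N_1$ is raised first, then split into $b\in U$ (Invariant~\ref{inv: raised group}, parts 3--5) and $b\notin U$ (Lemma~\ref{lem: transferred good} together with Invariant~\ref{inv: raised group}), and in both cases conclude $p(e)=k$ and $\alpha_{\ell,e}=1/k=\alpha_\ell$. Your explicit check that later raises only touch the initial bundles of $N_{\geq 2}$, which consist of price-$k$ goods of $M^+$ raised exactly once, spells out a step the paper leaves implicit when it asserts that prices stay in $\{k,k^2\}$.
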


\begin{proof}
	Note that in $(X^0,p^0)$, $p(e)=k$ for all $e\in M^{+}$ and $p(e)=1$ for all $e\in M^{-}$. By Lemma \ref{lem: properties of initial allocation}, $M^{-}\subseteq \cup_{i\in N_1} X_i^0$.
	Since $N_1$ is raised in the first price-rise round, we have $p(e)=\{k,k^2\}$ for all $e\in M$ after this round.
	
	If $b\in U$, by Invariant \ref{inv: raised group}, we have $e\in X_b\subseteq X_b^0$ is unraised, $\alpha_{\ell}=1/k$ and $\alpha_{\ell,e}=1/k$. Thus, $p(e)=k$ and $e\in\MBB_{\ell}$.
	If $b\notin U$, assume we are in the $r$-th iteration of the \textbf{for} loop processing group $N_r$.
	By Lemma \ref{lem: transferred good}, $e\in X_b\setminus X^0_b\subseteq \cup_{j\in N_{>r}} X^0_j$. Again by Invariant \ref{inv: raised group}, $e$ is unraised, $\alpha_{\ell}=1/k$ and $\alpha_{\ell,e}=1/k$. Thus, $p(e)=k$ and $e\in\MBB_{\ell}$.
\end{proof}

\paragraph{WEFX and fPO Properties.}
Assume that all invariants are maintained throughout the execution of Algorithm \ref{alg: WEFX and fPO}. Upon termination, the algorithm outputs a tuple $(X,p,\{N_r\}_{r\in [R]}, \ell, b)$, where $N_1,\ldots, N_{r^*-1}$ are raised, $N_{r^*},\ldots, N_R$ are unraised, $\ell=\arg \min_{i \in N_{r^*}} \left\{ \frac{p(X_i)}{w_i} \right\}$ and $b = \arg \max_{i \in N} \{\hat{p}_i\}$. We now show that $(X,p)$ is WEFX and fPO.

\begin{lemma}
	\label{lem: properties of reallocation algorithm}
	When Algorithm \ref{alg: WEFX and fPO} terminates, the following properties hold:
	\begin{enumerate}
		\item There exists $s\geq r^*$ such that $N_{<r^*}\cup N_{\geq s}\subseteq Q$.
		\item For any $i\in N\setminus Q$, $X_i=X^0_i$ and $p(e)=p^0(e)$ for any $e\in X_i$.
		\item $\ell$ is also the least spender in $N\setminus Q$.
	\end{enumerate}
	
\end{lemma}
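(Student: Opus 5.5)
We argue from the invariants, which we take to hold at termination, and treat the two ways the algorithm can stop separately. Either it returns inside the \textbf{if} statement while processing some $N_{r^*}$, so that $N_{<r^*}$ is raised and $N_{\geq r^*}$ is unraised. Or the \textbf{for} loop runs to completion; then $r^*=R$ and $N_R$ is unraised. In the second case we also know $\frac{p(X_{\ell'})}{w_{\ell'}}\geq\hat p_b$ for the least spender $\ell'$ of $N_{R-1}$. The statement's $\ell$ is always the least spender of $N_{r^*}$.

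For property 1, Lemma~\ref{lem: raised groups are in Q} already gives $N_{<r^*}\subseteq Q$. The choice of $s$ depends on the big spender.
\begin{itemize}
\item If $b\notin U$, then $b$ lies in a raised group and hence in $Q$. So $b$ is pWEFX toward itself, i.e.\ $\frac{p(X_b)}{w_b}\ge \hat p_b$; we will need to exploit this.
\item If $b\in U$, say $b\in N_{s'}$ with $s'\geq r^*$, we take $s=s'$. We then prove $N_{\geq s}\subseteq Q$ by repeating the argument of Lemma~\ref{lem: some unraised groups are in Q}, using only that the current allocation is the final one rather than that round $t$ is a transfer round. Agents in $N_s$ are in $Q$ by Invariant~\ref{inv: pWEFX}. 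For $i\in N_{>s}\setminus Q$, $i$ was never a big spender, so $X_i=X_i^0$ and $\frac{p(X_i)}{w_i}\ge \frac{p(X^0_{\ell_s})}{w_{\ell_s}}\ge \frac{p(X_{\ell_s})}{w_{\ell_s}}\ge \hat p_b$.
\item If $b\notin U$, or if no group beyond $N_{r^*}$ causes trouble, we take $s=R+1$, so that $N_{\geq s}=\emptyset$ and the claim is vacuous.
\end{itemize}

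For property 2, take $i\notin Q$. By property 1, $i\in N_{\geq r^*}$, so $i$ is unraised. Invariant~\ref{inv: raised group}(5) gives $X_i\subseteq X_i^0$, and item (1) gives that goods in $X^0_i$ are unraised, hence $p(e)=p^0(e)$ for $e\in X_i$. For equality $X_i=X_i^0$, suppose $i$ lost a good. Goods are only taken from big spenders in transfer rounds, so $i$ was some $b^{t'}$. But a transfer round ends only once the relevant least spender is in $Q$, and then Lemma~\ref{lem: some unraised groups are in Q} together with Invariant~\ref{inv: monotone Q} places $i$ in $Q$ permanently. This contradicts $i\notin Q$.

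For property 3, note that $\ell\in N_{r^*}$, and $\ell\notin Q$ by the termination condition in the \textbf{if}-case. We must show $\frac{p(X_\ell)}{w_\ell}\le \frac{p(X_i)}{w_i}$ for every $i\in N\setminus Q$. By property 2, both $\ell$ and $i$ hold their initial bundles at initial prices, so it suffices to compare initial spendings. Write $i\in N_{r'}$ with $r'\geq r^*$; we then get
\[
\frac{p^0(X^0_i)}{w_i}\ge \frac{p^0(X^0_{\ell_{r^*}})}{w_{\ell_{r^*}}}.
\]
If $r'=r^*$, this holds because $\ell_{r^*}$ is the least spender of $N_{r^*}$. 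If $r'>r^*$, it holds because $\ell_{r^*}$ was the minimum over $N_{\geq r^*}$ when Algorithm~\ref{alg: initial allocation} selected it. Finally, $\ell$ is the minimizer within $N_{r^*}$, so $\frac{p(X_\ell)}{w_\ell}\le\frac{p(X_{\ell_{r^*}})}{w_{\ell_{r^*}}}\le\frac{p(X_i)}{w_i}$.

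The main obstacle I expect is property 2's claim that an agent outside $Q$ never lost a good, in the case where the big spender that removed goods from $i$ sat in an unraised group. This has to be handled carefully via Invariant~\ref{inv: monotone Q} and the timing of when $b^{t'}$ enters $Q$. The termination-by-completed-loop case also needs a separate, simpler check.
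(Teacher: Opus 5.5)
Your argument is correct and follows the paper's route. You obtain $N_{<r^*}\subseteq Q$ from Lemma~\ref{lem: raised groups are in Q}, you use Lemma~\ref{lem: some unraised groups are in Q} with Invariant~\ref{inv: monotone Q} to show an agent outside $Q$ was never a big spender and so keeps $X_i^0$ at initial prices, and you compare initial spendings through the representative $\ell_{r^*}$, handling the current minimizer $\ell$ versus $\ell_{r^*}$ more carefully than the paper does. One slip: the \textbf{if}-test $k\cdot p(X_\ell)/w_\ell\geq\hat{p}_b$ does not give $\ell\notin Q$, and $\ell\in Q$ is in fact possible. Your final chain does not need that claim, provided you bound $p(X_{\ell_{r^*}})$ by $p^0(X^0_{\ell_{r^*}})$ using Invariant~\ref{inv: raised group}(5) for the unraised group $N_{r^*}$, rather than applying property~2 to $\ell$.
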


\begin{proof}
	\begin{enumerate}
		\item We have $N_{< r^*}\subseteq Q$ by Lemma \ref{lem: raised groups are in Q}.
		Let $s$ be the minimum index of the group among $N_{\geq r^*}$ that has ever contained a big spender. Then, $N_{\geq s}\subseteq Q$ by Lemma \ref{lem: some unraised groups are in Q}. Thus, $N_{<r^*}\cup N_{\geq s}\subseteq Q$.
		\item Since $N_{< r^*}\subseteq Q$, we have $N\setminus Q$ is unraised. Thus, for any $i\in N\setminus Q$, $i$ never receives any new good. Besides, $i$ never serves as a big spender, since otherwise $i$ will belong to $Q$ by Lemma \ref{lem: some unraised groups are in Q}. A contradiction. As a result, $i$ never loses any good. The previous argument implies that $X_i=X^0_i$ and $p(e)=p^0(e)$ for any $e\in X_i$.
		\item For any $i\in N\setminus Q$, we have $\frac{p(X_i)}{w_i}=\frac{p^0(X^0_i)}{w_i}\geq \frac{p^0(X^0_{\ell})}{w_{\ell}}= \frac{p(X_{\ell})}{w_{\ell}}$.
		The two equalities are due to property 2. The inequality holds since $\ell$ is the least spender in group $N_{r^*}$, the lowest group in $N\setminus Q$.
	\end{enumerate}
\end{proof}

%
%

\begin{lemma}
	\label{lem: MBB ratios}
	For any $i\in N\setminus Q$ and $j\in Q$, $\alpha_i\geq k\cdot \alpha_{i,e}$ for any $e\in X_j$.
\end{lemma}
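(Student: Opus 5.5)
The plan is to fix $i\in N\setminus Q$ and reduce the claim to an upper bound on $\alpha_{i,e}$. By property 2 of Lemma \ref{lem: properties of reallocation algorithm}, $i$ has never been raised and still holds $X_i=X^0_i$, so $i\in N_{\geq r^*}$. By Invariant \ref{inv: raised group}(5), $\alpha_i=1$. It therefore suffices to show $\alpha_{i,e}\leq 1/k$ for every $e\in X_j$ with $j\in Q$. Let $N_r$ be the group of $i$; by property 1 of Lemma \ref{lem: properties of reallocation algorithm}, $r^*\leq r<s$.

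I would split the goods $e\in X_j$ by where they were initially. In the first case, $e\in\cup_{j'\in N_{<r^*}}X^0_{j'}$. Then Invariant \ref{inv: raised group}(2) gives $\alpha_{i,e}\leq 1/k$ directly, whoever holds $e$ now.

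In the second case, $e\in X^0_{j'}$ with $j'\in N_{\geq r^*}$, so $e$ is unraised. If at least one price rise has occurred, Lemma \ref{lem: price and transferred good} gives $p(e)=k$, since unraised goods keep a price in $\{k,k^2\}$ and an unraised good of price $k^2$ cannot exist. Hence $\alpha_{i,e}=v_i(e)/k$, and it remains to show $v_i(e)=1$. By property 2 of Lemma \ref{lem: properties of initial allocation}, this holds as soon as $j'$ lies in a group of index strictly larger than $r$. If no price rise occurred ($r^*=1$ and the algorithm stops immediately), then $X=X^0$ and all prices are initial. A good of a group higher than $N_r$ has $v_i(e)=1$; it is not in $M^-$, because $M^-$ lies in $N_1$ by Lemma \ref{lem: properties of initial allocation}. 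So it lies in $M^+$, has price $k$, and again $\alpha_{i,e}=1/k$.

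So the core step is to show that every unraised good held by an agent of $Q$ originated in a group of index $\geq s>r$. I would prove this by induction over transfer rounds, using the invariant that unraised goods only move from a big spender to a raised least spender. If $j$ is unraised, Invariant \ref{inv: raised group}(5) gives $X_j\subseteq X^0_j$, so $j'=j$ and we need $j\in N_{\geq s}$. If $j$ is raised, any unraised good in $X_j$ arrived through a chain of transfers. Each transfer from a raised $b$ uses a good of $X_b\setminus X^0_b$, so by induction along this chain the good first left its original owner $j'$ at a round where $j'\in U$ was the big spender. The group of $j'$ has since stayed unraised, because $e$ is unraised. It has also contained a big spender, so by the definition of $s$ in the proof of Lemma \ref{lem: properties of reallocation algorithm}, its index is at least $s$. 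Lemma \ref{lem: transferred good} supplies the required provenance statement at each step of the chain.

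The main obstacle is the case where $j\in Q$ is unraised but lies in a group $N_{r'}$ with $r^*\leq r'<s$, possibly even $r'=r$. In that case the bound $v_i(e)=1$ is not automatic. I would first try to show that this case cannot occur: within $N_{r^*},\ldots,N_{s-1}$, none of whose groups ever contained a big spender, membership in $Q$ should be all-or-nothing. Failing that, the lemma should be read and used with $Q$ replaced by $N_{<r^*}\cup N_{\geq s}$, which is exactly the set produced in property 1 of Lemma \ref{lem: properties of reallocation algorithm}. The argument above covers that set completely.
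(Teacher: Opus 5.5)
Your proposal does not prove the statement, but the case you leave open cannot be closed: the lemma is false as written, so your first plan (showing that $Q$ is all-or-nothing on $N_{r^*},\ldots,N_{s-1}$) must fail. An agent $j$ in such a group keeps $X_j=X^0_j$ at initial prices, so membership in $Q$ is just the threshold $p^0(X^0_j)/w_j\geq\hat{p}_b$, and agents of one group can fall on both sides of it.

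Concretely, take $k=2$, three agents with $w_i=1/3$, and goods $g_1,\ldots,g_6$ with $v_1(g_1)=v_1(g_2)=v_2(g_2)=v_2(g_3)=v_3(g_4)=v_3(g_5)=v_3(g_6)=2$ and all other values $1$. Start from the welfare-maximizing allocation $X_1=\{g_1\}$, $X_2=\{g_2,g_3\}$, $X_3=\{g_4,g_5,g_6\}$, with all prices $2$. The only MBB edge is $1\to 2$. The loop of Algorithm~\ref{alg: initial allocation} never fires because $p(X_2\setminus\{e\})/w_2=6=p(X_1)/w_1$, and the groups are $N_1=\{1,2\}$, $N_2=\{3\}$. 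Algorithm~\ref{alg: WEFX and fPO} finds $\ell=1$, $b=3$ and $k\cdot p(X_1)/w_1=12=\hat{p}_3$, so it returns immediately. Now $Q=\{2,3\}$ and $1\notin Q$, yet $g_2\in X_2$ has $\alpha_{1,g_2}=1$, so $\alpha_1=1<2=k\cdot\alpha_{1,g_2}$. The other welfare-maximizing start, with $g_2\in X_1$, fails the same way with agents $1$ and $2$ swapped.

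The paper's proof hides this in the step ``if $j$ is unraised, then $r<r'$''. Property~1 of Lemma~\ref{lem: properties of reallocation algorithm} does not give this, since it only asserts $N_{<r^*}\cup N_{\geq s}\subseteq Q$. The paper's raised case is also incomplete: Invariant~\ref{inv: raised group}(2) covers only goods of $\cup_{j'\in N_{<r^*}}X^0_{j'}$, not the transferred goods in $X_j\setminus X^0_j$. Your provenance argument is what fills that hole, with one correction. The origin group still being unraised does not follow from $e$ being unraised, since a price rise only touches current holdings; it follows from Invariant~\ref{inv: least spender}. More directly, Lemma~\ref{lem: some unraised groups are in Q} and Invariant~\ref{inv: monotone Q} put the origin group and all higher groups into $Q$, which forces its index above $r$.

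So commit to the fallback: for $j\in N_{<r^*}\cup N_{\geq s}$ the inequality holds, and your case analysis proves it. Two refinements.

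First, the bad range is narrower than you say. If $j\in Q$ is unraised in $N_{r'}$ with $r<r'<s$, then $X_j\subseteq X^0_j$, $v_i(e)=1$ by Lemma~\ref{lem: properties of initial allocation}, and $p(e)=k$ as in your second case. Only $r^*\leq r'\leq r$ is genuinely bad.

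Second, for the WEFX theorem that case needs no MBB-ratio bound. Such a $j$ was never a big spender in a transfer round, since its group index is below $s$. Hence $X_j=X^0_j$ at initial prices, and $i$ is pWEFX toward $j$ directly. If $r'=r$ this follows from the group invariant. If $r'<r$ it follows from $p^0(X^0_i)/w_i\geq p^0(X^0_{\ell_r})/w_{\ell_r}\geq p^0(X^0_{\ell_{r'}})/w_{\ell_{r'}}\geq\hat{p}^0_j=\hat{p}_j$. This is the paper's own ``$r\geq r'$'' case, which should be applied to every untouched unraised $j$ rather than only to $j\notin Q$. With that change, Theorem~\ref{thm: WEFX and fPO} is unaffected.
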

\begin{proof}
	For any $i\in N\setminus Q$ and $j\in Q$, assume $i\in N_r$ and $j\in N_{r'}$. By Lemma \ref{lem: raised groups are in Q}, $i$ is unraised. Then, $\alpha_i=1$ by Invariant \ref{inv: raised group}.
	If $j$ is unraised, then $r<r'$ by Lemma \ref{lem: properties of reallocation algorithm}. Thus, $\alpha_{i,e}=1/k$ for any $e\in X_j$ by Invariant \ref{inv: raised group}.
	If $j$ is raised, then for any $e\in X_j$, $\alpha_{i,e}\leq 1/k$ by Invariant \ref{inv: raised group}.
	Thus, in both cases, $\alpha_i\geq k\cdot \alpha_{i,e}$ for any $e\in X_j$.
\end{proof}

\begin{lemma}
	Algorithm \ref{alg: WEFX and fPO} outputs a WEFX and fPO allocation. 
\end{lemma}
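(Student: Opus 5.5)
fPO follows immediately from Invariant \ref{inv: equilibrium} and Lemma \ref{lem: MBB_to_fPO}. The output is always an equilibrium, so its allocation is fPO.

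For WEFX we fix agents $i,j$ and a good $e\in X_j$, and split on whether $i\in Q$. If $i\in Q$, then $\frac{p(X_i)}{w_i}\geq \hat p_b\geq \hat p_j$, since $b$ maximizes $\hat p$. So $i$ is pWEFX toward every $j$, and Lemma \ref{lem: pWEFX implies WEFX} gives WEFX. This case covers all raised groups and the groups $N_{\geq s}$ from Lemma \ref{lem: properties of reallocation algorithm}.

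Now let $i\in N\setminus Q$, and split on $j$.

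\emph{Case $j\in N\setminus Q$.} By Lemma \ref{lem: properties of reallocation algorithm}, $i$ and $j$ both keep their initial bundles at initial prices, so the sub-configuration on $N\setminus Q$ is unchanged from $(X^0,p^0)$. If $i,j$ lie in the same group, Invariant \ref{inv: pWEFX} applies. Otherwise I use Lemma \ref{lem: properties of initial allocation}. If $i\in N_r$, $j\in N_{r'}$ with $r<r'$, then all goods in $X_j$ have value $1$ to $i$, while $v_i(X_i)=p(X_i)\geq p(X_\ell)\cdot w_i/w_\ell$ (Lemma \ref{lem: properties of reallocation algorithm}(3), $\alpha_i=1$). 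I would bound $v_i(X_j\setminus\{e\})=|X_j|-1\leq \hat p_j/k\leq \hat p_b/k\cdot w_j$, using that goods outside $N_1$ have price $k$. Combined with the termination condition $k\,p(X_\ell)/w_\ell\geq\hat p_b$, this gives $v_i(X_i)/w_i\geq v_i(X_j\setminus\{e\})/w_j$. If $r>r'$, I would argue as in the next case, by showing $j\in Q$ or using termination together with MBB. In fact the cleanest route is to use the same bound for all $j$: $v_i(X_j\setminus\{e\})\leq \alpha_{i}\,p(X_j\setminus\{e\})$ in general, and a factor-$k$ gain whenever $\alpha_{i,e'}\leq\alpha_i/k$.

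\emph{Case $j\in Q$.} Here Lemma \ref{lem: MBB ratios} gives $\alpha_{i,e'}\leq \alpha_i/k$ for all $e'\in X_j$. Then
\[ \frac{v_i(X_j\setminus\{e\})}{w_j}\leq \frac{\alpha_i}{k}\cdot\frac{p(X_j\setminus\{e\})}{w_j}\leq \frac{\alpha_i}{k}\hat p_b\leq \alpha_i\frac{p(X_\ell)}{w_\ell}\leq \alpha_i\frac{p(X_i)}{w_i}=\frac{v_i(X_i)}{w_i}. \]
This uses the termination condition, Lemma \ref{lem: properties of reallocation algorithm}(3), and the equilibrium property. One subtlety: if the algorithm finishes the for loop without the early return, then the last group also needs handling. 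I expect all groups then lie in $Q$, or $N_R$ alone remains and is handled by the same-group pWEFX argument.

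The main obstacle is the case $i,j\in N\setminus Q$ in different unraised groups. There the clean factor-$k$ argument is unavailable when $j$ is in a lower group than $i$, and $i$ may value $j$'s goods at $k$. I expect to resolve it by noting that Lemma \ref{lem: properties of initial allocation}(2) forbids $\MBB$-value-$k$ goods only one way. For a lower group $j$, I would compare directly through initial pWEFX of representatives: $j$'s initial spending minus one good is at most $\ell_{r'}$'s, and $\ell_{r'}$'s spending is at most $\ell$'s, since $\ell$ is least in $N\setminus Q$. I need to check this carefully, possibly via Lemma \ref{lem: i has a path to j} applied within group $N_{r'}$.
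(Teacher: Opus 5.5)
\textbf{There is a genuine gap, in exactly the case you flag as the main obstacle.} The rest of your structure is the paper's own: fPO from Invariant~\ref{inv: equilibrium} and Lemma~\ref{lem: MBB_to_fPO}; agents in $Q$ are pWEFX; and for $i\in N\setminus Q$ the factor-$k$ chain handles $j\in Q$ and $j\notin Q$ in a higher group. (Your counting version of the latter is fine up to a dropped factor $w_j$.)

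The failing case is $i\in N_r$, $j\in N_{r'}$, both outside $Q$, with $r>r'$. Your chain
\[ \hat p_j\le p(X_{\ell_{r'}})/w_{\ell_{r'}}\le p(X_\ell)/w_\ell\le p(X_i)/w_i \]
breaks in the middle. That $\ell$ is least in $N\setminus Q$ gives $p(X_\ell)/w_\ell\le p(X_{\ell_{r'}})/w_{\ell_{r'}}$, which is the opposite inequality. More fundamentally, no argument routed through $\ell$ and the termination condition can work here. Those only give $p(X_i)/w_i\ge \hat p_j/k$, and, as you note, $i$ may value $j$'s goods at $k$, so there is no factor-$k$ slack. ``Showing $j\in Q$'' is also ruled out by the case hypothesis.

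The missing idea is in the group construction of Algorithm~\ref{alg: initial allocation}. There $\ell_{r'}$ is the least spender over the whole remaining set $N'=N_{\ge r'}$, not merely over $N_{r'}$. Hence $p^0(X^0_i)/w_i\ge p^0(X^0_{\ell_{r'}})/w_{\ell_{r'}}$ for every $i\in N_{\ge r'}$. By Lemma~\ref{lem: properties of initial allocation}(4), $p^0(X^0_{\ell_{r'}})/w_{\ell_{r'}}\ge \hat p^0_j$. Since $i,j\notin Q$ keep their initial bundles at initial prices (Lemma~\ref{lem: properties of reallocation algorithm}(2)), you get $p(X_i)/w_i\ge \hat p_j$ directly. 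So $i$ is pWEFX, hence WEFX, toward $j$ with no factor $k$ needed, and this also covers $r=r'$.

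Your other loose end, where the \textbf{for} loop finishes without the early return, is left implicit in the paper's argument too. Your guess that then $Q=N$ is correct. Processing $N_{R-1}$ must involve at least one transfer round. Its big spender either lies in $N_R$, or is raised and, by Lemma~\ref{lem: transferred good}, holds a good originally from $N_R$. Either way some agent of $N_R$ was a big spender while unraised. Then $N_R\subseteq Q$ by Lemma~\ref{lem: some unraised groups are in Q}, and this persists by Invariant~\ref{inv: monotone Q}.
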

\begin{proof}
	We first show that agents in $Q$ are pWEFX, and hence WEFX by Lemma \ref{lem: pWEFX implies WEFX}. To see this, for any $i\in Q$ and $j\in N$, we have $\frac{p(X_i)}{w_i}\geq \hat{\p}_b\geq \hat{\p}_j$.
	The inequalities holds by the definitions of $Q$ and $b$, respectively.
	
	We next show that agents in $N\setminus Q$ are WEFX. Consider any $i\in N\setminus Q$ and $j\in N$. Assume that $i\in N_r,j\in N_{r'}$.
	We prove the lemma by considering two cases based on the state of $j$.
	
	If $j\in N\setminus Q$ and $r\geq r'$, we show that $i$ is pWEFX and hence WEFX toward $j$ by Lemma \ref{lem: pWEFX implies WEFX}.
	This is because it holds that $\frac{p(X_i)}{w_i}=\frac{p^0(X^0_i)}{w_i}\geq \frac{p^0(X^0_{\ell_{r'}})}{w_{\ell_{r'}}}\geq \hat{\p}^0_{b}=\hat{\p}_b$.
	The two equalities are due to Lemma \ref{lem: properties of reallocation algorithm}. The first inequality is due to $r>r'$ and the construction of groups. The second inequality holds since $N_{r'}$ is pWEFX initially by Lemma \ref{lem: properties of initial allocation}.
	
	If $j\in Q$ or $j\in N\setminus Q$ but $r<r'$, we have $\frac{p(X_i)}{w_i}\geq \frac{p(X_{\ell})}{w_{\ell}} \geq \frac{\hat{\p}_b}{k}\geq \frac{\hat{\p}_j}{k}$.
	The first inequality holds since $\ell$ is the least spender in $N\setminus Q$ by Lemma \ref{lem: properties of reallocation algorithm}. The second is due to termination condition. The last inequality holds since $b$ is the big spender. We next claim that $\alpha_{i}\geq k\cdot \alpha_{i,e}$ for any $e\in X_b$. This claim holds by Lemma \ref{lem: MBB ratios} if $j\in Q$ and by Lemma \ref{lem: properties of reallocation algorithm} and Lemma \ref{lem: properties of initial allocation} if $j\in N\setminus Q$ but $r<r'$.
	Assume that $e'\in X_j$ satisfies $\hat{\p}_j=\frac{\p(\x_j\setminus\{e'\})}{w_j}$. We have
	
	\[ \frac{v_i(\x_i)}{w_i} =\alpha_i \cdot \frac{\p(\x_i)}{w_i}\geq \frac{\alpha_i}{k}\cdot \frac{\p(\x_j\setminus\{e'\})}{w_j}
	\geq\sum_{e\in\x_j\setminus\{e'\}}\frac{v_i(e)}{p(e)}\cdot \frac{p(e)}{w_j}=\frac{v_i(\x_j\setminus\{e'\})}{w_j}. \]
	Finally, by Invariant \ref{inv: equilibrium}, $(X,p)$ is an equilibrium.
	Thus, $(X,p)$ is WEFX and fPO.

\end{proof}

\begin{lemma}
	Algorithm \ref{alg: WEFX and fPO} terminates in $O(\min\{k,m\}n^2m^2)$ time.
\end{lemma}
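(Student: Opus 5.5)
The running time splits into two parts. The first is the cost of Algorithm \ref{alg: initial allocation}, which is $O(\min\{k,m\}n^2m^2)$ by the previous lemma. The second is the cost of the reallocation phase. That phase has at most $R-1\le n$ price-rise rounds, each costing $O(m)$. Each transfer round needs $O(m)$ work to recompute $\ell$ and $b$, since all spendings and the values $\hat p_i$ can be maintained incrementally. It therefore suffices to bound the total number of transfer rounds by $O(\min\{k,m\}\,nm)$, or even just by a polynomial that is dominated by this.

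To count the transfer rounds, I use the invariants together with Lemma \ref{lem: price and transferred good}. Every transferred good $e$ has price $k$ and is unraised. In addition, by Lemma \ref{lem: transferred good} and Invariant \ref{inv: raised group}, it is a good that was initially held by an unraised group $N_{>r}$. The key claim is that \emph{each good is transferred at most twice}.

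\begin{enumerate}
\item The first transfer moves the good from its original unraised owner $b\in U$ to the least spender $\ell\in N_r$ of the group currently being processed.
\item A second transfer can only take it from a raised agent $b\notin U$, which holds it in $X_b\setminus X_b^0$, to a least spender $\ell'$ of a later group $N_{r'}$ with $r'>r$.
\end{enumerate}

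After that second move, $\ell'$ is raised with $e\in X_{\ell'}\setminus X^0_{\ell'}$. Lemma \ref{lem: transferred good} applied later says such goods come from $N_{>r''}$ for the currently processed group $N_{r''}$. So a chain of moves is possible in principle, and I need a finer argument.

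The fallback, which is the main route I will actually take, is a potential argument. By Invariant \ref{inv: big spending}, the value $\hat p_b$ is non-increasing. By Invariant \ref{inv: monotone Q}, the set $Q$ only grows. While $N_r$ is being processed, each transfer adds price $k$ to $\ell$'s spending. Within one processing phase, the spending of each agent in $N_r$ only increases, because agents in $N_r$ are never big spenders: they are raised and not yet in $Q$ before the loop ends, and Invariant \ref{inv: least spender} applies. Each agent's weighted spending is bounded above by $\hat p_b^{\,0}\le km/w_{\min}$. The transfers therefore do not fit a pure price count with an integral bound; instead I count utilities, as in the proof for Algorithm \ref{alg: initial allocation}. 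Since $\alpha_\ell=1/k$ and the good has $p(e)=k$, each transfer raises $v_\ell(X_\ell)$ by exactly $1$. Moreover $v_\ell(X_\ell)$ takes at most $\min\{km,m^2\}$ distinct values, because it has the form $m_1+km_2$.

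The important point is that an agent of $N_r$ gains goods only while $N_r$ is being processed. Before that phase it is unraised and gains nothing, by Invariant \ref{inv: raised group}(5). Afterwards it only loses goods as a big spender, and it never gains again, because the least spenders come from later groups. Within its own phase it never loses goods. So its utility strictly increases through its receiving rounds, and each agent receives at most $\min\{km,m^2\}$ goods. Every transfer round has exactly one receiver, so there are at most $n\min\{k,m\}m$ transfer rounds.

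The main obstacle is making rigorous the step that an agent of $N_r$ never loses goods during its own phase and never receives goods outside it. I will derive this from Invariant \ref{inv: least spender} and Observation \ref{obs: l and b are not in the same group}. Combining the pieces, there are $O(\min\{k,m\}nm)$ rounds at $O(nm)$ each, which gives the stated $O(\min\{k,m\}n^2m^2)$ bound.
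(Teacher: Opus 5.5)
Your final argument is correct and is essentially the paper's. The cost of Algorithm~\ref{alg: initial allocation} dominates, and transfers are bounded because during the phase of $N_r$ only the least spender in $N_r$ receives goods, while by Observation~\ref{obs: l and b are not in the same group} no agent of $N_r$ loses any; so the step you call the main obstacle is immediate. Your stated justification for it (agents of $N_r$ are ``not yet in $Q$'') is inaccurate: the big spender is always in $Q$, and agents of $N_r$ can already be in $Q$ during the phase, so drop that reason and rely on the Observation alone.

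The paper simply counts the goods held by $N_r$. This count grows by one per transfer, giving at most $m$ transfers per phase. Your utility potential with its $\min\{km,m^2\}$ count, and the abandoned ``at most twice'' claim, are therefore unnecessary detours. Your looser $O(\min\{k,m\}nm)$ bound on transfer rounds still fits within the stated running time.
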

\begin{proof}
	First note that an iteration of the \textbf{while} loop takes $O(\log m)$ time, since it takes $O(1)$ time to transfer a good and takes $O(\log m)$ time to update the statuses of the least and big spenders.
	The \textbf{while} loop has at most $m$ iterations in total, since in each iteration, the number of goods in $N_r$ increases by $1$.
	Thus, the \textbf{while} loop takes $O(m\log m)$ times.
	Besides, it is easy to see that the \textbf{for} loop has at most $n$ iterations. Thus, the \textbf{for} loop takes $O(nm\log m)$ time.
	Since Algorithm \ref{alg: WEFX and fPO} invokes Algorithm \ref{alg: initial allocation} and the latter terminates in $O(\min\{k,m\}n^2m^2)$ time.
	Algorithm \ref{alg: WEFX and fPO} also terminates in $O(\min\{k,m\}n^2m^2)$ time.
\end{proof}

To conclude, we have the following theorem.
\begin{theorem}[Restatement of Theorem \ref{thm: WEFX and fPO}]
	Algorithm \ref{alg: WEFX and fPO} returns a WEFX and fPO allocation for bivalued goods and terminates in $O(\min\{k,m\}n^2m^2)$ time. 
\end{theorem}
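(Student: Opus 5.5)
The theorem combines results already established in this section, so the proof is mostly assembly; the real work is checking that the invariants the earlier lemmas assumed actually hold. I would proceed in three steps.

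\textbf{Step 1: output quality.} I would invoke the lemma stating that Algorithm \ref{alg: WEFX and fPO} outputs a WEFX and fPO allocation. That lemma relies on Invariant \ref{inv: equilibrium} to get fPO via Lemma \ref{lem: MBB_to_fPO}. It also relies on Lemmas \ref{lem: properties of reallocation algorithm} and \ref{lem: MBB ratios} for WEFX. There is one boundary case to settle. If the \textbf{for} loop runs through $r=R-1$ without returning, then every group except possibly $N_R$ is raised and lies in $Q$ by Lemma \ref{lem: raised groups are in Q}. Group $N_R$ is pWEFX internally by Invariant \ref{inv: pWEFX}. I would handle this case either by running the termination test once more for $r=R$ or by arguing directly from Lemma \ref{lem: some unraised groups are in Q}, so that the WEFX argument for agents outside $Q$ still applies.

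\textbf{Step 2: running time.} I would invoke the final running-time lemma. It combines the $O(\min\{k,m\}n^2m^2)$ bound for Algorithm \ref{alg: initial allocation} with the bound on the reallocation phase. For the latter, each transfer moves a good into the current group $N_r$, and no good ever leaves a raised group's initial bundle; this follows from Invariant \ref{inv: raised group}, part 4, and Lemma \ref{lem: transferred good}. Hence there are at most $m$ transfers per group and at most $nm$ transfers in total. This is dominated by the cost of the initial phase.

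\textbf{Step 3: the invariants, which I expect to be the main obstacle.} Every lemma above was proved assuming the invariants hold at the start of each round, so they must now be checked by induction over rounds. In round $0$ they hold trivially from Lemma \ref{lem: properties of initial allocation}.

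\emph{Price-rise round.} Multiplying all prices in $N_r$'s bundles by $k$ drops those agents' $\alpha$ to $1/k$ and keeps MBB sets consistent. This uses Lemma \ref{lem: properties of initial allocation}, part 2, which says higher groups value lower groups' goods at $1$, giving parts 2–3 of Invariant \ref{inv: raised group}. Group pWEFX is preserved because prices within $N_r$ scale uniformly. Invariant \ref{inv: least spender} holds by induction, since agents of unraised groups become big spenders only in ways that Lemma \ref{lem: some unraised groups are in Q} controls.

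\emph{Transfer round.} Lemma \ref{lem: price and transferred good} gives that the transferred good lies in $\MBB_\ell$ and has price $k$, so equilibrium is preserved. The spending of $b$ does not increase, which gives Invariant \ref{inv: big spending}. Then $Q$ is monotone because $\hat p_b$ is non-increasing, $\ell$'s spending rises, and $b$ loses only a good of price $k$, so it stays above the old threshold when it is in $Q$.

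The hardest part will be group pWEFX under transfers. Agent $\ell$ gains a good of price $k$, while $b$ loses one. I would use that $\ell$ is the least spender in its group and argue via Lemma \ref{lem: i has a path to j}-style reasoning that removing any good from $\ell$'s new bundle leaves weighted spending at most the old value plus a controlled amount, which is below other group members' spending. For $b$'s group, the reduced $\hat p_b$ can only help.
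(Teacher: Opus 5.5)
Your overall plan is the paper's own: the theorem is obtained by combining the WEFX/fPO lemma, the running-time lemma, and round-by-round maintenance of Invariants \ref{inv: equilibrium}--\ref{inv: monotone Q}. In Step 2, the operative reason for at most $m$ transfers per group is Observation \ref{obs: l and b are not in the same group}, not the initial-bundle property you cite: $b\notin N_r$, so the number of goods held by $N_r$ strictly increases.

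Your boundary remark is a fair point that the paper passes over, since there is no \textbf{return} after the \textbf{for} loop. It does resolve. Let $\ell$ be the least spender of $N_R$ at that moment. Then one of the following holds.
\begin{itemize}
\item $b\in N_R$, and group pWEFX applies.
\item Some $N_R$ agent was a big spender in a transfer round.
\item Some good of $X_b\setminus X_b^0$ left its initial owner while that owner was unraised (raised agents only give away non-initial goods).
\item Otherwise $X_b=X_b^0$ and $X_\ell=X_\ell^0$, and the chain in Lemma \ref{lem: transferred good} gives $\hat{p}_b\le k\cdot p(X_\ell)/w_\ell$.
\end{itemize}
In the second and third cases, Lemma \ref{lem: some unraised groups are in Q} and Invariant \ref{inv: monotone Q} give $N_R\subseteq Q$. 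So the $r=R$ test always passes. The genuine gap is in Step 3, where several of your justifications would not go through.

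\textbf{Big Spending Invariant.} It does not follow from ``$b$'s spending does not increase''; the danger is other agents' $\hat{p}$ going up.
\begin{itemize}
\item In a transfer round only $\ell$'s $\hat{p}$ rises, and you need $\hat{p}^{t+1}_{\ell}=p(X^t_{\ell})/w_{\ell}<\hat{p}^t_{b}$.
\item In a price-rise round every $\hat{p}_i$ with $i\in N_r$ is multiplied by $k$. What keeps it below $\hat{p}_b$ is $k\hat{p}_i\le k\cdot p(X_\ell)/w_\ell<\hat{p}_b$, i.e., group pWEFX plus the \emph{failed relaxed test}.
\end{itemize}
Your Step 3 never uses the factor $k$ in the test, which is precisely the paper's modification. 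Invariant \ref{inv: monotone Q} across price rises, which you do not treat, rests on the same inequality.

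\textbf{Group pWEFX under a transfer.} In $b$'s group the risk is not $\hat{p}_b$ but $b$'s own spending dropping below some group-mate's $\hat{p}$. The paper closes this with $p(X^{t+1}_b)/w_b=\hat{p}^t_b\ge\hat{p}^{t+1}_{b^{t+1}}\ge\hat{p}^{t+1}_i$, so Invariant \ref{inv: big spending} must be established first. In $\ell$'s group, ``old value plus a controlled amount'' is too weak, because group-mates may tie with $\ell$. You need $\hat{p}^{t+1}_\ell=p(X^t_\ell)/w_\ell$ exactly.

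Both identities come from one fact your sketch does not exploit. After the first price rise every price lies in $\{k,k^2\}$, and the moved good has price $k$ (Lemma \ref{lem: price and transferred good}). So it is a minimum-price good in both $X^t_b$ and $X^{t+1}_\ell$. No path argument in the style of Lemma \ref{lem: i has a path to j} is involved.

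Finally, Lemma \ref{lem: properties of initial allocation}(2) says \emph{lower} groups value \emph{higher} groups' goods at $1$, not the reverse. It is that direction which yields part 3 of Invariant \ref{inv: raised group}.
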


\paragraph{Maintenance of Invariants.}
It is easy to see that all the invariants hold at the beginning of round $0$.
Assume that they hold at the beginning of round $t$, we now show that they are maintained at the end of round $t$.

\begin{lemma}
	The Equilibrium Invariant (Invariant \ref{inv: equilibrium}) are maintained at the end of round $t$.
\end{lemma}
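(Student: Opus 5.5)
We split on the type of round $t$. If $t$ is a transfer round, the prices stay fixed and the only change is that a single good $e^t$ moves from $b^t$ to $\ell^t$. By the inductive hypothesis, $(X^t,p^t)$ is an equilibrium. The MBB ratios $\alpha_i$ depend only on prices, so they are unchanged. Every agent other than $\ell^t$ ends the round with a subset of its previous bundle, so it still holds only MBB goods. For $\ell^t$ itself, Lemma \ref{lem: price and transferred good} gives $e^t\in\MBB_{\ell^t}$, which we may invoke because the invariants hold at the start of round $t$. Hence $X^{t+1}_{\ell^t}\subseteq \MBB_{\ell^t}$ and $(X^{t+1},p^{t+1})$ is an equilibrium.

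If $t$ is a price-rise round processing $N_r$, the allocation is unchanged. The raised goods are exactly $S=\bigcup_{i\in N_r}X_i^t$, and their prices are multiplied by $k$. For an agent $i\in N_r$, every bang-per-buck ratio $\alpha_{i,e}$ with $e\in S$ is divided by $k$, while ratios for goods outside $S$ stay the same. So it suffices to check that before the rise $\alpha_{i,e}\leq \alpha_i/k$ for every $e\notin S$. Then after the rise the goods in $X_i\subseteq S$ still attain the new maximum $\alpha_i/k$.

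For $i\in N_r$ we argue by Invariant \ref{inv: raised group} with $r^*=r$, using that $i$ is unraised, so $\alpha_i=1$, and $X_i=X_i^0$ by the Least Spender Invariant (Invariant \ref{inv: least spender}). We split the goods outside $S$ into three classes.
\begin{itemize}
\item Goods in $\bigcup_{j\in N_{<r}}X_j^0$ were raised once, and property 2 gives $\alpha_{i,e}\leq 1/k$.
\item Goods in $X_j^0$ for $j\in N_{>r}$ satisfy $\alpha_{i,e}=1/k$ by Lemma \ref{lem: properties of initial allocation}(2), since $v_i(e)=1$ and $p(e)=k$. Here $p(e)=k$ because $M^-\subseteq\bigcup_{N_1}X^0$, with the case $r=1$ handled by noting that for $r=1$ no goods outside $N_1$ have price $1$.
\item Goods in $X_j^0$ for $j\in N_r$ that now sit outside $S$ cannot occur, because $N_r$ agents did not lose goods before the rise.
\end{itemize}
Thus $\alpha_{i,e}\le 1/k=\alpha_i/k$, as needed.

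For an agent $i\notin N_r$, no bundle good of $i$ is raised: raised-group agents hold no $N_r$-initial goods because $N_r$ never lost goods, and agents in $N_{>r}$ only hold subsets of their initial bundles. Raising prices of goods $i$ does not hold can only lower $\alpha_{i,e}$ for those goods, so $\alpha_i$ is unchanged and $X_i$ remains MBB. The main obstacle is the $i\in N_r$ case, specifically justifying that every good outside $S$ had ratio at most $1/k$ for $i$. This relies on the Least Spender Invariant (so that $S=\bigcup_{N_r}X^0$) and on Lemma \ref{lem: properties of initial allocation}(2),(3).
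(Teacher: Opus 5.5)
Your proof is correct and follows the paper's argument. Transfer rounds are handled by Lemma~\ref{lem: price and transferred good}; agents outside $N_r$ are unaffected by the price rise; and for $i\in N_r$ you combine $X_i=X_i^0$ with the Raised Group Invariant to show that $\alpha_i=1/k$ is attained on $X_i$ after the rise. The only difference is that you derive the bound $\alpha_{i,e}\le 1/k$ for goods outside the raised set explicitly, from the start-of-round invariants and Lemma~\ref{lem: properties of initial allocation}, whereas the paper compresses this step into a single citation of Invariant~\ref{inv: raised group}.
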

\begin{proof}
	Assume we are in the $r$-th iteration of the \textbf{for} loop processing group $N_r$.
	For the price-rise round, it is easy to see that for any $j\notin N_{r}$, both $X_j$ and $\MBB_j$ remain unchanged. For any $i\in N_{r}$,
	By Invariant \ref{inv: raised group}, $X_i=X^0_i$, $\alpha_i=1/k$, and $\alpha_{i,e}=1/k$ for any $e\in X_i$.
	Thus, $X_i\subseteq \MBB_i$.
	
	
	For the transfer round, a good $e$ is transferred from $b$ to $\ell$. By Lemma \ref{lem: price and transferred good}, $e\in \MBB_{\ell}$.
	Thus, $(X,p)$ remains an equilibrium.
\end{proof}

\begin{lemma}
	The Group pWEFX Invariant (Invariant \ref{inv: pWEFX}) are maintained at the end of round $t$.
\end{lemma}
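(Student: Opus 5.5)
We must show that if every $N_{r'}$ is pWEFX at the beginning of round $t$, it remains pWEFX at the end. We split according to whether $t$ is a price-rise round or a transfer round. Throughout, we use the induction hypothesis that all invariants hold at the beginning of round $t$.

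\emph{Price-rise round.} Let $N_r$ be the group whose goods are raised. The allocation is unchanged. Groups other than $N_r$ have all their goods' prices unchanged, so their pWEFX property persists. For $N_r$, the Least Spender Invariant (Invariant \ref{inv: least spender}) gives $X_i=X^0_i$ for all $i\in N_r$. Hence every good held by an agent of $N_r$ has its price multiplied by exactly $k$. Then both $p(X_i)/w_i$ and $\hat p_j$ scale by $k$ for all $i,j\in N_r$, and the inequalities $p(X_i)/w_i\ge \hat p_j$ are preserved.

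\emph{Transfer round.} A good $e$ moves from $b$ to $\ell$, with $\ell\in N_r$ and $b\in N_{r'}$, where $r'\neq r$ by Observation \ref{obs: l and b are not in the same group}. Prices do not change, and only $X_\ell$ and $X_b$ change.

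For group $N_r$, agent $\ell$ gains a good, so her spending increases. The only things that could break pWEFX inside $N_r$ are other agents' envy toward $\ell$, i.e.\ whether $p(X_i)/w_i\ge \hat p_\ell$ still holds for $i\in N_r$. We will argue $\hat p^{t+1}_\ell\le p^t(X_\ell)/w_\ell$. By Lemma \ref{lem: price and transferred good}, $p(e)=k$, and after the first price-rise all prices lie in $\{k,k^2\}$. Thus removing a minimum-price good from $X_\ell\cup\{e\}$ removes a good of price $k$, so $\hat p^{t+1}_\ell=(p(X_\ell)+k-k)/w_\ell=p(X_\ell)/w_\ell$. Since $\ell$ is the least spender of $N_r$ at round $t$, $p(X_i)/w_i\ge p(X_\ell)/w_\ell=\hat p^{t+1}_\ell$ for every $i\in N_r$. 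Envy of $\ell$ toward others in $N_r$ only decreases because her spending grew, and pairs not involving $\ell$ are unchanged.

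For group $N_{r'}$, agent $b$ loses a good. This decreases $\hat p_b$, so others' pWEFX toward $b$ is kept. The remaining task is to show $b$ is still pWEFX toward the other agents $j\in N_{r'}$. This is the main obstacle. We have $p^{t+1}(X_b)/w_b\ge \hat p^t_b$, since removing $e$ leaves at least the value after removing a minimum-price good. Also $\hat p^t_b\ge \hat p^t_j$ for all $j$ because $b$ is the big spender, and $X_j$ is unchanged for $j\neq b,\ell$. Therefore $p^{t+1}(X_b)/w_b\ge \hat p^{t+1}_j$, and $b$ remains pWEFX toward all of $N_{r'}$.

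Pairs within any other group are untouched, so all groups are pWEFX at the end of round $t$. The delicate points to double-check are the use of $\hat p$ (the maximum over removals) and the case where $X_\ell$ was empty. In that case $\hat p^{t+1}_\ell=0$, which is trivially fine.
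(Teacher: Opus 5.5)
Your decomposition matches the paper's proof. In a price-rise round, everything inside $N_r$ scales by $k$; you spell out what the paper calls ``clear.'' The Least Spender Invariant is not needed there, because line~\ref{line: raise price} by definition raises every good currently held by $N_r$. In a transfer round, you handle $N_r$ via the least-spender property plus $p(e)=k$, and $N_{r'}$ via the big-spender property. Your $N_r$ half is correct. Your observation that $\hat p^{t+1}_j=\hat p^t_j$ for $j\in N_{r'}\setminus\{b\}$ (the bundle is unchanged and $j\neq\ell$) is a slightly more self-contained substitute for the paper's appeal to the Big Spending Invariant (Invariant~\ref{inv: big spending}).

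The gap is in the step you flag as the main obstacle. You justify $p^{t+1}(X_b)/w_b\ge\hat p^t_b$ by saying that removing $e$ ``leaves at least the value after removing a minimum-price good.'' This is backwards. Removing a minimum-price good leaves the \emph{largest} possible remainder, which is exactly why $\hat p_b$ is a maximum. For an arbitrary $e\in X_b$ you only get $p(X_b\setminus\{e\})/w_b\le\hat p_b$. The inequality is strict whenever $p(e)$ exceeds the minimum price in $X_b$; for example, if $X_b$ has prices $k$ and $k^2$ and $e$ is the $k^2$ good, then $b$ could end up envying others in $N_{r'}$. The inequality you need holds only because the transferred good is a minimum-price good of $X_b$. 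By Lemma~\ref{lem: price and transferred good}, $p(e)=k$, and after the first price-rise every price lies in $\{k,k^2\}$, so $p(X^t_b\setminus\{e\})/w_b=\hat p^t_b$ with equality. This is exactly what the paper invokes at this point, and you already derived it in your $N_r$ paragraph. Once you substitute it for your stated reason, the rest of your argument goes through.
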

\begin{proof}
	Clearly, the price-rise rounds do not affect the invariant. It suffices to consider the transfer rounds.
	In a transfer round $t$, $X^{t+1}_{\ell^{t}}=X^{t}_{\ell^{t}}\cup \{e^t\}, X^{t+1}_{b^{t}}=X^{t}_{b^{t}}\setminus \{e^t\}$, and $X^{t+1}_{i}=X^{t}_{i}$ for all $i\notin\{\ell^t,b^t\}$.
	Assume that $\ell^t\in N_r$ and $b^t\in N_{r'}$. By Observation \ref{obs: l and b are not in the same group}, $r\neq r'$. It suffices to show that each $i\in N_r\setminus\{\ell^t\}$ remains pWEFX toward $\ell^t$, and $b^t$ remains pWEFX toward every $i\in N_{r'}\setminus\{b^t\}$.
	
	For any $i\in N_r\setminus\{\ell^t\}$, we have $\frac{p(X^{t+1}_i)}{w_i}=\frac{p(X^{t}_i)}{w_i}\geq \frac{p(X^{t}_{\ell^t})}{w_{\ell^t}}=\frac{p(X^{t+1}_{\ell^t}\setminus\{e^t\})}{w_{\ell^t}}=\hat{\p}^{t+1}_{\ell^t}$.
	The inequality holds since $\ell^t$ is the least spender in $N_r$. The last equality holds by Lemma \ref{lem: price and transferred good}.
	
	For any $i\in N_{r'}\setminus\{b^t\}$, we have $\frac{p(X^{t+1}_{b^{t}})}{w_t}=\frac{p(X^{t}_{b^{t}}\setminus\{e^t\})}{w_t}=\hat{\p}^t_{b^t}\geq \hat{\p}^{t+1}_{b^{t+1}}\geq \hat{\p}^{t+1}_{i}$.
	The second equality is due to Lemma \ref{lem: price and transferred good}. The first inequality is due to Invariant \ref{inv: big spending}.
\end{proof}

\begin{lemma}
	The Raised Group Invariant (Invariant \ref{inv: raised group}) are maintained at the end of round $t$.
\end{lemma}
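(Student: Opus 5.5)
We argue by induction on rounds, assuming all invariants hold at the beginning of round $t$, and split on whether $t$ is a price-rise round or a transfer round. In both cases we check the five numbered properties of Invariant \ref{inv: raised group} with the appropriate $r^*$.

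\textbf{Price-rise round.} Suppose we are processing $N_r$, so before the round $r^*=r$. After the round we set $r^*=r+1$. By Invariant \ref{inv: least spender}, every $i\in N_r$ still holds $X_i=X^0_i$. The raised goods are therefore exactly $\cup_{j\in N_r}X^0_j$; each was unraised before (property 1) and is now raised once, which gives property 1.

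For property 2, take any agent $i$ and a newly raised good $e$. Before the round $\alpha_{i,e}\le \alpha_i\le 1$, and now the price is multiplied by $k$, so $\alpha_{i,e}\le 1/k$. Goods raised earlier are untouched.

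For properties 3 and 4 applied to $i\in N_r$: before the round $\alpha_i=1$, and every good of $X_i=X^0_i$ is MBB. After the raise these goods have ratio $1/k$. Every unraised good $e\in \cup_{j\in N_{>r}}X^0_j$ has $p(e)=k$ by Lemma \ref{lem: price and transferred good}, or by property 2 of Lemma \ref{lem: properties of initial allocation} when $r=1$, and also $v_i(e)=1$ by that same property 2. Hence $\alpha_{i,e}=1/k$. The goods raised earlier have ratio at most $1/k$. So $\alpha_i=1/k$, which gives properties 3 and 4. For agents in $N_{<r}$, property 3 only concerns the shrinking set $N_{\ge r^*}$, so it is inherited. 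Property 5 for $N_{>r}$ is unchanged, because their MBB ratios of $1$ are attained on their own unraised goods. Raised goods have ratio at most $1/k$, so these agents are unaffected.

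The one subtle point is the case $r=1$, where $M^-$ has price $1$. For $i\in N_1$, before the raise, Observation \ref{obs: MBB set and prices} and $\alpha_i=1$ ensure no issue. After the raise all goods have price in $\{k,k^2\}$, and the ratio computation above still holds because $v_i(e)=1$ for $e\in X^0_j$, $j\in N_{>1}$.

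\textbf{Transfer round.} Prices do not change, so properties 1, 2 and 3 carry over. A good $e$ moves from $b$ to $\ell\in N_r\subseteq N_{<r^*}$ (note $r=r^*-1$). For property 4, $\ell$ only gains, so $X^0_\ell\subseteq X_\ell$ persists. If $b\in N_{<r^*}$, then by line \ref{line: transfer from raised agent} $e\notin X^0_b$, so $X^0_b\subseteq X_b$ persists. MBB ratios do not change because prices are fixed. For property 5: if $b\in N_{\ge r^*}$ it only loses a good, so $X_b\subseteq X^0_b$ still holds. No agent in $N_{\ge r^*}$ receives anything, since $\ell$ is raised.

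The main obstacle is the price-rise case: we must justify $X_i=X^0_i$ for the whole raised group, which relies on Invariant \ref{inv: least spender}, and verify that $\alpha_i$ drops to exactly $1/k$, which requires $p(e)=k$ and $v_i(e)=1$ for higher-group goods.
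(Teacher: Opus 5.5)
Your proof is correct and follows essentially the same approach as the paper's. Both rest on the same ingredients: Invariant~\ref{inv: least spender} to pin the newly raised goods down as exactly $\cup_{j\in N_r}X^0_j$, the initial-equilibrium facts $v_i(e)=1$ and $p(e)=k$ for goods of higher groups, and the fact that a raised agent only gives away goods outside $X^0_b$ while unraised agents never receive goods; the only difference is that the paper checks the invariant property by property, whereas you check it by round type. One small citation fix: for $r=1$, the fact $p(e)=k$ on $\cup_{j\in N_{>1}}X^0_j$ comes from property~3 of Lemma~\ref{lem: properties of initial allocation} ($M^-\subseteq\cup_{i\in N_1}X^0_i$), not from property~2, which only gives $v_i(e)=1$.
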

\begin{proof}
	Clearly, $N_r$ is raised in line \ref{line: raise price} in the $r$-th iteration of the \textbf{for} loop.
	Thus, at the beginning of the $r^*$-th iteration of the \textbf{for} loop, $N_1,\ldots,N_{r^*-1}$ are raised exactly once, and $N_{r^*},\ldots,N_R$ are not raised.
	\begin{enumerate}
		\item For any $r<r^*$, assume that $N_r$ is raised in round $t$. By Invariant \ref{inv: least spender},
		$X^t_j=X^0_j$ for any $j\in N_r$. Thus, only goods in $\cup_{j\in N_{r}} X^0_j$ are raised in this round. The property then follows.
		\item For any $e\in \cup_{j\in N_{< r^*}} X_j^0$, note that $\alpha_{i,e}\leq 1$ in $(X^0,p^0)$ by Lemma \ref{lem: properties of initial allocation}.
		By property 1, $e$ has been raised. Thus, $\alpha_{i,e}\leq 1/k$ after $e$ is raised.
		\item For any $e\in\cup_{j\in N_{\geq r^*}} X_j^0$, $v_i(e)=1$, $p(e)=k$ and hence $\alpha_{i,e}=1/k$ in $(X^0,p^0)$ by Lemma \ref{lem: properties of initial allocation}, since $i\in N_{<r^*}$ is in a lower group than any $j\in N_{\geq r^*}$. By property 1, $e$ is not raised. Thus, $\alpha_{i,e}=1/k$ is preserved.
		\item For all $i\in N_{<r^*}$, $\alpha_i=1/k$ directly follows from properties 2 and 3.
		To show $X_i^0\subseteq X_i$, assume that $i\in N_r$ and $N_r$ is raised in round $t'$. By Invariant \ref{inv: least spender},
		$X^{t''}_i=X_i^0$ for all $t''\leq t'$.
		After round $t'$, when $i$ serves as a least spender, she receives new goods. when $i$ serves as a big spender, she loses a good $e\in X_i\setminus X^0_i$.
		In both cases, $X^0_i\subseteq X_i$ persists.
		\item For all $i\in N_{\geq r^*}$, $\alpha_i=1$ in $(X^0,p^0)$ by Lemma \ref{lem: properties of initial allocation}. Since $i$ is unraised, $i$ can only serves as a big spender and loses goods in $X_i^0$. Thus, $X_i\subseteq X^0_i$ and $\alpha_i=1$ is preserved.
	\end{enumerate}
\end{proof}

\begin{lemma}
	The Big Spending Invariant (Invariant \ref{inv: big spending}) are maintained at the end of round $t$.
\end{lemma}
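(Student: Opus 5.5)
We must show that $\hat{p}^{t}_{b^{t}}\geq \hat{p}^{t+1}_{b^{t+1}}$. We split into price-rise rounds and transfer rounds; in each case we bound $\hat{p}^{t+1}_i$ by $\hat{p}^t_{b^t}$ for every agent $i$.

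\emph{Price-rise round $t$ (processing $N_r$).} Here $X^{t+1}=X^t$, and only prices of goods in $\bigcup_{i\in N_r}X_i$ change. For $i\notin N_r$ nothing changes, so $\hat{p}^{t+1}_i=\hat{p}^t_i\leq \hat{p}^t_{b^t}$. For $i\in N_r$, the Least Spender Invariant (Invariant \ref{inv: least spender}) gives $X^t_i=X^0_i$, and every price in $X_i$ is multiplied by $k$, so $\hat{p}^{t+1}_i=k\,\hat{p}^t_i$. Group $N_r$ is pWEFX (Invariant \ref{inv: pWEFX}), and $\ell^t$ is its least spender. Hence
\[ k\,\hat{p}^t_i\leq k\cdot\frac{p^t(X^t_{\ell^t})}{w_{\ell^t}}<\hat{p}^t_{b^t}, \]
where the strict inequality holds because the algorithm did not return before the raise. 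Thus $\hat{p}^{t+1}_i<\hat{p}^t_{b^t}$ for every $i\in N_r$.

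\emph{Transfer round $t$.} Only $\ell^t$ and $b^t$ change bundles.
\begin{itemize}
\item For $i\notin\{\ell^t,b^t\}$, we have $\hat{p}^{t+1}_i=\hat{p}^t_i\leq\hat{p}^t_{b^t}$.
\item For $b^t$, removing a good cannot increase $\hat{p}$, so $\hat{p}^{t+1}_{b^t}\leq \hat{p}^t_{b^t}$. A cleaner route: by Lemma \ref{lem: price and transferred good} and the fact that all prices lie in $\{k,k^2\}$, we get $\hat{p}^{t+1}_{b^t}\leq p(X^{t+1}_{b^t})/w_{b^t}\leq \hat{p}^t_{b^t}$.
\item For $\ell^t$, the received good $e^t$ has $p(e^t)=k$, which is the minimum possible price by Lemma \ref{lem: price and transferred good}. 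So the minimum-price good of $X^{t+1}_{\ell^t}$ can be taken to be $e^t$, and
\[ \hat{p}^{t+1}_{\ell^t}=\frac{p(X^{t+1}_{\ell^t}\setminus\{e^t\})}{w_{\ell^t}}=\frac{p(X^t_{\ell^t})}{w_{\ell^t}}<\hat{p}^t_{b^t}, \]
using the while-loop condition.
\end{itemize}

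Taking the maximum over all $i$ gives $\hat{p}^{t+1}_{b^{t+1}}\leq\hat{p}^t_{b^t}$.

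The main obstacle is the $\ell^t$ step. It needs that $e^t$ is a minimum-price good in the new bundle, and this relies on Lemma \ref{lem: price and transferred good}, which applies only after the first price-rise round. Transfer rounds occur only after a price rise, so the lemma is available there. In the price-rise case, the subtle point is that the raised agents' bundles are still their initial bundles; Invariant \ref{inv: least spender} supplies exactly this.
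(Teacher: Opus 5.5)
Your proof is correct and follows the paper's argument. In a price-rise round you bound the raised agents using group pWEFX and the failed \textbf{if} test; in a transfer round you use $p(e^t)=k$ (the minimum price) and the loop condition to bound $\hat{p}^{t+1}_{\ell^t}$, and you also make explicit the step for $b^t$ that the paper leaves implicit. The appeal to the Least Spender Invariant in the price-rise case is harmless but unnecessary: line~\ref{line: raise price} multiplies the prices of all goods in the \emph{current} bundles of $N_r$, so $\hat{p}^{t+1}_i=k\,\hat{p}^t_i$ holds by definition.
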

\begin{proof}
	Assume we are in the $r$-th iteration of the \textbf{for} loop processing group $N_r$.
	It suffices to show that $\hat{\p}^{t}_{b^{t}}\geq \hat{\p}^{t+1}_{i}$ for any $i\in N$.
	
	For a price-rise round $t$, we have $k\cdot\frac{p^t(X^{t}_{\ell^{t}})}{w_{\ell^{t}}}<\hat{\p}^{t}_{b^{t}}$ since the \textbf{if} statement failed.
	By Invariant \ref{inv: pWEFX}, $\hat{\p}^{t}_{i}\leq \frac{\p^{t}(X^{t}_{\ell^{t}})}{w_{\ell^{t}}}$ for any $i\in N_r$.
	After the price rise, combining the previous inequalities, we have $\hat{\p}^{t+1}_{i}=k\cdot \hat{\p}^{t}_{i}\leq k\cdot\frac{p^{t}(X^{t}_{\ell^{t}})}{w_{\ell^{t}}}< \hat{\p}^{t}_{b^{t}}$ for any $i\in N_r$.
	The invariant follows by observing that $\hat{\p}^t_i$ remains unchanged for all $i\notin N_r$.
	
	For a transfer round $t$, $X^{t+1}_{\ell^{t}}=X^{t}_{\ell^{t}}\cup \{e^t\}, X^{t+1}_{b^{t}}=X^{t}_{b^{t}}\setminus \{e^t\}$, and $X^{t+1}_{i}=X^{t}_{i}$ for all $i\notin\{\ell^t,b^t\}$. Thus, only the spending of $\ell^t$ increases and it suffices to consider $\ell^t$. By Lemma \ref{lem: price and transferred good}, we have $\hat{p}^{t+1}_{\ell^{t}}=\frac{p(X^{t+1}_{\ell^{t}} \setminus \{e^t\})}{w_{\ell^{t}}}=\frac{p(X^{t}_{\ell^{t}})}{w_{\ell^{t}}} < \hat{p}^{t}_{b^t}$. The inequality is due to the loop condition.
\end{proof}

\begin{lemma}
	The Least Spender Invariant (Invariant \ref{inv: least spender}) are maintained at the end of round $t$.
\end{lemma}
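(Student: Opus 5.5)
We need to show: if $N_r$ is raised in round $t$, then no $i\in N_r$ has been a big spender in rounds $1,\ldots,t$. The consequence $X^{t'}_i=X^0_i$ for $t'\le t$ then follows from Invariant \ref{inv: raised group}. An unraised agent never receives goods, and it loses goods only when it is the big spender.

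The plan is to argue by contradiction via the $Q$ Invariant and Lemma \ref{lem: some unraised groups are in Q}. Suppose some $i\in N_r$ was the big spender $b^{t'}$ at a transfer round $t'<t$. (At the price-rise round $t$ itself the big spender lies outside $N_r$ by Observation \ref{obs: l and b are not in the same group}, or else the \textbf{if} test would have returned.) Since $N_r$ is still unraised at round $t'$, we have $i\in U$. Lemma \ref{lem: some unraised groups are in Q} then gives $N_{\ge r}\subseteq Q^{t'}$. By Invariant \ref{inv: monotone Q}, maintained up to round $t$ by the inductive hypothesis, $N_r\subseteq Q^{t}$. In particular the least spender $\ell^t$ of $N_r$ satisfies $p^t(X^t_{\ell^t})/w_{\ell^t}\ge \hat p^t_{b^t}$.

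This contradicts the fact that round $t$ is a price-rise round. That requires $k\cdot p^t(X^t_{\ell^t})/w_{\ell^t}<\hat p^t_{b^t}$, which is incompatible with $p^t(X^t_{\ell^t})/w_{\ell^t}\ge \hat p^t_{b^t}$ because $k>1$ and spendings are nonnegative. Hence no agent of $N_r$ was ever a big spender before round $t$.

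For the inductive step, note that the statement only changes when a new price-rise round occurs, so this is the only case to handle. For a price-rise round $t$, we apply the argument above to rounds $t'<t$. Round $t$ itself is a price-rise round, so no transfer occurs and nobody is a big spender in it in the transfer sense.

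The main obstacle is making sure the earlier lemmas apply at round $t'$: Lemma \ref{lem: some unraised groups are in Q} needs all invariants at round $t'$, which holds by induction. We also need $Q$-monotonicity across all rounds between $t'$ and $t$, including intermediate price-rise rounds, which again holds by the inductive hypothesis. A further subtlety is that Invariant \ref{inv: monotone Q} is proved later, so the order of maintenance proofs must be arranged so that it holds at rounds $<t$ when used here. Since we only use it at earlier rounds, the induction is not circular.
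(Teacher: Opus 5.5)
Your proposal is correct and follows the paper's proof. Both assume an agent of $N_r$ was an earlier big spender, use Lemma \ref{lem: some unraised groups are in Q} and the $Q$ Invariant to place $\ell^t$ in $Q^t$, and derive a contradiction with the failed \textbf{if} test at the price-rise round (using $k>1$). Your explicit handling of round $t$ itself, your derivation of $X^{t'}_i=X^0_i$ from the fact that unraised agents lose goods only as big spenders, and your check that the induction is not circular are useful clarifications of steps the paper leaves implicit.
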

\begin{proof}
	Assume that $i=b^{t'}$ at some previous round $t'\leq t$. Then, $\ell^t\in Q^{t'}\subseteq Q^t$ By Lemma \ref{lem: some unraised groups are in Q} and Invariant \ref{inv: monotone Q}. Thus, $\frac{p^t(X^t_{\ell^t})}{w_{\ell^t}}\geq \hat{\p}^t_{b^t}$, which contradicts to the conditions of the \textbf{while} loop or the \textbf{if} statement.
	As a result, $i$ has never been identified as a least spender (who receives goods) or a big spender (who loses goods) in a transfer round before. Then, $X^{t'}_i=X^0_i$ for all $t'\leq t$.
\end{proof}

\begin{lemma}
	The $Q$ Invariant (Invariant \ref{inv: monotone Q}) are maintained at the end of round $t$.
\end{lemma}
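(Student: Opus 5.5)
We need to show that if $i\in Q^t$, i.e.\ $\frac{p^t(X^t_i)}{w_i}\geq \hat{\p}^t_{b^t}$, then $i\in Q^{t+1}$, i.e.\ $\frac{p^{t+1}(X^{t+1}_i)}{w_i}\geq \hat{\p}^{t+1}_{b^{t+1}}$. We argue by chaining two inequalities. First, by the Big Spending Invariant (Invariant \ref{inv: big spending}), which we may assume holds through the end of round $t$ since its maintenance was already established, $\hat{\p}^{t+1}_{b^{t+1}}\leq \hat{\p}^t_{b^t}$. It therefore suffices to show that the spending of every agent in $Q^t$ does not drop below $\hat{\p}^t_{b^t}$ during round $t$, i.e.\ $\frac{p^{t+1}(X^{t+1}_i)}{w_i}\geq \hat{\p}^t_{b^t}$. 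We split on the type of round.

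\emph{Price-rise round.} Here $X^{t+1}=X^t$ and prices only go up ($p^{t+1}\geq p^t$ pointwise). Hence $p^{t+1}(X_i)\geq p^t(X_i)$ for every agent, and the claim is immediate.

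\emph{Transfer round.} A single good $e^t$ moves from $b^t$ to $\ell^t$, and prices are unchanged. We check the three kinds of agents in turn.
\begin{enumerate}
\item Agents other than $\ell^t$ and $b^t$ keep the same spending, so they remain in $Q$.
\item Agent $\ell^t$ gains a good, so its spending only increases and it remains in $Q$ if it was there. (In fact $\ell^t\notin Q^t$ by the loop condition.)
\item Agent $b^t$ is the only one whose spending decreases. By Lemma \ref{lem: price and transferred good}, every good has price in $\{k,k^2\}$ after the first price-rise, and $p(e^t)=k$ is the minimum possible price. Hence $X^t_{b^t}\setminus\{e^t\}$ is a minimizer in the definition of $\hat{\p}_{b^t}$, and
\[
\frac{p(X^{t+1}_{b^t})}{w_{b^t}}=\frac{p(X^t_{b^t}\setminus\{e^t\})}{w_{b^t}}=\hat{\p}^t_{b^t}.
\]
\end{enumerate}
Case 3 shows that $b^t$ lands exactly at the threshold, so $b^t\in Q^{t+1}$ whether or not it was in $Q^t$.

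Combining the round analysis with $\hat{\p}^{t+1}_{b^{t+1}}\leq\hat{\p}^t_{b^t}$ gives $Q^t\subseteq Q^{t+1}$.

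The main obstacle is the treatment of $b^t$. The argument relies on the transferred good having minimum price, which is exactly the equality in Lemma \ref{lem: price and transferred good}. Without it, $b^t$ could drop below $\hat{\p}^t_{b^t}$.
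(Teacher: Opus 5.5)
Your proof is correct and follows the paper's argument essentially verbatim. In both, spending is non-decreasing for everyone except $b^t$; Lemma \ref{lem: price and transferred good} gives $p(X^{t+1}_{b^t})/w_{b^t}=\hat{p}^t_{b^t}$; and the Big Spending Invariant takes you from $\hat{p}^t_{b^t}$ to $\hat{p}^{t+1}_{b^{t+1}}$. The only slip is wording: $e^t$ is a minimum-price good, so removing it attains the \emph{maximum} in the definition of $\hat{p}_{b^t}$, not a ``minimizer.''
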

\begin{proof}
	Assume we are in the $r$-th iteration of the \textbf{for} loop processing group $N_r$.
	For a price-rise round $t$, for any $i\in Q^t$, we have $\frac{p^{t+1}(X^{t+1}_i)}{w_i}\geq \frac{p^t(X^{t}_i)}{w_i}\geq \hat{\p}^t_{b^t}\geq \hat{\p}^{t+1}_{b^{t+1}}$.
	The last inequality is due to Invariant \ref{inv: big spending}. Thus, $Q^t\subseteq Q^{t+1}$.
	
	For a transfer round $t$, the spending of agents in $Q^t\setminus\{b^t\}$ is non-decreasing and hence $Q^t\setminus\{b^t\}\subseteq Q^{t+1}$ by a similar argument.
	For $b^t$, we have $\frac{p^{t+1}(X^{t+1}_{b^t})}{w_{b^t}}= \frac{p^t(X^{t}_{b^t}\setminus\{e^t\})}{w_{b^t}}=\hat{\p}^t_{b^t}\geq \hat{\p}^{t+1}_{b^{t+1}}$.
	The second equality is due to Lemma \ref{lem: price and transferred good}. The inequality is due to Invariant \ref{inv: big spending}. Thus, $Q^t\subseteq Q^{t+1}$.
\end{proof}

\section{WEQX and fPO Allocations}
\label{sec: WEQX and fPO}

In this section, we provide an algorithm that computes a WEQX and fPO allocation of bivalued goods in polynomial time.
The algorithm also consists of two components (Algorithm \ref{alg: WEQX initial allocation} and Algorithm \ref{alg: WEQX and fPO}).
Most of their descriptions and analyses are identical as Algorithm \ref{alg: initial allocation} and Algorithm \ref{alg: WEFX and fPO}, respectively.
We describe them in the two subsections below.

\subsection{Initial Equilibrium and Agent Groups}

We first present Algorithm \ref{alg: WEQX initial allocation}, which computes an initial equilibrium $(X,p)$ and divides agents into groups $\{N_r\}_{r\in [R]}$. It is almost the same as Algorithm \ref{alg: initial allocation}, except that it compares the utilities between agents instead of their spendings.
Like Algorithm \ref{alg: initial allocation}, we show that the output of Algorithm \ref{alg: WEQX initial allocation} enjoys some good properties, and it terminates in polynomial time.
The algorithm uses the notions of item groups (Definition \ref{def: item groups}) and MBB graph (Definition  \ref{def: MBB graph}).
Denote by $\hat{v}_{i}$ the (weighted) utility of agent $i$ after removing the good with minimum value, i.e.,
\[
\hat{v}_{i} = \max_{e \in X_{i}} \left\{ \frac{v_i \left( X_{i} \setminus \{e\} \right)}{w_{i}} \right\}.
\]
The algorithm also uses the following concept.
\begin{definition}[Least and Big Valuers]\label{def: valuers}
	Given an equilibrium $(X, p)$, an agent $\ell \in N$ is called a \emph{least valuer} if $\ell = \arg \min_{i \in N} \left\{ \dfrac{v_i(X_i)}{w_i} \right\}$; an agent $b \in N$ is called a \emph{big valuer} if $b = \arg \max_{i \in N} \{\hat{v}_i\}$.
\end{definition}

\begin{algorithm}[tb]
	\caption{Computation of Initial Equilibrium and Agent Groups}
	\label{alg: WEQX initial allocation}
	\begin{algorithmic}[1]
		\REQUIRE A bivalued instance $(N, M, \{w_i\}_{i \in N}, \{v_i\}_{i \in N})$.
		\STATE $(\x,\p)\gets$ welfare-maximizing allocation, where $p(e)=v_i(e)$ for $e\in \x_i$.
		\STATE Construct the MBB graph $G_{\x}=(N,E)$ by adding an MBB edge from $i$ to $j$ if $\MBB_i\cap \x_j\neq \emptyset$.\label{line: WEQX MBB graph}
		\WHILE{there are a path $i_0\to i_1\to \cdots \to i_s$ in $G_{\x}$ and good $e\in \x_{i_s}$ s.t.~$\frac{v_{i_s}(\x_{i_s}\setminus\{e\})}{w_{i_s}}>\frac{v_{i_0}(\x_{i_0})}{w_{i_0}}$}
		\STATE If there are multiple such paths, choose the one with minimum $\frac{v_{i_0}(\x_{i_0})}{w_{i_0}}$. \label{line: WEQX multiple paths}
		\FOR{$r=s,s-1,\ldots,1$}
		\STATE Pick a good $e\in\MBB_{i_{r-1}}\cap \x_{i_r}$, breaking ties by picking the good with minimum price. \label{line: WEQX minimum price good}
		\STATE Update $\x_{i_r}\gets \x_{i_r}\setminus\{e\}$ and $\x_{i_{r-1}}\gets \x_{i_{r-1}}\cup\{e\}$.
		\ENDFOR
		\ENDWHILE
		\STATE Initialize $R\gets 0$, $N'\gets N$.
		\WHILE{$N'\neq\emptyset$}
		\STATE Let $\ell\leftarrow\arg\min_{i \in N'} \frac{v_i(\x_i)}{w_i}$, breaking ties by picking the agent with smallest index.
		\STATE Update $ R \leftarrow R + 1 $ and let $ N_R \leftarrow \{i \in N' \mid\text{there is a path in } G_X \text{ from } \ell \text{ to } i\}$.
		\STATE Update $ N' \leftarrow N' \setminus N_R $.
		\ENDWHILE
		\RETURN $(\x,\p,\{N_r\}_{r\in[R]})$.
	\end{algorithmic}
\end{algorithm}



\paragraph{The Properties of Algorithm \ref{alg: WEQX initial allocation}.} 
We first present a condition which guarantees the WEQX property.
\begin{lemma}
	\label{lem: WEQX i has a path to j}
	Let $\x$ be the allocation returned by Algorithm \ref{alg: WEQX initial allocation}. For any agents $i,j\in N$, if there is a path in $G_{\x}$ from $i$ to $j$, then $i$ is WEQX toward $j$.
\end{lemma}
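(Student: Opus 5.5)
We mirror the proof of Lemma \ref{lem: i has a path to j}, replacing spendings by utilities. The first step is to note that throughout Algorithm \ref{alg: WEQX initial allocation} the pair $(X,p)$ stays an equilibrium with $\alpha_i=1$ for every agent. This holds for the same reason as property 1 of Lemma \ref{lem: properties of initial allocation}: we start from a welfare-maximizing allocation with $p(e)=v_i(e)$, and goods only move along MBB edges. Consequently $v_i(X_i)=p(X_i)$ for every $i$. Moreover, for any $e\in X_i$ we have $v_i(e)=p(e)\in\{1,k\}$. This matches because $p(e)=k$ exactly when $e\in M^+$, and every such good is valued $k$ by its holder by MBB-ness with $\alpha_i=1$.

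Now let the path be $i\to i_1\to\cdots\to i_{s-1}\to j$, and pick $e\in\MBB_{i_{s-1}}\cap X_j$ (with $i_{s-1}=i$ if $s=1$). Since the \textbf{while} loop has terminated, its condition fails for this path and good, so
\[ \frac{v_j(X_j\setminus\{e\})}{w_j}\leq\frac{v_i(X_i)}{w_i}. \]
We need the same bound for every $e'\in X_j$. We split into cases on $v_j(e)=p(e)$, exactly as in Lemma \ref{lem: i has a path to j}.

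\begin{itemize}
\item If $v_j(e)=1$, then $v_j(X_j\setminus\{e'\})\leq v_j(X_j)-1=v_j(X_j\setminus\{e\})$ for every $e'\in X_j$, and we are done.
\item If $v_j(e)=k$ and every good of $X_j$ has value $k$ to $j$, then removing any $e'$ gives the same quantity, and we are done.
\item If $v_j(e)=k$ but some $e''\in X_j$ has $v_j(e'')=1$, then $p(e)=k$ and $p(e'')=1$. By Observation \ref{obs: MBB set and prices}, $e''\in\MBB_{i_{s-1}}$ as well. Applying the loop-termination inequality to the same path with the good $e''$ reduces this case to the first one.
\end{itemize}

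The main point to check carefully is the identification $v_j(e)=p(e)$ for goods held by $j$. This is what lets the loop condition, which is phrased in utilities, interact with Observation \ref{obs: MBB set and prices}, which is phrased in prices. The identification follows from the invariant $\alpha_j=1$ together with the MBB property.
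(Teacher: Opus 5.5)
Your proof is correct and follows the paper's argument essentially step for step. It uses the loop-termination inequality for the MBB good $e$ on the last edge, the same three-way case split on $v_j(e)$, and Observation~\ref{obs: MBB set and prices} to swap in a value-$1$ good $e''$. The only difference is cosmetic: you derive $\alpha_j=1$ and $v_j(e)=p(e)$ directly from how Algorithm~\ref{alg: WEQX initial allocation} operates, whereas the paper cites property~1 of Lemma~\ref{lem: WEQX properties of initial allocation}, a forward reference to a lemma whose property~4 relies on this one, so your version avoids that apparent circularity.
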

\begin{proof}
	Assume that the path is $i\to i_1\to\cdots\to i_{s-1}\to j$ and $e\in\,\MBB_{i_{s-1}}\cap\,\x_j$. Then, $\frac{v_j(\x_j\setminus\{e\})}{w_j}\leq \frac{v_i(\x_i)}{w_i}$ since otherwise the \textbf{while} loop would not have broken. If $v_j(e)=1$, for any $e'\in \x_j$, 
	\begin{equation}
		\label{eq: WEQX price 1}
		\frac{v_i(\x_i)}{w_i}\geq \frac{v_j(\x_j)-1}{w_j}\geq\frac{v_j(\x_j\setminus\{e'\})}{w_j}.
	\end{equation}
	If $v_j(e)=k$, assume that $v_j(e')=k$ for any $e'\in\x_j$, then
	\[ \frac{v_i(\x_i)}{w_i}\geq \frac{v_j(\x_j)-k}{w_j}=\frac{v_j(\x_j\setminus\{e'\})}{w_j}. \]
	Assume that there is some $e''\in\x_j$ such that $v_j(e'')=1$.
	Since $e,e''\in\x_j\subseteq \MBB_j$ and $\alpha_j=1$ by Lemma \ref{lem: WEQX properties of initial allocation}, it holds that $p(e)=k$ and $p(e'')=1$.
	Since $e\in \MBB_{i_{s-1}}$, we must have $e''\in\,\MBB_{i_{s-1}}$ by Observation \ref{obs: MBB set and prices}. Then, Eq.~\eqref{eq: WEQX price 1} holds again by replacing $e$ with $e''$. In conclusion, $i$ is WEQX toward $j$.
\end{proof}
We now describe the properties possessed by the output of Algorithm \ref{alg: WEQX initial allocation}.
\begin{lemma}\label{lem: WEQX properties of initial allocation}
	Let $(\x,\p,\{N_r\}_{r\in[R]})$ be the output of Algorithm \ref{alg: WEQX initial allocation}. The following properties hold:
	\begin{enumerate}
		\item $(\x, \p)$ is an equilibrium, and $\alpha_i = 1$ for all $i \in N$.
		
		\item For all $i, j \in N$, if $i \in N_r, j \in N_{r'}$ with $r<r'$, then $v_i(e) =1$ for any $e \in X_j$. 
		
		\item $M^- \subseteq \cup_{i \in N_1} \x_i$.
		\item For all $r \in [R]$, the agent group $N_r$ is WEQX.
	\end{enumerate}
\end{lemma}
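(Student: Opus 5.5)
The plan is to mirror the proof of Lemma \ref{lem: properties of initial allocation} item by item. The only changes are that the while-loop of Algorithm \ref{alg: WEQX initial allocation} compares weighted utilities, and group representatives are least valuers.

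\emph{Property 1.} The initial welfare-maximizing allocation sets $p(e)=v_i(e)$ for $e\in X_i$. Every good in $M^+$ sits with an agent valuing it $k$ and gets price $k$; every good in $M^-$ gets price $1$. Hence $\alpha_{i,e}\le 1$ for all $i,e$, with equality on $X_i$, so the allocation is MBB with $\alpha_i=1$. Prices never change in Algorithm \ref{alg: WEQX initial allocation}, and goods only move along MBB edges. So each recipient receives an MBB good, and the equilibrium property and all $\alpha_i=1$ are preserved. One consequence I will reuse is $v_i(X_i)=p(X_i)$ for every $i$ throughout.

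\emph{Properties 2 and 3.} These are purely graph-theoretic and go through verbatim. For property 2, suppose $i\in N_r$, $j\in N_{r'}$ with $r<r'$, and some $e\in X_j$ has $v_i(e)=k$. Then $e\in M^+$, so $p(e)=k$ and $\alpha_{i,e}=1=\alpha_i$. This gives an MBB edge $i\to j$. Since the representative $\ell_r$ reaches $i$, it also reaches $j$, so $j$ would have been put in $N_r$ (or earlier), a contradiction. Property 3 is the same argument with $\ell_1$: a good of $M^-$ has price $1$ and value $1$ for everyone, so it is MBB for $\ell_1$, which forces its holder into $N_1$.

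\emph{Property 4.} Take $i,j\in N_r$ with representative $\ell_r$, the least valuer among the remaining agents at the time $N_r$ is formed. Since $i$ was still in $N'$ at that time, $v_i(X_i)/w_i\ge v_{\ell_r}(X_{\ell_r})/w_{\ell_r}$. There is a path from $\ell_r$ to $j$, or $j=\ell_r$, in which case the claim is the direct inequality. Lemma \ref{lem: WEQX i has a path to j} then gives $v_{\ell_r}(X_{\ell_r})/w_{\ell_r}\ge v_j(X_j\setminus\{e\})/w_j$ for every $e\in X_j$. Chaining the two inequalities yields WEQX of $i$ toward $j$.

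\emph{Main obstacle.} Lemma \ref{lem: WEQX i has a path to j} itself invokes property 1 of this lemma, namely $\alpha_j=1$. So I must prove property 1 first and independently, as above, which it is, since it relies only on the algorithm's mechanics. Then properties 2 to 4 follow with no circularity.
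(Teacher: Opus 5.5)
Your proposal follows the paper's proof item by item: property 1 comes from the fixed prices and MBB transfers, properties 2 and 3 from the reachability argument, and property 4 from chaining the least-valuer inequality with Lemma~\ref{lem: WEQX i has a path to j}. Your remark that property 1 must be established before that lemma is invoked (its proof cites $\alpha_j=1$) is a correct ordering point that the paper leaves implicit.

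One caveat, shared with the paper: the inference \emph{$\alpha_{i,e}\le 1$ with equality on $X_i$, hence $\alpha_i=1$} fails when $X_i=\emptyset$, $M^-=\emptyset$, and $i$ values every good at $1$. For instance, a single good with $v_1(e)=k$ and $v_2(e)=1$ gives $\alpha_2=1/k$. So the universal claim $\alpha_i=1$ is false in that corner case, and so is your unused side remark $v_i(X_i)=p(X_i)$. Properties 2--4 are unaffected, because each $\alpha$-value they use is witnessed locally by a good of ratio exactly $1$.
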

\begin{proof}
	\begin{enumerate}
		\item The algorithm starts with a welfare-maximizing allocation, which is MBB, and $\alpha_i=1$ for any agent $i\in N$. It then repeatedly transfers goods along MBB edges, during which both the MBB ratios and the MBB property are preserved. Therefore, $(\x,\p)$ is an equilibrium, and $\alpha_i = 1$ for all $i \in N$.
		\item Assume that there is a good $e \in X_j$ with $v_i(e) = k$.
		Then, $e\in M^+$ and therefore $p(e)=k$. We have $\alpha_{i,e}=1=\alpha_i$ and hence $e\in\MBB_i$.
		Thus, $i\to j$ is an MBB edge. Consider the representative agent $\ell_r$ of group $N_r$. It follows that there is a path from $\ell_r$ to $j$ that passes through $i$. However, by the construction of agent groups, $j$ should be in $N_r$ instead of $N_{r'}$. A contradiction.
		\item By a similar argument, assume that there is good $e\in M^-\cap X_j$ for some $j\in N_r$ with $r>1$. Consider the representative agent $\ell_1$  of group $N_1$. We have $\alpha_{\ell_1,e}=1=\alpha_{\ell_1}$ and hence $e\in\MBB_{\ell_1}$. Thus, $\ell_1\to j$ is an MBB edge and hence $j$ should be in $N_1$ instead of $N_{r}$. A contradiction.
		\item For any agents $i,j\in N_r$, we need to show that $\frac{v_i(\x_i)}{w_i}\geq \hat{v}_j$. Consider the representative agent $\ell_r$ of group $N_r$.
		By definition, $\ell_r$ is the least valuer in $N_r$ and has a path to $j$. Thus, we have $\frac{v_i(\x_i)}{w_i}\geq \frac{v_{\ell}(\x_{\ell})}{w_{\ell}}$ and $\frac{v_{\ell}(\x_{\ell})}{w_{\ell}}\geq \hat{v}_j$ by Lemma \ref{lem: WEQX i has a path to j}, which imply that $i$ is WEQX toward $j$.
	\end{enumerate}
\end{proof}
\paragraph{The Running Time of Algorithm \ref{alg: WEQX initial allocation}.}
We now show that Algorithm \ref{alg: WEQX initial allocation} has a polynomial running time.
\begin{lemma}
	Algorithm \ref{alg: WEQX initial allocation} terminates in $O(\min\{k,m\}n^2m^2)$ time.
\end{lemma}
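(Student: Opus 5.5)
The plan is to mirror the running-time analysis of Algorithm \ref{alg: initial allocation}, replacing weighted spendings $\p(\x_i)/w_i$ with weighted utilities $v_i(\x_i)/w_i$. As there, computing the welfare-maximizing allocation and the agent groups costs $O(nm)$. Each round of the first \textbf{while} loop transfers goods along a path in $O(m)$ time, and finding a violating path and updating $G_{\x}$ costs $O(nm)$. So it suffices to bound the number of rounds by $O(\min\{k,m\}nm)$. As before, call $i_0$ the start-agent and $i_s$ the end-agent of the chosen path, and use superscripts $t$ for the state at the beginning of round $t$.

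The first step is the analogue of Claim \ref{claim: increasing LS}, namely $v_{i_0^t}(\x^t_{i_0^t})/w_{i_0^t}\leq v_{i_0^{t+1}}(\x^{t+1}_{i_0^{t+1}})/w_{i_0^{t+1}}$. By Lemma \ref{lem: WEQX properties of initial allocation} (property 1, whose proof only uses that transfers are along MBB edges), $\alpha_i=1$ throughout, so every good $e\in \x_i$ has $p(e)=v_i(e)$. Hence $v_i(\x_i)=\p(\x_i)$ at every moment. With this identification the argument of Claim \ref{claim: increasing LS} goes through verbatim:
\begin{itemize}
\item $i_0$ gains a good.
\item An intermediate agent $i_r$ could only lose utility by giving away a price-$k$ good while receiving a price-$1$ good. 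This is excluded by Observation \ref{obs: MBB set and prices} together with the minimum-price tie-breaking in line \ref{line: WEQX minimum price good}.
\item The end-agent's new utility exceeds the old start-agent's utility by the loop condition.
\end{itemize}
One also has to check that agents not on the path are unaffected. The minimality rule in line \ref{line: WEQX multiple paths} then gives monotonicity of the minimum start value: any path available in round $t+1$ starts at an agent whose value is either unchanged or increased.

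The second step is the analogue of the second claim: if $i$ is the start-agent in rounds $t_1<t_2$, then $v_i(\x^{t_1}_i)/w_i<v_i(\x^{t_2}_i)/w_i$. Suppose instead the values are equal. Then $i$ must have lost a good as an end-agent in between. Let $t$ be the last such round. Chaining the loop condition at $t$ with the first claim gives $v_i(\x^{t_2}_i)/w_i\geq v_i(\x^t_i\setminus\{e\})/w_i>v_{i_0^t}(\x^t_{i_0^t})/w_{i_0^t}\geq v_i(\x^{t_1}_i)/w_i$, a contradiction. Intermediate appearances on paths do not decrease $i$'s value, by the first step.

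Finally, count rounds. Each agent's utility has the form $m_1+km_2$ with $m_1+m_2\leq m$, so it takes at most $O(m^2)$ distinct values and is bounded by $km$. Since utilities are integers, each strict increase between two start-appearances is at least $1$ in unweighted utility. Hence each agent is a start-agent at most $\min\{km,m^2\}+1$ times, and the number of rounds is $O(n\min\{k,m\}m)$. Multiplying by $O(nm)$ per round yields $O(\min\{k,m\}n^2m^2)$.

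The main obstacle is the first step. The loop condition and the minimality rule now refer to utilities, while the tie-breaking and Observation \ref{obs: MBB set and prices} refer to prices. The identity $v_i(\x_i)=\p(\x_i)$, which follows from $\alpha_i=1$, is exactly what bridges the two, so I would establish it first as an explicit invariant of the loop.
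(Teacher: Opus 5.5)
Your proposal is correct and is essentially the paper's proof. It uses the same per-round cost accounting and the same two claims: the minimum start value is monotone, via $\alpha_i=1$, Observation~\ref{obs: MBB set and prices} and minimum-price tie-breaking, and values strictly increase between start appearances, via the last round in which $i$ loses a good as end-agent. It also uses the same $\min\{km,m^2\}$ count per agent.

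You share two small omissions with the paper. First, your $km$ count relies on utilities being integers, which needs $k\in\mathbb{Z}$, and the model does not assume that. Second, your justification of ``minimality gives monotonicity'' (the paper just says it follows immediately) needs one more observation. An agent $h$ whose value is below that of $i_0^t$ cannot reach any agent on the chosen path in round $t$, since it would then reach $i_s^t$ and be a violating start of smaller value. Hence the transfers create no MBB edge out of $h$'s reachable set, and $h$ still has no violating path in round $t+1$.
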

\begin{proof}
	Observe that 1) the computation of the social welfare minimizing allocation takes $O(nm)$ time; 2) the computation of the agent groups takes $O(nm)$ time since each $N_r$ can be identified via computing a connected component of $G_X$; 3) during a \textbf{while} loop, the transfer of goods along the MBB path takes $O(m)$ time, and identifying the path as well as update the MBB graph takes $O(nm)$ time.
	Therefore, to prove the lemma, it suffices to argue that the \textbf{while} loops must break after $O(\min\{k,m\}nm)$ rounds.
	
	For convenience, we refer to an execution of the \textbf{while} loop as a round, and index the rounds by $t = 1, 2,\ldots$. 
	Given the MBB path $i_0\to i_1\to \ldots\to i_s$ in some round, we call $i_0$ the start-agent and $i_s$ the end-agent of the path. 
	Any quantity with a superscript $t$ denotes its status at the beginning of round $t$.
	\begin{claim}
		\label{claim: WEQX increasing LS}
		We have $v_{i_0^t}(\x^t_{i_0^t})/w_{i_0^t}\leq v_{i_0^t}(\x^{t+1}_{i_0^{t+1}})/w_{i_0^{t+1}}$.
	\end{claim}
	\begin{proof}
		Assume that $i_0^t\to i_1^t\to \cdots\to i_s^t$ is the path chosen in this round. We first show that for any $r\neq s$, $v_{i_r^t}(\x^t_{i_r^t})/ w_{i_r^t}\leq v_{i_r^t}(\x^{t+1}_{i_r^{t}})/w_{i_r^t}$. This claim clearly holds for $r=0$ since agent $i_0^t$ receives one more good. For $r=1,2,\ldots,s-1$, assume that the contrary holds. This happens only when $i_r^t$ receives a good $e$ with $v_{i_r^t}(e)=1$ and loses a good $e'$ with $v_{i_r^t}(e')=k$ to $i_{r-1}^t$.
		Since $e,e'\in\x_{i_r^t}\subseteq \MBB_j$ and $\alpha_{i_r^t}=1$ by Lemma \ref{lem: WEQX properties of initial allocation}, we have $p(e)=1$ and $p(e')=k$.
		Besides, $e'\in\MBB_{i_{r-1}^t}$ since the transfer is along an MBB edge. By Observation \ref{obs: MBB set and prices}, $e\in\MBB_{i_{r-1}^t}$ as well. However, by the tie-breaking rule, the good that is transferred from $i_{r}^t$ to $i_{r-1}^t$ should be $e$ rather than $e'$. A contradiction. Next, assume that $i_s^t$ loses a good $e''$. We have $v_{i_s^t}(\x_{i_s^t}^{t+1})/w_{i_s^t}=v_{i_s^t}(\x_{i_s^t}^{t}\setminus\{e''\})/w_{i_s^t}>v_{i_0^t}(\x^t_{i_0^t})/w_{i_0^t}$. The last inequality is due to the loop condition. The claim follows immediately.
	\end{proof}
	\begin{claim}
		If agent $i$ is the start-agent in both rounds $t_1$ and $t_2$, where $t_1<t_2$, then we have $v_{i}(\x^{t_1}_{i})/w_i< v_{i}(\x^{t_2}_{i})/w_i$.
	\end{claim}
	\begin{proof}
		Assume that $\p(\x^{t_1}_{i})/w_i= \p(\x^{t_2}_{i})/w_i$. Then, $i$ must lose some good as an end-agent in some round between $t_1$ and $t_2$. Let $t<t_2$ be the last round where $i$ loses some good $e$. Let $i_0^t$ be the corresponding start-agent at round $t$. Then, we get a contradiction due to
		\[ v_{i}(\x^{t_2}_{i})/w_i \geq v_{i}(\x_{i}^t\setminus\{e\})/w_i 
		>v_{i}(\x_{i_0^t}^t)/w_{i_0^t}\geq v_{i}(\x^{t_1}_{i})/w_i. \]
		The first inequality holds since $i$ does not lose goods after round $t$. The second holds since $i$ loses good $e$ as an end-agent. The third is due to Claim \ref{claim: WEQX increasing LS}.
	\end{proof}
	Note that for each agent $i$, the value of $v_i(\x_i)/w_i$ is at most $km/w_i$. By the previous claims, each agent can be identified as a start-agent at most $km$ times because each of its appearances increase $v_i(\x_i)/w_i$ by at least $1/w_i$. Hence the total number of rounds is at most $knm$. 
	
	On the other hand, observe that for any $i\in N$, if $X_i$ contains $m_1$ goods with value $1$ and $m_2$ goods with value $k$, then $v_i(X_i)=m_1+km_2$. Since $m_1\geq 0, m_2\geq 0, m_1+m_2\leq m$, the number of distinct utility values that $i$ can get is at most $m^2$.
	By the same argument, each agent can be identified as a start-agent at most $m^2$ times. Hence the total number of rounds is at most $nm^2$.

	In conclusion, Algorithm \ref{alg: WEQX initial allocation} terminates in $O(\min\{k,m\}n^2m^2)$ time.
\end{proof}

\subsection{The Reallocation Algorithm}

In this section, we present Algorithm \ref{alg: WEQX and fPO}, which starts from the initial equilibrium $(X^0,p^0)$ and agent groups $\{N_r\}_{r\in[R]}$ computed by Algorithm \ref{alg: WEQX initial allocation}, and constructs a WEQX equilibrium $(X,p)$ by a sequence of item reallocations and price rises.
It is almost the same as Algorithm \ref{alg: WEFX and fPO}, with two minor modifications. First, it considers the least and big valuers and compare their utilities instead of the spendings of the spenders. Second, since it directly handle the utilities, the termination condition requires that $\ell$ is WEQX toward $b$, which is not relaxed like in Algorithm \ref{alg: WEFX and fPO}. 

\begin{algorithm}[tb]
	\caption{Find a WEQX and fPO allocation}
	\label{alg: WEQX and fPO}
	\begin{algorithmic}[1]
		\REQUIRE  A bivalued instance $(N, M, \{w_i\}_{i \in N}, \{v_i\}_{i \in N})$
		\STATE Let $(X,p,\{N_r\}_{r \in [R]})$ be returned by Algorithm \ref{alg: WEQX initial allocation} 
		\STATE Initialize the set of unraised agents $U \leftarrow N$
		\FOR{$r=1,2,\ldots, R-1$}
		\STATE Let $\ell \leftarrow \arg \min_{i \in N_r} \left\{ \frac{v_i(X_i)}{w_i} \right\}$ be the least valuer in $N_r$
		\STATE Let $b \leftarrow \arg \max_{i \in N} \{\hat{v}_i\}$ be the big valuer
		\IF{$\frac{v_{\ell}(X_{\ell})}{w_{\ell}}\geq \hat{v}_b$}
		\RETURN $(X,p)$
		\ENDIF
		\STATE\label{line: WEQX raise price} Raise prices of all goods in $\bigcup_{i \in N_r} X_i$ by a factor of $k$
		\STATE $U \leftarrow U \setminus N_r$
		\WHILE{$\frac{v_{\ell}(X_{\ell})}{w_{\ell}}<\hat{v}_b$}
		\IF{$b\in U$}
		\STATE Pick an arbitrary good $e\in X_b$
		\ELSE 
		\STATE Pick an arbitrary good $e\in X_b\setminus X_b^0$\label{line: WEQX transfer from raised agent}
		\ENDIF
		\STATE Update $X_b\gets X_b\setminus\{e\}, X_{\ell}\gets X_{\ell}\cup\{e\}$
		\STATE Let $\ell \leftarrow \arg \min_{i \in N_r} \left\{ \frac{v_i(X_i)}{w_i} \right\}$ be the least valuer in $N_r$
		\STATE Let $b \leftarrow \arg \max_{i \in N} \{\hat{v}_i\}$ be the big valuer
		\ENDWHILE
		\ENDFOR
	\end{algorithmic}
\end{algorithm}

\paragraph{The Properties of Algorithm \ref{alg: WEQX and fPO}.}

Recall that we use $b$ to denote the big valuer during the execution of Algorithm \ref{alg: WEQX and fPO}.
We then consider the following definition.

\begin{definition}
	Let $Q\subseteq N$ be the set of agents that are WEQX toward $b$, i.e.~for any $i\in Q$, $\frac{v_i(X_i)}{w_i}\geq \hat{v}_b$.
\end{definition}

We first define some notations. we refer to an execution of line \ref{line: WEQX raise price} or an iteration of the \textbf{while} loop as a round, and index the rounds by $t = 0,1,2,\ldots$.
An execution of line \ref{line: WEQX raise price} is called a \emph{price-rise} round, and an iteration of the \textbf{while} loop is called a \emph{transfer} round.
We denote by $X^t$ and $p^t$ the allocation and the price at the beginning of round $t$, respectively. Denote by $\ell^t$ and $b^t$ the least valuer and the big valuer, respectively, identified at the beginning of round $t$.
Denote by $Q^t$ the set of agents that are WEQX toward $b^t$.
For a price-rise round $t$, $p^{t+1}(e)=k\cdot p^t(e)$ if $e$ is raised and $p^{t+1}(e)=p^t(e)$ if unraised. Besides, $X^{t+1}=X^t$, and hence sometimes we will omit the superscript of $X$ in this round. For a transfer round $t$, some good $e^t$ is transferred from $b^t$ to $\ell^t$. Thus, $X^{t+1}_{\ell^{t}}=X^{t}_{\ell^{t}}\cup \{e^t\}, X^{t+1}_{b^{t}}=X^{t}_{b^{t}}\setminus \{e^t\}$, and $X^{t+1}_{i}=X^{t}_{i}$ for all $i\notin\{\ell^t,b^t\}$. Besides, $p^{t+1}=p^t$, and hence sometimes we will omit the superscript of $p$ in this round.
We also use $N_{\leq i}$ to denote $\cup_{j\leq i} N_j$, and define $N_{\leq i}$, $N_{\geq i}$ and $N_{>i}$ accordingly.
Recall that we denote by $\ell_{r}$ the least valuer in $N_{r}$ in the initial equilibrium $(X^0,p^0)$.

We describe the properties of Algorithm \ref{alg: WEQX and fPO} by introducing a set of invariants that Algorithm \ref{alg: WEQX and fPO} maintains throughout its whole execution.

\begin{invariant}[Equilibrium Invariant]
	\label{inv: WEQX equilibrium}
	$(\x,\p)$ is an equilibrium.
\end{invariant}

\begin{invariant}[Group WEQX Invariant]
	\label{inv: WEQX}
	For all $r \in [R]$, agent group $N_r$ is WEQX in $(X, p)$.
\end{invariant}

\begin{invariant}[Raised Group Invariant]
	\label{inv: WEQX raised group}
	There exists $r^*\in [R]$ such that all groups $N_1,\ldots,N_{r^*-1}$ are raised exactly once, and all groups $N_{r^*},\ldots,N_R$ are not raised. Moreover, the following properties holds:
	\begin{enumerate}
		\item Goods in $\cup_{j\in N_{<r^*}} X^0_j$ are raised exactly once, and goods in $\cup_{j\in N_{\geq r^*}} X^0_j$ are unraised.
		\item For all $i\in N$, $\alpha_{i,e}\leq 1/k$ for any $e\in \cup_{j\in N_{<r^*}} X_j^0$.
		\item For all $i\in N_{<r^*}$, $\alpha_{i,e}= 1/k$ for any $e\in \cup_{j\in N_{\geq r^*}} X_j^0$.
		\item For all $i\in N_{<r^*}$, we have $\alpha_i=1/k$ and $X_i^0\subseteq X_i$, i.e., agent $i$ did not lose any good in $X_i^0$.
		\item For all $i\in N_{\geq r^*}$, we have $\alpha_i=1$ and $X_i\subseteq X_i^0$, i.e., agent $i$ did not receive any new good.
	\end{enumerate}
\end{invariant}

\begin{invariant}[Big Value Invariant]
	\label{inv: WEQX big spending}
	The value of the big valuer across different rounds is non-increasing: $\hat{v}^t_{b^t}\geq \hat{v}^{t+1}_{b^{t+1}}$.
\end{invariant}

\begin{invariant}[Least Valuer Invariant]
	\label{inv: WEQX least valuer}
	If $N_r$ is raised in round $t$, then any $i\in N_r$ has never been identified as a big valuer in rounds $1,2,\ldots, t$.
	As a result, $X^{t'}_i=X^0_i$ for all $t'\leq t$.
\end{invariant}

\begin{invariant}[$Q$ Invariant]
	\label{inv: WEQX monotone Q}
	The set $Q$ is monotonically increasing across different rounds: $Q^t\subseteq Q^{t+1}$.
\end{invariant}

%
%

We postpone the proof of the above invariants for now and assume that they hold at the beginning of some round $t$.
We show several additional properties of the algorithm, which is helpful for showing the invariants are maintained at the end of round $t$ and proving the WEQX property of the returned allocation.
Note that we can assume $\frac{v_{\ell^t}^t(X^t_{\ell^t})}{w_{\ell^t}}<\hat{v}^t_{b^t}$ at the beginning of round $t$, since otherwise the round would not been executed.
For simplicity, we sometimes omit the superscript $t$ when the context is clear.
We start by the following simple observation.

\begin{observation}
	\label{obs: WEQX l and b are not in the same group}
	Assume $\ell \in N_r$ and $b\in N_{r'}$, then $r\neq r'$.
\end{observation}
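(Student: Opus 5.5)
The plan is to argue by contradiction, in the same way as Observation \ref{obs: l and b are not in the same group}. I would rely only on the Group WEQX Invariant (Invariant \ref{inv: WEQX}), which we are assuming holds at the beginning of the current round, together with the standing assumption that the round is executed.

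Suppose, to the contrary, that $r=r'$, so the least valuer $\ell$ and the big valuer $b$ both lie in the same group $N_r$. By Invariant \ref{inv: WEQX}, the group $N_r$ is WEQX in the current $(X,p)$. In particular, $\ell$ is WEQX toward $b$: for every $e\in X_b$,
\[
\frac{v_\ell(X_\ell)}{w_\ell}\geq \frac{v_b(X_b\setminus\{e\})}{w_b}.
\]
Taking the maximum over $e\in X_b$ gives $\frac{v_\ell(X_\ell)}{w_\ell}\geq \hat{v}_b$. If $X_b$ were empty, then $\hat{v}_b$ would be a maximum over an empty set, so I would treat it as $0$ or $-\infty$; the same inequality then holds trivially.

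This contradicts the standing assumption $\frac{v_{\ell}(X_{\ell})}{w_{\ell}}<\hat{v}_b$. That assumption is available in both kinds of rounds. In a price-rise round it holds because the \textbf{if} test of Algorithm \ref{alg: WEQX and fPO} failed, so the algorithm did not return. In a transfer round it holds because the \textbf{while} condition is satisfied. Hence $r\neq r'$.

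I see no real obstacle. The only point requiring care is the inductive setup: Invariant \ref{inv: WEQX} may be invoked only at the beginning of a round. Its maintenance is proved separately, and that proof uses this observation for the current round only, so the argument is not circular.
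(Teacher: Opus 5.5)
Your proof is correct and is the paper's argument: assuming $r=r'$, the Group WEQX Invariant gives $\frac{v_\ell(X_\ell)}{w_\ell}\geq \hat{v}_b$, which contradicts the failed \textbf{if} test or the \textbf{while} condition. Your remarks on the empty-bundle case and on why using the invariant is not circular are minor additions to that argument.
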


\begin{proof}
	If $r=r'$, then $\frac{v_{\ell}(X_{\ell})}{w_{\ell}}\geq \hat{v}_b$ by Invariant \ref{inv: WEQX}, which contradicts to the conditions of the \textbf{while} loop or the \textbf{if} statement.
\end{proof}

The next two lemmas are concerned with agents who become WEQX. Specifically, Lemma \ref{lem: WEQX some unraised groups are in Q} shows that an unraised group will become WEQX when the big valuer appears in it or in a lower unraised group. Lemma \ref{lem: WEQX raised groups are in Q} shows that all raised groups become WEQX.

\begin{lemma}
	\label{lem: WEQX some unraised groups are in Q}
	If $b\in U$ becomes the big valuer at the beginning of some transfer round $t$ and assume $b\in N_s$, then $N_{\geq s}\subseteq Q^{t}$.
\end{lemma}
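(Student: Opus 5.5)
I will mirror the proof of Lemma \ref{lem: some unraised groups are in Q}, replacing spendings by utilities and citing the WEQX-section invariants in place of their pWEFX counterparts. Fix a transfer round $t$ whose big valuer $b=b^t\in U$ lies in $N_s$. The goal is to show $\frac{v_i(X^t_i)}{w_i}\geq \hat v^t_b$ for every $i\in N_{\geq s}$. I split the agents into those in $N_s$ and those in $N_r$ with $r>s$.

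\emph{Agents in $N_s$.} For $i\in N_s$ this is immediate from the Group WEQX Invariant (Invariant \ref{inv: WEQX}), since $b\in N_s$. Hence $N_s\subseteq Q^t$.

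\emph{Agents in $N_r$ with $r>s$.} Fix $i\in N_r$. If $i\in Q^{t-1}$, the $Q$ Invariant (Invariant \ref{inv: WEQX monotone Q}) gives $i\in Q^t$. Otherwise, I first argue that $i$ was never the big valuer in any earlier transfer round $t'<t$. Indeed, if it had been, then $i\in Q^{t'}\subseteq Q^{t-1}$, because the big valuer is always WEQX toward itself ($v_i(X_i)\ge v_i(X_i\setminus\{e\})$). Next, $N_r$ is unraised, since $r>s$ and $N_s$ is unraised (as $b\in U$ and groups are raised in index order). So by Invariant \ref{inv: WEQX raised group}(5), $i$ never receives goods, and we get $X^t_i=X^0_i$. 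The chain I will then write is
\[
\frac{v_i(X^t_i)}{w_i}=\frac{v_i(X^0_i)}{w_i}\geq \frac{v_{\ell_s}(X^0_{\ell_s})}{w_{\ell_s}}\geq \frac{v_{\ell_s}(X^t_{\ell_s})}{w_{\ell_s}}\geq \hat v^t_b .
\]
The first inequality holds because $\ell_s$ was the least valuer of $N'$ when $N_s$ was formed in Algorithm \ref{alg: WEQX initial allocation}, and $i$ was still in $N'$ at that moment. The second holds because $X^t_{\ell_s}\subseteq X^0_{\ell_s}$ by Invariant \ref{inv: WEQX raised group}(5), and utilities are additive and nonnegative. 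The last inequality is Invariant \ref{inv: WEQX} applied within $N_s$, since $\ell_s,b\in N_s$.

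\emph{Main subtlety.} Unlike in the pWEFX version, there is no price concern here: utilities do not change under price rises, so the unraised status is used only to guarantee that goods are not received. The step I expect to need care is justifying $X^t_i=X^0_i$. It uses both facts above: $i$ never lost goods, because only big valuers lose goods, and $i$ never gained goods, because $i$ is unraised. Combining the two cases yields $N_{\geq s}\subseteq Q^t$.
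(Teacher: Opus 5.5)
Your proof is correct and follows the paper's argument essentially step for step. You use the same split into $N_s$ and $N_r$ with $r>s$, the same use of $Q$-monotonicity to show that an agent outside $Q^{t-1}$ was never a big valuer (so $X^t_i=X^0_i$ via the Raised Group Invariant), and the same chain of inequalities through $\ell_s$; your extra justifications only make explicit steps the paper leaves implicit.
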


\begin{proof}
	For any $i\in N_s$, $\frac{v_i(X_i)}{w_i}\geq \hat{v}_b$ by Invariant \ref{inv: WEQX}. Thus, $i\in Q^t$.
	For any $i\in N_r$ with $r>s$, if $i\in Q^{t-1}$, then $i\in Q^{t}$ by Invariant \ref{inv: WEQX monotone Q}. If $i\notin Q^{t-1}$, then $i$ has never been identified as a big valuer at previous transfer rounds.
	To see this, assume that $i$ is the big valuer at transfer round $t'\leq t-1$.
	Then, $i\in Q^{t'}\subseteq Q^{t-1}$ by Invariant \ref{inv: WEQX monotone Q}. A contradiction.
	Together with Invariant \ref{inv: WEQX raised group}, it implies that $X^{t}_i=X^0_i$.
	We then have $\frac{v_i(X^{t}_i)}{w_i}=\frac{v_i(X^0_i)}{w_i}\geq \frac{v_{\ell_s}(X^0_{\ell_s})}{w_{\ell_s}}\geq \frac{v_{\ell_s}(X^{t}_{\ell_s})}{w_{\ell_s}}\geq \hat{v}^{t}_{b}$.
	The first inequality holds since $\ell_s$ is the least valuer in $N_s$ initially and $r>s$. The second inequality holds since $N_{s}$ is unraised and $ X^{t}_{\ell_s}\subseteq X^0_{\ell_s}$ by Invariant \ref{inv: WEQX raised group}. The last inequality is due to Invariant \ref{inv: WEQX}. Thus, $i\in Q^t$.
	To conclude, we have $N_{\geq s}\subseteq Q^{t}$.
\end{proof}

\begin{lemma}
	\label{lem: WEQX raised groups are in Q}
	Assume that $N_{<r^*}$ is raised and $N_{\geq r^*}$ is unraised. Then $N_{<r^*}\subseteq Q$.
\end{lemma}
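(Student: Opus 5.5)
We mirror the proof of Lemma \ref{lem: raised groups are in Q} for the WEFX setting. Fix a raised group $N_r$ with $r<r^*$. We show that $N_r$ enters $Q$ when its processing ends, and that it stays in $Q$ forever after.

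\textbf{Entry into $Q$.} Since $r<r^*$, the $r$-th iteration of the \textbf{for} loop passed the \textbf{if} test, raised the prices, and entered the \textbf{while} loop. The \textbf{while} loop runs at most $m$ times, because each transfer round moves one good into $N_r$, so it ends. At that moment its exit condition holds, namely $\frac{v_{\ell}(X_{\ell})}{w_{\ell}}\geq \hat{v}_b$, where $\ell$ is the current least valuer of $N_r$ and $b$ is the current big valuer. For any $i\in N_r$ we get
\[
\frac{v_i(X_i)}{w_i}\geq \frac{v_{\ell}(X_{\ell})}{w_{\ell}}\geq \hat{v}_b ,
\]
where the first inequality holds by the choice of $\ell$ as least valuer. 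Hence $N_r\subseteq Q$ at the end of the $r$-th iteration.

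\textbf{Persistence.} By the $Q$ Invariant (Invariant \ref{inv: WEQX monotone Q}), $Q^t\subseteq Q^{t+1}$ for every later round, whether it is a price-rise round or a transfer round. So $N_r$ remains inside $Q$ for the rest of the execution. Taking the union over all $r<r^*$ gives $N_{<r^*}\subseteq Q$.

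\textbf{Main obstacle.} Almost all the work is delegated to Invariant \ref{inv: WEQX monotone Q}. Its maintenance has not been proved yet; it will rely on the Big Value Invariant and on the fact that only $b^t$ loses value in a transfer round. The one subtle case is a group whose \textbf{while} loop executes zero times. In that case the loop condition already fails right after the price rise. Price rises do not change utilities, so the exit inequality above still holds at that moment, and the argument goes through unchanged.
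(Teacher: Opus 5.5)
Your proof is correct and follows the paper's argument: the exit condition of the \textbf{while} loop for $N_r$, combined with $\ell$ being the least valuer of $N_r$, places all of $N_r$ in $Q$, and the $Q$ Invariant (Invariant~\ref{inv: WEQX monotone Q}) keeps it there afterwards. Your zero-iteration case is actually vacuous: the \textbf{if} test failed, and the price rise leaves $\ell$, $b$ and $\hat{v}_b$ unchanged, so the loop must run at least once. Handling that case as you did is harmless.
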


\begin{proof}
	When processing $N_r\, (r<r^*)$ , the \textbf{while} loop terminates with $\frac{v_{\ell}(X_{\ell})}{w_{\ell}}\geq \hat{v}_b$. Therefore, for any $i\in N_r$, $\frac{v_i(X_{i})}{w_{i}}\geq \frac{v_{\ell}(X_{\ell})}{w_{\ell}}\geq \hat{p}_b$. Thus, $N_r\subseteq Q$ and it persists by Invariant \ref{inv: WEQX monotone Q}.
\end{proof}

The last two lemmas are crucial in understanding the behavior of the algorithm. Specifically, Lemma \ref{lem: WEQX transferred good} shows that the good transfer in line \ref{line: WEQX transfer from raised agent} is valid. Lemma \ref{lem: WEQX price and transferred good} shows that the good transfer is always along some MBB edge.

\begin{lemma}
	\label{lem: WEQX transferred good}
	In the \textbf{while} loop, if $b\notin U$, then $X_b\setminus X^0_b\neq \emptyset$. Furthermore, assuming that $\ell\in N_r$, we have $X_b\setminus X^0_b\subseteq \cup_{j\in N_{>r}} X^0_j$, i.e.~all goods in $X_b\setminus X^0_b$ were initially held by the agents in groups $N_{>r}$.
\end{lemma}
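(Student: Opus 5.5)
We follow the proof of Lemma \ref{lem: transferred good}, replacing spendings by utilities. Let $b\in N_{r_1}$. Since $b\notin U$, $b$ is raised; since $\ell$ belongs to the group currently being processed and is therefore not in a raised group other than $N_r$, Observation \ref{obs: WEQX l and b are not in the same group} gives $r_1<r$.

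\textbf{Step 1: $X_b\setminus X_b^0\neq\emptyset$.} Suppose instead $X_b\setminus X^0_b=\emptyset$. By Invariant \ref{inv: WEQX raised group}(4), $X_b^0\subseteq X_b$, so $X_b=X_b^0$. Valuations never change, so $\hat v_b=\hat v^0_b$. Three facts then chain together:
\begin{itemize}
\item $\hat v^0_b\le v_{\ell_{r_1}}(X^0_{\ell_{r_1}})/w_{\ell_{r_1}}$, because $N_{r_1}$ is WEQX initially by Lemma \ref{lem: WEQX properties of initial allocation}(4).
\item $v_{\ell_{r_1}}(X^0_{\ell_{r_1}})/w_{\ell_{r_1}}\le v_\ell(X^0_\ell)/w_\ell$, because $\ell_{r_1}$ is the least valuer of $N'\supseteq N_r$ at the time $N_{r_1}$ was formed.
\item $v_\ell(X^0_\ell)/w_\ell\le v_\ell(X_\ell)/w_\ell$, because $X^0_\ell\subseteq X_\ell$ by Invariant \ref{inv: WEQX raised group}(4), as $\ell$ is now raised.
\end{itemize}
Together these give $v_\ell(X_\ell)/w_\ell\ge\hat v_b$, which contradicts the \textbf{while} condition.

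\textbf{Difference from the WEFX case.} Here no factor $k$ appears. In Lemma \ref{lem: transferred good}, the price rise multiplied both sides of the inequality by $k$. With utilities, neither side changes, so the inequality chain is even simpler.

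\textbf{Step 2: $X_b\setminus X_b^0\subseteq \bigcup_{j\in N_{>r}}X^0_j$.} Take $e\in X_b\setminus X^0_b$. Agent $b$ can only have gained $e$ as a least valuer in an earlier transfer round $t'$, taking it from that round's big valuer $b'$.

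First we show $e\in X^0_{b'}$ and that $b'$ was unraised at round $t'$, with $b'\in N_{r_2}$. If $b'$ was unraised, its bundle is contained in its initial bundle by Invariant \ref{inv: WEQX raised group}(5). If $b'$ was raised, then $e$ came from $X_{b'}\setminus X^0_{b'}$, and we trace it back inductively. The induction is on the time at which a good first leaves its initial owner: every good that leaves its initial owner does so from an unraised big valuer. Let $b''\in N_{r_2}$ be that first unraised big valuer, so $e\in X^0_{b''}$.

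Next we show $r_2>r$. Suppose $r_2\le r$. Lemma \ref{lem: WEQX some unraised groups are in Q} gives $N_{\ge r_2}\subseteq Q^{t'}$, and Invariant \ref{inv: WEQX monotone Q} gives $Q^{t'}\subseteq Q$ now. Hence $\ell\in N_r\subseteq Q$, so $\ell$ is WEQX toward the current $b$. This contradicts the loop condition.

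\textbf{Main obstacle.} The delicate point is this tracing of origin, specifically that the good's initial owner $b''$ lies in a group $N_{r_2}$ that was unraised at time $t'$. This is needed to invoke Lemma \ref{lem: WEQX some unraised groups are in Q}, whose hypothesis requires $b''\in U$. The fact follows from Invariant \ref{inv: WEQX raised group}(1): goods in $\bigcup_{j\in N_{\ge r^*}}X^0_j$ are exactly those still unraised.
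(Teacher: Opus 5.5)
Your proof is correct and follows the paper's argument. Step~1 uses the same inequality chain, with no factor $k$ as you note. Step~2 reaches the same contradiction via Lemma~\ref{lem: WEQX some unraised groups are in Q} and Invariant~\ref{inv: WEQX monotone Q}. Your tracing of $e$ back to its initial owner makes explicit a step the paper leaves implicit: that owner was an unraised big valuer when $e$ first left it, because raised big valuers only give away goods outside their initial bundles. One phrasing slip: the immediate giver $b'$ need not be unraised at round $t'$, although your later argument handles this correctly.
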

\begin{proof}
	Assume that $b\in N_{r_1}$. Since $b\notin U$, we have $r_1< r$ by Observation \ref{obs: WEQX l and b are not in the same group}.
	Assume that $X_b\setminus X^0_b= \emptyset$, then $X_b=X^0_b$ by Invariant \ref{inv: WEQX raised group}. Then, $\hat{v}_b^t= \hat{v}^0_b\leq \frac{v_{\ell_{r_1}}(X^0_{\ell_{r_1}})}{w_{\ell_{r_1}}}\leq \frac{v_{\ell}(X^0_{\ell})}{w_{\ell}}\leq \frac{v_{\ell}(X_{\ell})}{w_{\ell}}$, contradicting to the loop condition.
	The equality holds since $X_b=X^0_b$. The first inequality holds since $N_{r_1}$ is WEQX initially by Lemma \ref{lem: WEQX properties of initial allocation}. The second inequality holds since $\ell_{r_1}$ is the least valuer in $N_{r_1}$ initially and $r_1<r$. The last inequality holds since $X^0_{\ell}\subseteq X_{\ell}$ by Invariant \ref{inv: WEQX raised group}.
	
	For the second part of the lemma, assume that at some transfer round $t'<t$, agent $b$ serves as a least valuer and she received a good $e$ that is initially held by some agent $b'$, i.e.~$e\in X^0_{b'}$. Assume that $b'\in N_{r_2}$. It suffices to show that $r_2>r$. Assume that $r_2\leq r$, then $\ell\in N_{\geq r_2}\subseteq Q$ by Lemma \ref{lem: WEQX some unraised groups are in Q}, which implies that $\ell$ is WEQX toward $b$, contradicting to the loop condition.
\end{proof}

\begin{lemma}
	\label{lem: WEQX price and transferred good}
	After the first price-rise round, $p(e)=\{k,k^2\}$ for all $e\in M$.
	Furthermore, assume that $e$ is transferred from $b$ to $\ell$ in the \textbf{while} loop. Then, $v_{\ell}(e)=1$, $e\in\MBB_{\ell}$ and from $b$'s viewpoint, $e$ has a minimum value among goods held by $b$.
	
\end{lemma}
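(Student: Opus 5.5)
The plan is to follow the proof of Lemma \ref{lem: price and transferred good} for the price and MBB claims, and then add an argument for the value claims.

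\emph{Prices.} The first claim is identical to before. In $(X^0,p^0)$ we have $p(e)=k$ on $M^+$ and $p(e)=1$ on $M^-$. By Lemma \ref{lem: WEQX properties of initial allocation}, $M^-\subseteq \cup_{i\in N_1}X^0_i$. Since $N_1$ is raised in the first price-rise round, every price afterwards lies in $\{k,k^2\}$.

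\emph{The transferred good is an unraised good of an upper group.} Fix a transfer round while processing $N_r$, with $\ell\in N_r$. If $b\in U$, then by Observation \ref{obs: WEQX l and b are not in the same group} and because $N_r$ is raised, $b\in N_{>r}$. By Invariant \ref{inv: WEQX raised group}(5), $e\in X_b\subseteq X^0_b$. If $b\notin U$, then Lemma \ref{lem: WEQX transferred good} gives $e\in X_b\setminus X^0_b\subseteq\cup_{j\in N_{>r}}X^0_j$. In both cases $e$ was initially held by some agent of a group strictly above $N_r$, and $e$ is unraised by Invariant \ref{inv: WEQX raised group}(1).

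\emph{$v_\ell(e)=1$ and $e\in\MBB_\ell$.} Property 2 of Lemma \ref{lem: WEQX properties of initial allocation} now gives $v_\ell(e)=1$. Since $e$ is unraised and lies in $M^+$ (because $M^-$ sits in $N_1$, which is raised), we get $p(e)=k$. So $\alpha_{\ell,e}=1/k=\alpha_\ell$ by Invariant \ref{inv: WEQX raised group}(3),(4), hence $e\in\MBB_\ell$.

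\emph{Minimum value for $b$: unraised case.} Here I have to show $v_b(e)\le v_b(e')$ for all $e'\in X_b$, so that $\hat v_b=v_b(X_b\setminus\{e\})/w_b$. This is the main obstacle, since the algorithm picks $e$ arbitrarily. If $b\in U$, all goods in $X_b\subseteq X^0_b$ are in $M^+$ with price $k$ and are MBB for $b$, where $\alpha_b=1$. So $v_b(e')=k$ for every $e'\in X_b$, all goods in $X_b$ have the same value to $b$, and any choice is a minimum.

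\emph{Minimum value for $b$: raised case.} If $b\notin U$, then $\alpha_b=1/k$, and $e$ has price $k$ and is in $\MBB_b$, so $v_b(e)=1$, which is the smallest possible value. The remaining check is that the unraised goods $b$ holds indeed have $\alpha_{b,e}=1/k$. This follows from Invariant \ref{inv: WEQX raised group}(3) and Invariant \ref{inv: WEQX equilibrium}, since $b\in N_{<r^*}$ and $e$ comes from $N_{\ge r^*}$.

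Combining the two cases, $e$ attains the minimum value in $X_b$ from $b$'s viewpoint, which completes the proof.
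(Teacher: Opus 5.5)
Your proposal is correct and follows essentially the same route as the paper. You bound the prices via $M^-\subseteq\cup_{i\in N_1}X^0_i$, then split on $b\in U$ versus $b\notin U$, using the Raised Group Invariant and Lemma~\ref{lem: WEQX transferred good} to show that $e$ is an unraised, price-$k$ good from a group above $N_r$. This gives $v_b\equiv k$ on $X_b$ in the first case and $v_b(e)=\alpha_b\,p(e)=1$ in the second. The only cosmetic difference is that you read off $v_\ell(e)=1$ directly from property 2 of Lemma~\ref{lem: WEQX properties of initial allocation} and justify $p(e)=k$ via $e\in M^+$, whereas the paper gets both from $\alpha_{\ell,e}=\alpha_\ell=1/k$ together with $e$ being unraised.
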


\begin{proof}
	Note that in $(X^0,p^0)$, $p(e)=k$ for all $e\in M^{+}$ and $p(e)=1$ for all $e\in M^{-}$. By Lemma \ref{lem: WEQX properties of initial allocation}, $M^{-}\subseteq \cup_{i\in N_1} X_i^0$.
	Since $N_1$ is raised in the first price-rise round, we have $p(e)=\{k,k^2\}$ for all $e\in M$ after this round.
	
	If $b\in U$, by Invariant \ref{inv: WEQX raised group}, we have $e\in X_b\subseteq X_b^0$ is unraised, $\alpha_{\ell}=1/k$, $\alpha_{\ell,e}=1/k$ and $\alpha_{b}=1$. Thus, $p(e)=k$, $v_{\ell}(e)=1$ and $e\in\MBB_{\ell}$. Besides, for any $e'\in X_b$, $v_b(e')=p(e')=k$. Thus, $v_b(e)$ is the smallest.
	
	If $b\notin U$, assume we are in the $r$-th iteration of the \textbf{for} loop processing group $N_r$.
	By Lemma \ref{lem: WEQX transferred good}, $e\in X_b\setminus X^0_b\subseteq \cup_{j\in N_{>r}} X^0_j$. Again by Invariant \ref{inv: WEQX raised group}, $e$ is unraised, $\alpha_{\ell}=1/k$ and $\alpha_{\ell,e}=1/k$. Thus, $p(e)=k$, $v_{\ell}(e)=1$ and $e\in\MBB_{\ell}$.
	Besides, we have $v_b(e)=1$ since $\alpha_b=1/k$ and $p(e)=k$. This again implies that $v_b(e)$ is the smallest.
\end{proof}

\paragraph{WEQX and fPO Properties.}
Assume that all invariants are maintained throughout the execution of Algorithm \ref{alg: WEQX and fPO}. We now show that $(X,p)$ is WEQX and fPO.

\begin{lemma}
	Algorithm \ref{alg: WEQX and fPO} outputs a WEQX and fPO allocation. 
\end{lemma}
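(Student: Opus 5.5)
The proof splits the statement into fPO and WEQX. We assume, as for Algorithm \ref{alg: WEFX and fPO}, that the invariants of Algorithm \ref{alg: WEQX and fPO} hold throughout.

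\emph{fPO.} By the Equilibrium Invariant (Invariant \ref{inv: WEQX equilibrium}), the returned $(X,p)$ is an equilibrium, so $X$ is fPO by Lemma \ref{lem: MBB_to_fPO}.

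\emph{WEQX.} Unlike the WEFX case, the termination condition here is not relaxed, so there is no analogue of the factor-$k$ argument. The plan is to show that upon termination every agent is WEQX toward the big valuer $b$. Since $b$ maximizes $\hat{v}_j$, this gives $\frac{v_i(X_i)}{w_i}\geq \hat{v}_b\geq \hat{v}_j\geq \frac{v_j(X_j\setminus\{e\})}{w_j}$ for all $i,j$ and all $e\in X_j$, because $\hat{v}_j$ is the maximum over removals. So it suffices to show $Q=N$ at termination.

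There are two ways the algorithm can terminate.

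\textbf{Case 1: the algorithm returns inside the \textbf{for} loop at iteration $r^*$.} Then $\frac{v_\ell(X_\ell)}{w_\ell}\geq \hat{v}_b$, where $\ell$ is the least valuer in $N_{r^*}$, so $N_{r^*}\subseteq Q$. Lemma \ref{lem: WEQX raised groups are in Q} gives $N_{<r^*}\subseteq Q$. It remains to handle $N_{>r^*}$, which are unraised.
\begin{itemize}
\item If $b$ has ever been a big valuer in an unraised group $N_s$ during a transfer round, then Lemma \ref{lem: WEQX some unraised groups are in Q} and Invariant \ref{inv: WEQX monotone Q} give $N_{\geq s}\subseteq Q$.
\item For any other unraised agent $i\in N_r$ with $r>r^*$ not in $Q$: $i$ was never a big valuer and never received goods (Invariant \ref{inv: WEQX raised group}, property 5), so $X_i=X_i^0$. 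The group construction then gives
\[\frac{v_i(X_i^0)}{w_i}\geq \frac{v_{\ell_{r^*}}(X^0_{\ell_{r^*}})}{w_{\ell_{r^*}}}.\]
By the same reasoning, the least valuer $\ell_{r^*}$ is untouched, i.e.\ $X_{\ell_{r^*}}=X^0_{\ell_{r^*}}$. This mirrors Lemma \ref{lem: properties of reallocation algorithm}(3): $\ell$ is the least valuer among $N\setminus Q$, since every agent of $N\setminus Q$ lies in an unraised group of index $\geq r^*$ with its initial bundle. Hence $\frac{v_i(X_i)}{w_i}\geq \frac{v_\ell(X_\ell)}{w_\ell}\geq \hat{v}_b$, so $i\in Q$.
\end{itemize}
Here the point is that valuations do not change with prices, so raising prices does not affect the comparison. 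This is exactly why the unrelaxed condition suffices.

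\textbf{Case 2: the \textbf{for} loop finishes after processing $N_{R-1}$.} Then $N_{\leq R-1}\subseteq Q$ by Lemma \ref{lem: WEQX raised groups are in Q}. For $N_R$: if $b\in N_R$, then Invariant \ref{inv: WEQX} gives $N_R\subseteq Q$. Otherwise $b$ is raised, so every agent of $N_{\leq R-1}$ is in $Q$. For $N_R$ itself, the agents were never big valuers (else they are in $Q$ by Lemma \ref{lem: WEQX some unraised groups are in Q}), so they hold their initial bundles. Then
\[\frac{v_i(X_i)}{w_i}\geq \frac{v_{\ell_R}(X^0_{\ell_R})}{w_{\ell_R}}\geq \frac{v_{\ell_{R-1}}(X^0_{\ell_{R-1}})}{w_{\ell_{R-1}}},\]
and the last quantity is at most the current value of the least valuer of $N_{R-1}$, because raised agents only gain goods. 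That value is $\geq\hat{v}_b$ at the exit of the while loop, and this persists by Invariant \ref{inv: WEQX big spending}, since $\hat{v}_b$ is nonincreasing.

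\textbf{Main obstacle.} The main difficulty is the bookkeeping in Case 1: showing that the agents of $N\setminus Q$ are exactly untouched unraised agents in groups of index $\geq r^*$. Once that is in hand, the chain $\frac{v_i(X_i)}{w_i}\geq\frac{v_\ell(X_\ell)}{w_\ell}\geq\hat{v}_b$ closes directly, without the MBB-ratio argument of Lemma \ref{lem: MBB ratios}.
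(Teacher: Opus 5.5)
\textbf{Case 1 is fine; Case 2 has a genuine gap.}

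Your fPO argument and your Case 1 are essentially the paper's proof. The paper also gets $N_{<r^*}\subseteq Q$ from Lemma \ref{lem: WEQX raised groups are in Q}. For $i\in N_r$ with $r\geq r^*$, it argues that either some unraised agent served as a big valuer, and Lemma \ref{lem: WEQX some unraised groups are in Q} puts $i$ in $Q$, or $X_i=X_i^0$ and the termination condition gives the chain $\frac{v_i(X_i^0)}{w_i}\geq \frac{v_\ell(X_\ell)}{w_\ell}\geq \hat{v}_b$. In that chain you only need $X_{\ell_{r^*}}\subseteq X^0_{\ell_{r^*}}$, which Invariant \ref{inv: WEQX raised group} gives directly; you do not need $\ell_{r^*}$ to be untouched. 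The paper never separately treats the case where the \textbf{for} loop runs to completion, since it simply invokes ``the termination condition''. So your Case 2 addresses something the paper leaves implicit, but your argument for it does not work.

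\textbf{Where Case 2 breaks.} You set $L=\frac{v_{\ell_{R-1}}(X^0_{\ell_{R-1}})}{w_{\ell_{R-1}}}$ and let $V$ be the current value of the least valuer of $N_{R-1}$. You then prove three facts:
\begin{itemize}
\item $\frac{v_i(X_i)}{w_i}\geq L$;
\item $L\leq V$;
\item $V\geq \hat{v}_b$.
\end{itemize}
These do not combine. From $\frac{v_i(X_i)}{w_i}\geq L\leq V\geq \hat{v}_b$ nothing follows about $\frac{v_i(X_i)}{w_i}$ versus $\hat{v}_b$. The inequality $L\leq V$ points the wrong way for your purpose. The raised group $N_{R-1}$ reaches $\hat{v}_b$ precisely by absorbing goods, and that says nothing about the untouched agents of $N_R$.

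\textbf{How to fix it.} Use the tool from your own first bullet of Case 1 instead of comparing with $N_{R-1}$.
\begin{itemize}
\item If $R\geq 2$ and the algorithm does not return at $r=1$, the \textbf{while} loop for $N_1$ executes at least one transfer round. Price rises do not change valuations, so the loop condition still holds right after line \ref{line: WEQX raise price}.
\item At that first transfer round, the big valuer is not in $N_1$ by Observation \ref{obs: WEQX l and b are not in the same group}. So it lies in some unraised $N_s$ with $s\geq 2$.
\item Lemma \ref{lem: WEQX some unraised groups are in Q} and Invariant \ref{inv: WEQX monotone Q} then give $N_R\subseteq N_{\geq s}\subseteq Q$ at termination.
\end{itemize}
The case $R=1$ is already covered by your subcase $b\in N_R$.
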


\begin{proof}
	Assume that Algorithm \ref{alg: WEQX and fPO} outputs a tuple $(X,p,\{N_r\}_{r\in [R]}, \ell, b)$, where $N_1,\ldots, N_{r^*-1}$ are raised, $N_{r^*},\ldots, N_R$ are unraised, $\ell=\arg \min_{i \in N_{r^*}} \left\{ \frac{v_i(X_i)}{w_i} \right\}$ and $b = \arg \max_{i \in N} \{\hat{v}_i\}$.
	We now show that $N_{\geq r^*}\subseteq Q$.
	Let $i\in N_r$ for some $r\geq r^*$. If some agent in $N_{\leq r}\setminus N_{<r^*}$ has been served as a big valuer, then $i\in N_r\subseteq Q$ by Lemma \ref{lem: WEQX some unraised groups are in Q}. If no agent in $N_{\leq r}\setminus N_{<r^*}$ has been served as a big valuer, then for any $j\in N_{\leq r}\setminus N_{<r^*}$, $j$ never loses goods. It follows that $X_j=X_j^0$ by Invariant \ref{inv: WEQX raised group} and $\ell$ is indeed the least valuer in $N_{\leq r}\setminus N_{<r^*}$. Thus, we have
	$\frac{v_i(X_i)}{w_i}=\frac{v_i(X^0_i)}{w_i}\geq \frac{v_{\ell}(X^0_{\ell})}{w_{\ell}}= \frac{v_{\ell}(X_{\ell})}{w_{\ell}}\geq\hat{v}_b$, which means $i\in Q$.
	The last inequality is due to the termination condition.
	On the other hand, $N_{< r^*}\subseteq Q$ by Lemma \ref{lem: WEQX raised groups are in Q}. Thus, all agents are WEQX.
	Finally, by Invariant \ref{inv: WEQX equilibrium}, $(X,p)$ is an equilibrium.
	Thus, $(X,p)$ is WEFX and fPO.
\end{proof}

\begin{lemma}
	Algorithm \ref{alg: WEQX and fPO} terminates in $O(\min\{k,m\}n^2m^2)$ time.
\end{lemma}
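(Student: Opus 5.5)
The bound follows by mirroring the running-time analysis of Algorithm \ref{alg: WEFX and fPO}. The running time splits into two parts: the call to Algorithm \ref{alg: WEQX initial allocation}, and the \textbf{for}/\textbf{while} loops of Algorithm \ref{alg: WEQX and fPO}. We have already shown that Algorithm \ref{alg: WEQX initial allocation} terminates in $O(\min\{k,m\}n^2m^2)$ time. It therefore suffices to show that the reallocation phase runs in time dominated by this, for instance $O(nm\log m)$.

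\textbf{Counting transfer rounds.} The \textbf{for} loop has at most $R-1<n$ iterations. Each price-rise round costs $O(m)$ to rescale prices, so all price-rise rounds together cost $O(nm)$. For the transfer rounds, note that in every transfer round a good moves from $b$ to $\ell$, where $\ell$ lies in the group $N_r$ currently being processed. By Invariant \ref{inv: WEQX raised group}(4), raised agents never lose goods from $X^0$. By Lemma \ref{lem: WEQX transferred good}, a raised big valuer only gives away goods in $X_b\setminus X^0_b\subseteq \cup_{j\in N_{>r}}X^0_j$, i.e., goods originally owned by unraised agents. Unraised agents never receive goods, by Invariant \ref{inv: WEQX raised group}(5).

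\textbf{Potential argument.} This gives the potential $\Phi=\bigl|\cup_{i\in N_{<r^*}\cup N_r}X_i\bigr|$, the number of goods held by raised agents (with $N_r$ counted once raised). Each transfer round moves a good either from an unraised agent to $\ell\in N_r$, which increases $\Phi$ by one, or between raised agents. The second case must be handled, since it does not change $\Phi$. For this I would use the more refined potential: the number of goods originally in $\cup_{j\in N_{\geq r^*}}X^0_j$ that are currently held by agents in $N_{<r^*}\cup N_r$. Alternatively, I would argue, as the paper does for Algorithm \ref{alg: WEFX and fPO}, that the goods held by $N_r$ strictly increase: since $b\notin N_r$ by Observation \ref{obs: WEQX l and b are not in the same group}, each transfer adds a good to $N_r$. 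Raised groups below $N_r$ only lose goods when they act as big valuers, and those goods go to $N_r$; once $N_r$ is finished, its goods stay in raised hands. A good can enter the raised region from the unraised region at most once, and the number of goods in the raised region never decreases. So I would bound the transfer rounds per group by $m$, and in total by $nm$ (or $m$ with the sharper accounting). This accounting, showing that goods do not cycle among raised agents, is the step I expect to be the main obstacle. I would try first to verify that each transferred good enters $N_r$ and never leaves $N_{\le r}$ afterward except toward a later $N_{r'}$ with $r'>r$.

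\textbf{Cost per round and conclusion.} Each transfer round takes $O(\log m)$ time by maintaining $v_i(X_i)/w_i$ and $\hat v_i$ in heaps. The per-agent updates are constant-time, since $\hat v_i$ changes only by removing a minimum-value good, which by Lemma \ref{lem: WEQX price and transferred good} is the transferred good. Multiplying the round bound by the per-round cost gives $O(nm\log m)$ for the reallocation phase. Adding the cost of Algorithm \ref{alg: WEQX initial allocation} yields the claimed $O(\min\{k,m\}n^2m^2)$ bound.
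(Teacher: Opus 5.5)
Your proposal is correct and follows the paper's argument: the \textbf{for} loop has at most $n$ iterations; each group gets at most $m$ transfer rounds, because $b\notin N_r$, so every transfer adds a good to $N_r$ and no agent of $N_r$ ever loses one; each round costs $O(\log m)$; and the total is dominated by the cost of Algorithm~\ref{alg: WEQX initial allocation}. The ``no cycling among raised agents'' step you flag as the main obstacle is not needed for this bound, and your parenthetical total of $m$ transfers is unjustified, since a good taken by a low raised group can later be passed on to a higher one; you can drop the $\Phi$ discussion entirely.
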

\begin{proof}
	First note that an iteration of the \textbf{while} loop takes $O(\log m)$ time, since it takes $O(1)$ time to transfer a good and takes $O(\log m)$ time to update the statuses of the least and big valuers.
	The \textbf{while} loop has at most $m$ iterations in total, since in each iteration, the number of goods in $N_r$ increases by $1$.
	Thus, the \textbf{while} loop takes $O(m\log m)$ times.
	Besides, it is easy to see that the \textbf{for} loop has at most $n$ iterations. Thus, the \textbf{for} loop takes $O(nm\log m)$ time.
	Since Algorithm \ref{alg: WEQX and fPO} invokes Algorithm \ref{alg: WEQX initial allocation} and the latter terminates in $O(\min\{k,m\}n^2m^2)$ time.
	Algorithm \ref{alg: WEQX and fPO} also terminates in $O(\min\{k,m\}n^2m^2)$ time.
\end{proof}

To conclude, we have the following theorem.
\begin{theorem}[Restatement of Theorem \ref{thm: WEQX and fPO}]
	Algorithm \ref{alg: WEQX and fPO} returns a WEQX and fPO allocation for bivalued goods and terminates in $O(\min\{k,m\}n^2m^2)$ time. 
\end{theorem}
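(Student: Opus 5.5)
The statement is the restated Theorem for Algorithm \ref{alg: WEQX and fPO}: it returns a WEQX and fPO allocation in $O(\min\{k,m\}n^2m^2)$ time. The plan is to combine the two preceding lemmas (correctness and running time). The correctness lemma relies on Invariants \ref{inv: WEQX equilibrium}--\ref{inv: WEQX monotone Q}. So what is really missing is a check that these invariants are maintained in every round, mirroring the maintenance lemmas of Section \ref{sec: WEFX and fPO}. The proof will therefore be an induction on rounds. At round $0$ every invariant holds by Lemma \ref{lem: WEQX properties of initial allocation}. Assuming they hold at the beginning of round $t$, we show they hold at its end.

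\textbf{Equilibrium and Raised Group Invariants.} For the Equilibrium Invariant, a price-rise round multiplies every price in $\cup_{i\in N_r}X_i=\cup_{i\in N_r}X_i^0$ by $k$. This equality holds by the Least Valuer Invariant. Agents in $N_r$ then have $\alpha_i=1/k$, attained on their own goods and on unraised goods of higher groups. In a transfer round, Lemma \ref{lem: WEQX price and transferred good} gives $e\in\MBB_\ell$. The Raised Group Invariant is verified item by item exactly as before, since it depends only on prices and on which groups send and receive goods, not on the comparison quantity.

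\textbf{Big Value, Group WEQX, $Q$ and Least Valuer Invariants.} These are the genuinely utility-based ones, and here WEQX is actually easier than pWEFX. A price rise does not change any $v_i(X_i)$. Hence $\hat v$, the group WEQX property, and $Q$ are all untouched in such a round. There is no need for the factor-$k$ termination slack that the spending version required.

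In a transfer round, only $\ell$ gains, and by Lemma \ref{lem: WEQX price and transferred good} she gains a good with $v_\ell(e)=1$. The good $e$ is also a minimum-valued good of $b$, so
\[ \frac{v_b(X_b\setminus\{e\})}{w_b}=\hat v_b . \]
On the receiving side, $\hat v^{t+1}_\ell=v_\ell(X^t_\ell)/w_\ell<\hat v^t_b$, which gives the Big Value Invariant. Members of $\ell$'s group stay WEQX toward $\ell$, because $\hat v_\ell$ equals $\ell$'s old value, which was minimum in the group.

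On the sending side, $b$'s new value equals $\hat v^t_b\ge \hat v^{t+1}_{b^{t+1}}\ge\hat v^{t+1}_i$. This keeps $b$ WEQX toward its group and keeps $b\in Q$. Every other member of $Q$ has non-decreasing value while the threshold is non-increasing, so $Q$ is monotone. The Least Valuer Invariant follows by the same contradiction argument as before, via Lemma \ref{lem: WEQX some unraised groups are in Q}.

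\textbf{Main obstacle.} The delicate point is the minimum-value claim of Lemma \ref{lem: WEQX price and transferred good} when $b$ is raised. It is needed so that $b$'s post-transfer utility is exactly $\hat v_b$. If that lemma is accepted, everything else is a line-by-line transcription of the spending proofs with $p$ replaced by $v$.

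Finally, combining the invariants with the correctness lemma shows that every agent lies in $Q$, so the output is WEQX. The equilibrium property gives fPO by Lemma \ref{lem: MBB_to_fPO}, and the running-time lemma gives the bound.
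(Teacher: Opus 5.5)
Your proposal is correct and follows the paper's own route: the theorem comes from combining the WEQX correctness lemma and the running-time lemma with a round-by-round induction showing the six invariants are maintained. You verify them as the paper does, using that price rises leave every utility unchanged and that, by Lemma~\ref{lem: WEQX price and transferred good}, each transferred good satisfies $v_\ell(e)=1$, lies in $\MBB_\ell$, and is minimum-valued for $b$. The step you call delicate is already settled in that lemma: for a raised $b$, $\alpha_b=1/k$ and $p(e)=k$ force $v_b(e)=1$, so nothing beyond the paper's lemmas is needed.
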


\paragraph{Maintenance of Invariants.}
It is easy to see that all the invariants hold at the beginning of round $0$.
Assume that they hold at the beginning of round $t$, we now show that they are maintained at the end of round $t$.

\begin{lemma}
	The Equilibrium Invariant (Invariant \ref{inv: WEQX equilibrium}) are maintained at the end of round $t$.
\end{lemma}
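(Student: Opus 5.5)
The plan is to mirror the proof of the Equilibrium Invariant for Algorithm \ref{alg: WEFX and fPO}, splitting round $t$ into the two possible types and using Invariant \ref{inv: WEQX raised group} together with Lemma \ref{lem: WEQX price and transferred good}. Throughout, suppose round $t$ belongs to the $r$-th iteration of the \textbf{for} loop, so it processes group $N_r$, and suppose $(X^t,p^t)$ is an equilibrium by the inductive hypothesis.

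\textbf{Price-rise rounds.} Only prices of goods in $\bigcup_{i\in N_r}X_i$ change. By Invariant \ref{inv: WEQX least valuer}, $X_i=X_i^0$ for every $i\in N_r$, so the raised goods are exactly $\bigcup_{i\in N_r}X^0_i$.

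For $j\notin N_r$, first note that $X_j$ is unchanged. We then argue that $\MBB_j$ does not lose any good of $X_j$. The goods of $X_j$ keep their price, and raising other prices can only lower other bang-per-buck ratios. So every good of $X_j$ remains MBB for $j$. (Here $\alpha_j$ could only decrease, and the goods in $X_j$ still attain it.)

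For $i\in N_r$, we use that $N_r$ was unraised before this round. By Lemma \ref{lem: WEQX properties of initial allocation} and Invariant \ref{inv: WEQX raised group}(5), $\alpha_i=1$ and every $e\in X_i=X_i^0$ has $\alpha_{i,e}=1$. After the raise these goods have ratio $1/k$. Every other good has ratio at most $1/k$ for $i$:
\begin{itemize}
\item goods in $\bigcup_{j\in N_{<r}}X^0_j$ were already raised, so their ratio is at most $1/k$ by property 2;
\item goods in higher groups $N_{>r}$ have $v_i(e)=1$ and $p(e)=k$ by Lemma \ref{lem: WEQX properties of initial allocation}(2) and Lemma \ref{lem: WEQX price and transferred good}, giving ratio $1/k$. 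Before the first raise $p=1$ only occurs in $N_1$, and $N_1$ is the group being raised.
\end{itemize}
Hence $\alpha_i=1/k$ and $X_i\subseteq\MBB_i$.

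\textbf{Transfer rounds.} Prices are fixed, so all MBB sets are unchanged. The only allocation change is that $e^t$ moves from $b^t$ to $\ell^t$. Agent $b^t$ loses a good, which cannot violate the MBB condition. By Lemma \ref{lem: WEQX price and transferred good}, $e^t\in\MBB_{\ell^t}$, so $X_{\ell^t}\cup\{e^t\}\subseteq\MBB_{\ell^t}$.

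The only delicate point is the price-rise case for $i\in N_r$. There we must make sure no good outside $X_i$ keeps ratio $1$ after the raise. This is exactly where property 2 of Invariant \ref{inv: WEQX raised group} and the placement $M^-\subseteq\bigcup_{i\in N_1}X^0_i$ are used. Once that is checked, the argument is routine.
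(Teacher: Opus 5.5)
Your proposal is correct and takes essentially the same approach as the paper: you split into price-rise and transfer rounds, show that after the raise the goods of each $i\in N_r$ have ratio $1/k=\alpha_i$, and cite Lemma~\ref{lem: WEQX price and transferred good} for the transferred good. Where you differ, you derive $\alpha_i=1/k$ from pre-round facts instead of citing the post-round Raised Group Invariant, and your statement that $\MBB_j$ for $j\notin N_r$ keeps all goods of $X_j$ is more accurate than the paper's claim that $\MBB_j$ is unchanged; your list of ``other goods'' omits those held by the other agents of $N_r$, but these are raised in the same round, so their ratio for $i$ drops from at most $1$ to at most $1/k$.
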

\begin{proof}
	Assume we are in the $r$-th iteration of the \textbf{for} loop processing group $N_r$.
	For the price-rise round, it is easy to see that for any $j\notin N_{r}$, both $X_j$ and $\MBB_j$ remain unchanged. For any $i\in N_{r}$,
	By Invariant \ref{inv: WEQX raised group}, $X_i=X^0_i$, $\alpha_i=1/k$, and $\alpha_{i,e}=1/k$ for any $e\in X_i$.
	Thus, $X_i\subseteq \MBB_i$.
	
	
	For the transfer round, a good $e$ is transferred from $b$ to $\ell$. By Lemma \ref{lem: WEQX price and transferred good}, $e\in \MBB_{\ell}$.
	Thus, $(X,p)$ remains an equilibrium.
\end{proof}

\begin{lemma}
	The Group WEQX Invariant (Invariant \ref{inv: WEQX}) are maintained at the end of round $t$.
\end{lemma}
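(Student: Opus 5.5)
Price-rise rounds leave the allocation unchanged, and utilities $v_i(X_i)$ do not depend on prices. So every quantity $v_i(X_i)/w_i$ and $\hat v_i$ is the same after a price-rise round as before it, and the invariant is trivially preserved. The work is in a transfer round $t$. Here $X^{t+1}_{\ell^t}=X^t_{\ell^t}\cup\{e^t\}$, $X^{t+1}_{b^t}=X^t_{b^t}\setminus\{e^t\}$, and all other bundles are unchanged. Let $\ell^t\in N_r$ and $b^t\in N_{r'}$; by Observation \ref{obs: WEQX l and b are not in the same group}, $r\neq r'$. As in the pWEFX case, only two kinds of pair can newly violate WEQX inside a group:
\begin{itemize}
\item an agent $i\in N_r\setminus\{\ell^t\}$ checked against $\ell^t$, whose bundle grew;
\item $b^t$ checked against an agent $i\in N_{r'}\setminus\{b^t\}$, since $b^t$'s utility dropped.
\end{itemize}
Every other pair within a group either is unchanged or has a weakly larger left-hand side and weakly smaller right-hand side.

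For the first kind, take $i\in N_r\setminus\{\ell^t\}$. Then
\[
\frac{v_i(X^{t+1}_i)}{w_i}=\frac{v_i(X^t_i)}{w_i}\geq \frac{v_{\ell^t}(X^t_{\ell^t})}{w_{\ell^t}},
\]
because $\ell^t$ is the least valuer in $N_r$. I then need $\hat v^{t+1}_{\ell^t}=v_{\ell^t}(X^t_{\ell^t})/w_{\ell^t}$, i.e., removing $e^t$ from $\ell^t$'s new bundle is the best removal. Lemma \ref{lem: WEQX price and transferred good} gives $v_{\ell^t}(e^t)=1$, the minimum possible value, so removing $e^t$ minimizes the remaining value and the identity holds.

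For the second kind, take $i\in N_{r'}\setminus\{b^t\}$. Lemma \ref{lem: WEQX price and transferred good} says $e^t$ has minimum value in $b^t$'s own eyes, so
\[
\frac{v_{b^t}(X^{t+1}_{b^t})}{w_{b^t}}=\frac{v_{b^t}(X^t_{b^t}\setminus\{e^t\})}{w_{b^t}}=\hat v^t_{b^t}.
\]
By the Big Value Invariant (Invariant \ref{inv: WEQX big spending}) and the definition of $b^{t+1}$,
\[
\hat v^t_{b^t}\geq \hat v^{t+1}_{b^{t+1}}\geq \hat v^{t+1}_i,
\]
so $b^t$ stays WEQX toward $i$. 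Strictly, the first inequality should be justified for round $t$ itself, not just assumed as an already-holding invariant. The cleanest way is to show directly that $\hat v^{t+1}_j\leq \hat v^t_{b^t}$ for every $j$: only $\ell^t$'s $\hat v$ can increase, and $\hat v^{t+1}_{\ell^t}=v_{\ell^t}(X^t_{\ell^t})/w_{\ell^t}<\hat v^t_{b^t}$ by the loop condition.

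I expect the main obstacle to be the identities $\hat v^{t+1}_{\ell^t}=v_{\ell^t}(X^t_{\ell^t})/w_{\ell^t}$ and $v_{b^t}(X^{t+1}_{b^t})/w_{b^t}=\hat v^t_{b^t}$. These replace the price-based facts used in the pWEFX version, and both rely on the value-minimality statements of Lemma \ref{lem: WEQX price and transferred good}. I will invoke that lemma carefully in both cases, $b^t\in U$ and $b^t\notin U$.
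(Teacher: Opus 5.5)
Your proposal is correct and follows the paper's proof essentially step for step. You use the same reduction to the two affected kinds of pairs ($i\in N_r\setminus\{\ell^t\}$ toward $\ell^t$, and $b^t$ toward $i\in N_{r'}\setminus\{b^t\}$), the same value-minimality facts from Lemma~\ref{lem: WEQX price and transferred good}, and the Big Value Invariant. Your direct check that $\hat v^{t+1}_j\le \hat v^t_{b^t}$ for every $j$ simply inlines the paper's own proof of that invariant, and it usefully removes the apparent circularity of citing a statement about round $t+1$ while maintaining another invariant at round $t$.
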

\begin{proof}
	Clearly, the price-rise rounds do not affect the invariant. It suffices to consider the transfer rounds.
	In a transfer round $t$, $X^{t+1}_{\ell^{t}}=X^{t}_{\ell^{t}}\cup \{e^t\}, X^{t+1}_{b^{t}}=X^{t}_{b^{t}}\setminus \{e^t\}$, and $X^{t+1}_{i}=X^{t}_{i}$ for all $i\notin\{\ell^t,b^t\}$.
	Assume that $\ell^t\in N_r$ and $b^t\in N_{r'}$. By Observation \ref{obs: WEQX l and b are not in the same group}, $r\neq r'$. It suffices to show that each $i\in N_r\setminus\{\ell^t\}$ remains WEQX toward $\ell^t$, and $b^t$ remains WEQX toward every $i\in N_{r'}\setminus\{b^t\}$.
	
	For any $i\in N_r\setminus\{\ell^t\}$, we have $\frac{v_i(X^{t+1}_i)}{w_i}=\frac{v_i(X^{t}_i)}{w_i}\geq \frac{v_{\ell^t}(X^{t}_{\ell^t})}{w_{\ell^t}}=\frac{v_{\ell^t}(X^{t+1}_{\ell^t}\setminus\{e^t\})}{w_{\ell^t}}=\hat{v}^{t+1}_{\ell^t}$.
	The inequality holds since $\ell^t$ is the least valuer in $N_r$. The last equality holds by Lemma \ref{lem: WEQX price and transferred good}.
	
	For any $i\in N_{r'}\setminus\{b^t\}$, we have $\frac{v_{b^t}(X^{t+1}_{b^{t}})}{w_t}=\frac{v_{b^t}(X^{t}_{b^{t}}\setminus\{e^t\})}{w_t}=\hat{v}^t_{b^t}\geq \hat{v}^{t+1}_{b^{t+1}}\geq \hat{v}^{t+1}_{i}$.
	The second equality is due to Lemma \ref{lem: WEQX price and transferred good}. The first inequality is due to Invariant \ref{inv: WEQX big spending}.
\end{proof}

\begin{lemma}
	The Raised Group Invariant (Invariant \ref{inv: WEQX raised group}) are maintained at the end of round $t$.
\end{lemma}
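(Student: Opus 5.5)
The plan is to mirror the proof of the corresponding lemma for Algorithm \ref{alg: WEFX and fPO}, checking each of the five properties of Invariant \ref{inv: WEQX raised group} in turn. Throughout, we use Invariant \ref{inv: WEQX least valuer} and Lemma \ref{lem: WEQX price and transferred good} in place of their spending counterparts. The first claim is immediate from the algorithm. $N_r$ is raised only in line \ref{line: WEQX raise price} of the $r$-th iteration of the \textbf{for} loop, and each iteration raises exactly one new group. Hence at the start of the $r^*$-th iteration, $N_{<r^*}$ has been raised exactly once and $N_{\geq r^*}$ not at all.

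\textbf{Property 1.} Suppose $N_r$ is raised in round $t$. By Invariant \ref{inv: WEQX least valuer}, $X^t_j=X^0_j$ for all $j\in N_r$. So the goods raised in that round are exactly $\cup_{j\in N_r}X^0_j$. Since distinct groups have disjoint initial bundles, each good of $\cup_{j\in N_{<r^*}}X^0_j$ is raised exactly once and no other good is raised.

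\textbf{Properties 2 and 3.} In $(X^0,p^0)$ every $\alpha_{i,e}\leq 1=\alpha_i$ by Lemma \ref{lem: WEQX properties of initial allocation}. Goods in $N_{<r^*}$'s initial bundles had their price multiplied by $k$ (Property 1), so $\alpha_{i,e}\leq 1/k$. For $i\in N_{<r^*}$ and $e\in X^0_j$ with $j\in N_{\geq r^*}$, property 2 of Lemma \ref{lem: WEQX properties of initial allocation} gives $v_i(e)=1$. Also $e\notin M^-$ by property 3 of that lemma, since $j\notin N_1$ (note $r^*\geq 2$ whenever $N_{<r^*}\neq\emptyset$). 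Hence $p^0(e)=k$, and $e$ is unraised, so $\alpha_{i,e}=1/k$ persists.

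\textbf{Properties 4 and 5.} For $i\in N_{<r^*}$, $\alpha_i=1/k$ follows from properties 2 and 3, since every good lies in some initial bundle. For $X^0_i\subseteq X_i$: until $i$'s group is raised, $X_i=X^0_i$ by Invariant \ref{inv: WEQX least valuer}. Afterwards $i$ either receives goods as a least valuer, or loses a good from $X_i\setminus X^0_i$ as a big valuer (line \ref{line: WEQX transfer from raised agent}, valid by Lemma \ref{lem: WEQX transferred good}). For $i\in N_{\geq r^*}$, $i$ is never a least valuer, because only the currently processed (raised) group receives goods. Therefore $X_i\subseteq X^0_i$. None of $i$'s goods nor any of $i$'s MBB goods are raised: raised goods have $\alpha_{i,e}\leq 1/k$ and $i$ still owns unraised goods with ratio $1$. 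Thus $\alpha_i=1$ is preserved.

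\textbf{Main obstacle.} The delicate point is property 5 when $X_i$ becomes empty. The plan is to argue that $\alpha_i$ is a maximum over all goods, not over $X_i$. Since $\alpha_{i,e}$ values for unraised goods are unchanged from $(X^0,p^0)$, and $X^0_i$ is unraised and had ratio $1$, the maximum remains $1$. This holds because $X^0_i$ goods stay unraised regardless of who owns them. A second delicate point is making sure the round-$t$ transfer in property 4 is drawn from $X_b\setminus X^0_b$, which is exactly what the algorithm guarantees.
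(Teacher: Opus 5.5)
Your proposal follows the paper's proof property by property: the Least Valuer Invariant for property 1, Lemma \ref{lem: WEQX properties of initial allocation} for properties 2--3, the restriction of line \ref{line: WEQX transfer from raised agent} to $X_b\setminus X^0_b$ for property 4, and ``only the processed, already-raised group receives goods'' for property 5; your derivation of $p^0(e)=k$ from $M^-\subseteq\cup_{j\in N_1}X^0_j$ and your justification of $\alpha_i=1$ via the unraised goods of $X^0_i$ just make explicit what the paper asserts. One caveat, which applies equally to the paper: that last justification needs $X^0_i\neq\emptyset$, and an agent in $N_{\geq 2}$ that starts with an empty bundle and values every good at $1$ has ratio $1$ only on goods of $M^-\subseteq\cup_{j\in N_1}X^0_j$, so once $N_1$ is raised its $\alpha_i=1/k$ and property 5 fails as literally stated (harmlessly, since such an agent holds nothing while unraised).
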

\begin{proof}
	Clearly, $N_r$ is raised in line \ref{line: WEQX raise price} in the $r$-th iteration of the \textbf{for} loop.
	Thus, at the beginning of the $r^*$-th iteration of the \textbf{for} loop, $N_1,\ldots,N_{r^*-1}$ are raised exactly once, and $N_{r^*},\ldots,N_R$ are not raised.
	\begin{enumerate}
		\item For any $r<r^*$, assume that $N_r$ is raised in round $t$. By Invariant \ref{inv: WEQX least valuer},
		$X^t_j=X^0_j$ for any $j\in N_r$. Thus, only goods in $\cup_{j\in N_{r}} X^0_j$ are raised in this round. The property then follows.
		\item For any $e\in \cup_{j\in N_{< r^*}} X_j^0$, note that $\alpha_{i,e}\leq 1$ in $(X^0,p^0)$ by Lemma \ref{lem: WEQX properties of initial allocation}.
		By property 1, $e$ has been raised. Thus, $\alpha_{i,e}\leq 1/k$ after $e$ is raised.
		\item For any $e\in\cup_{j\in N_{\geq r^*}} X_j^0$, $v_i(e)=1$, $p(e)=k$ and hence $\alpha_{i,e}=1/k$ in $(X^0,p^0)$ by Lemma \ref{lem: WEQX properties of initial allocation}, since $i\in N_{<r^*}$ is in a lower group than any $j\in N_{\geq r^*}$. By property 1, $e$ is not raised. Thus, $\alpha_{i,e}=1/k$ is preserved.
		\item For all $i\in N_{<r^*}$, $\alpha_i=1/k$ directly follows from properties 2 and 3.
		To show $X_i^0\subseteq X_i$, assume that $i\in N_r$ and $N_r$ is raised in round $t'$. By Invariant \ref{inv: WEQX least valuer},
		$X^{t''}_i=X_i^0$ for all $t''\leq t'$.
		After round $t'$, when $i$ serves as a least valuer, she receives new goods. when $i$ serves as a big valuer, she loses a good $e\in X_i\setminus X^0_i$.
		In both cases, $X^0_i\subseteq X_i$ persists.
		\item For all $i\in N_{\geq r^*}$, $\alpha_i=1$ in $(X^0,p^0)$ by Lemma \ref{lem: WEQX properties of initial allocation}. Since $i$ is unraised, $i$ can only serves as a big valuer and loses goods in $X_i^0$. Thus, $X_i\subseteq X^0_i$ and $\alpha_i=1$ is preserved.
	\end{enumerate}
\end{proof}

\begin{lemma}
	The Big Value Invariant (Invariant \ref{inv: WEQX big spending}) are maintained at the end of round $t$.
\end{lemma}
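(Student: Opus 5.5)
We have to show $\hat{v}^{t}_{b^{t}}\geq \hat{v}^{t+1}_{i}$ for every $i\in N$. This mirrors the proof of the Big Spending Invariant, with utilities in place of spendings. We split into price-rise rounds and transfer rounds.

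\emph{Price-rise rounds.} Raising prices does not change any bundle $X_i$ or any valuation $v_i$. Hence $\hat{v}^{t+1}_i=\hat{v}^t_i\leq \hat{v}^t_{b^t}$ for every $i$, because $b^t$ is the big valuer at round $t$. Unlike the WEFX case, no appeal to the failed \textbf{if} condition is needed here.

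\emph{Transfer rounds.} Here $X^{t+1}_{\ell^t}=X^t_{\ell^t}\cup\{e^t\}$ and $X^{t+1}_{b^t}=X^t_{b^t}\setminus\{e^t\}$, and every other bundle is unchanged.
\begin{itemize}
\item For $i\notin\{\ell^t,b^t\}$, the bundle is unchanged, so $\hat{v}^{t+1}_i=\hat{v}^t_i\leq\hat{v}^t_{b^t}$.
\item For $b^t$, removing a good cannot increase $\hat{v}$. Indeed, for any $e\in X^{t+1}_{b^t}$ we have $v_{b^t}(X^{t+1}_{b^t}\setminus\{e\})\leq v_{b^t}(X^{t}_{b^t}\setminus\{e\})$, so $\hat{v}^{t+1}_{b^t}\leq\hat{v}^t_{b^t}$. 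If $X^{t+1}_{b^t}$ is empty, we use the convention $\hat{v}=0$.
\item For $\ell^t$, this is the key case. By Lemma \ref{lem: WEQX price and transferred good}, $v_{\ell^t}(e^t)=1$, which is the minimum possible value. Therefore
\[
\hat{v}^{t+1}_{\ell^t}=\frac{v_{\ell^t}(X^{t+1}_{\ell^t}\setminus\{e^t\})}{w_{\ell^t}}=\frac{v_{\ell^t}(X^t_{\ell^t})}{w_{\ell^t}}<\hat{v}^t_{b^t}.
\]
The strict inequality is the \textbf{while}-loop condition, which holds at the start of transfer round $t$.
\end{itemize}

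Combining the cases gives $\hat{v}^{t+1}_{b^{t+1}}=\max_i \hat{v}^{t+1}_i\leq \hat{v}^t_{b^t}$.

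The only nontrivial step is the one for $\ell^t$. It relies on the received good having value $1$ for $\ell^t$, so that removing it realizes $\hat{v}^{t+1}_{\ell^t}$. That fact is supplied by Lemma \ref{lem: WEQX price and transferred good}, which may be assumed.
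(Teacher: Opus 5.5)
Your proposal is correct and follows essentially the same route as the paper. Price-rise rounds leave every $\hat{v}_i$ unchanged. In a transfer round only $\ell^t$ can gain, and Lemma~\ref{lem: WEQX price and transferred good} gives $v_{\ell^t}(e^t)=1$, so $\hat{v}^{t+1}_{\ell^t}=v_{\ell^t}(X^t_{\ell^t})/w_{\ell^t}<\hat{v}^t_{b^t}$ by the loop condition. Your explicit handling of $b^t$ and the unchanged agents just spells out what the paper compresses into ``only the value of $\ell^t$ increases.''
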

\begin{proof}
	Clearly, the price-rise rounds do not affect the invariant. It suffices to consider the transfer rounds and show that $\hat{v}^{t}_{b^{t}}\geq \hat{v}^{t+1}_{i}$ for any $i\in N$.
	
	For a transfer round $t$, $X^{t+1}_{\ell^{t}}=X^{t}_{\ell^{t}}\cup \{e^t\}, X^{t+1}_{b^{t}}=X^{t}_{b^{t}}\setminus \{e^t\}$, and $X^{t+1}_{i}=X^{t}_{i}$ for all $i\notin\{\ell^t,b^t\}$. Thus, only the value of $\ell^t$ increases and it suffices to consider $\ell^t$. By Lemma \ref{lem: WEQX price and transferred good}, we have $\hat{v}^{t+1}_{\ell^{t}}=\frac{v_{\ell^t}(X^{t+1}_{\ell^{t}} \setminus \{e^t\})}{w_{\ell^{t}}}=\frac{v_{\ell^t}(X^{t}_{\ell^{t}})}{w_{\ell^{t}}} < \hat{v}^{t}_{b^t}$. The inequality is due to the loop condition.
\end{proof}

\begin{lemma}
	The Least Spender Invariant (Invariant \ref{inv: WEQX least valuer}) are maintained at the end of round $t$.
\end{lemma}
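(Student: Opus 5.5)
We need to show that the Least Valuer Invariant (Invariant \ref{inv: WEQX least valuer}) is preserved at the end of round $t$. The argument mirrors the proof for the Least Spender Invariant in the WEFX section, with valuations replacing spendings. Suppose $N_r$ is raised in round $t$, so $t$ is the price-rise round of the $r$-th iteration of the \textbf{for} loop. The \textbf{if} test must then have failed, i.e.\ $\frac{v_{\ell^t}(X^t_{\ell^t})}{w_{\ell^t}}<\hat v^t_{b^t}$ with $\ell^t\in N_r$; equivalently, $\ell^t\notin Q^t$. The plan is to argue by contradiction: assume some $i\in N_r$ was the big valuer $b^{t'}$ at some transfer round $t'\le t$.

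First we pin down that $i$ was unraised at time $t'$. Since $N_r$ is raised only in round $t$, we have $i\in U$ at round $t'$. So Lemma \ref{lem: WEQX some unraised groups are in Q} applies with $b^{t'}=i\in N_r$ and gives $N_{\ge r}\subseteq Q^{t'}$; in particular $\ell^t\in Q^{t'}$. By the $Q$ Invariant (Invariant \ref{inv: WEQX monotone Q}), applied inductively over rounds $t',\dots,t$ (all valid by the induction hypothesis), $Q^{t'}\subseteq Q^t$. Hence $\ell^t\in Q^t$, which means $\frac{v_{\ell^t}(X^t_{\ell^t})}{w_{\ell^t}}\ge \hat v^t_{b^t}$. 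This contradicts the failed \textbf{if} test. So no agent of $N_r$ was a big valuer in rounds up to $t$.

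For the consequence $X^{t'}_i=X^0_i$, note that goods change hands only in transfer rounds, from the big valuer to the least valuer of the group currently being processed. Before round $t$, the groups processed are $N_1,\dots,N_{r-1}$, so an agent of $N_r$ never acted as a least valuer and never received a good. By the first part, it also never lost one. Hence $X_i$ is unchanged up to round $t$.

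The main subtlety is checking that Lemma \ref{lem: WEQX some unraised groups are in Q} really applies at round $t'$, which needs $i\in U$ there, and that the chain of $Q$ inclusions is available up to round $t$. Both follow from the round ordering and the induction hypothesis that all invariants hold at the start of round $t$.
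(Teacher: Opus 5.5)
Your proof is correct and follows the paper's argument. You assume an agent of $N_r$ was the big valuer at an earlier round, use Lemma \ref{lem: WEQX some unraised groups are in Q} together with the $Q$ Invariant to place $\ell^t$ in $Q^t$ (contradicting the failed \textbf{if} test), and then conclude that $X_i$ is unchanged. You are somewhat more careful than the paper: you check that $i\in U$ at round $t'$, and you justify why no agent of $N_r$ has yet served as a least valuer, which the paper only asserts with ``as a result''.
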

\begin{proof}
	Assume that $i=b^{t'}$ at some previous round $t'\leq t$. Then, $\ell^t\in Q^{t'}\subseteq Q^t$ By Lemma \ref{lem: WEQX some unraised groups are in Q} and Invariant \ref{inv: WEQX monotone Q}. Thus, $\frac{v_{\ell^t}^t(X^t_{\ell^t})}{w_{\ell^t}}\geq \hat{v}^t_{b^t}$, which contradicts to the conditions of the \textbf{while} loop or the \textbf{if} statement.
	As a result, $i$ has never been identified as a least valuer (who receives goods) or a big valuer (who loses goods) in a transfer round before. Then, $X^{t'}_i=X^0_i$ for all $t'\leq t$.
\end{proof}

\begin{lemma}
	The $Q$ Invariant (Invariant \ref{inv: WEQX monotone Q}) are maintained at the end of round $t$.
\end{lemma}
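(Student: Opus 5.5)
We argue exactly as in the proof of the $Q$ Invariant for Algorithm \ref{alg: WEFX and fPO}, replacing spendings by weighted utilities and Invariant \ref{inv: big spending} by Invariant \ref{inv: WEQX big spending}. It suffices to show that every $i\in Q^t$ satisfies $\frac{v_i(X^{t+1}_i)}{w_i}\geq \hat{v}^{t+1}_{b^{t+1}}$. In every case the chain of inequalities ends with $\hat{v}^t_{b^t}\geq \hat{v}^{t+1}_{b^{t+1}}$, which is Invariant \ref{inv: WEQX big spending} (already shown to be maintained in round $t$). We split according to whether round $t$ is a price-rise round or a transfer round.

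\emph{Price-rise round.} Here $X^{t+1}=X^t$ and only prices change, so utilities $v_i(X_i)$ and the quantities $\hat{v}_i$ do not change at all. Hence for any $i\in Q^t$ we have $\frac{v_i(X^{t+1}_i)}{w_i}=\frac{v_i(X^t_i)}{w_i}\geq \hat{v}^t_{b^t}=\hat{v}^{t+1}_{b^{t+1}}$. The last equality holds because the big valuer's value is unchanged; the inequality from Invariant \ref{inv: WEQX big spending} suffices in any case. This case is easier than in the WEFX setting.

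\emph{Transfer round.} Only $\ell^t$ and $b^t$ change bundles. For $i\in Q^t\setminus\{b^t\}$, the agent either keeps its bundle or gains $e^t$, and valuations are additive and nonnegative. So $\frac{v_i(X^{t+1}_i)}{w_i}\geq \frac{v_i(X^t_i)}{w_i}\geq \hat{v}^t_{b^t}\geq \hat{v}^{t+1}_{b^{t+1}}$. For $i=b^t$, we use Lemma \ref{lem: WEQX price and transferred good}: from $b^t$'s viewpoint, $e^t$ has minimum value in $X^t_{b^t}$. Therefore $\frac{v_{b^t}(X^{t+1}_{b^t})}{w_{b^t}}=\frac{v_{b^t}(X^t_{b^t}\setminus\{e^t\})}{w_{b^t}}=\hat{v}^t_{b^t}\geq\hat{v}^{t+1}_{b^{t+1}}$, so $b^t\in Q^{t+1}$.

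The only delicate point is this equality for $b^t$. It requires that the transferred good realizes the maximum in the definition of $\hat{v}_{b^t}$. That is exactly the extra clause proved in Lemma \ref{lem: WEQX price and transferred good}, covering both the case $b\in U$ (all goods of $b$ have value $k$ to $b$) and the case $b\notin U$ ($v_b(e)=1$). With it, $Q^t\subseteq Q^{t+1}$ follows.
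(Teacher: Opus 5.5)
Your proposal is correct and essentially matches the paper's proof. You dismiss price-rise rounds because utilities do not depend on prices, which is what the paper's ``clearly'' leaves implicit. In transfer rounds, you handle $Q^t\setminus\{b^t\}$ by monotonicity of bundles and $b^t$ via the minimum-value clause of Lemma \ref{lem: WEQX price and transferred good}, closing both chains with Invariant \ref{inv: WEQX big spending}.
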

\begin{proof}
	Assume we are in the $r$-th iteration of the \textbf{for} loop processing group $N_r$.
	Clearly, the price-rise rounds do not affect the invariant. It suffices to consider the transfer rounds.
	
	For a transfer round $t$, for any $i\in Q^t\setminus\{b^t\}$, we have $\frac{v_i(X^{t+1}_i)}{w_i}\geq \frac{v_i(X^{t}_i)}{w_i}\geq \hat{v}^t_{b^t}\geq \hat{v}^{t+1}_{b^{t+1}}$.
	The last inequality is due to Invariant \ref{inv: WEQX big spending}. Thus, $Q^t\setminus\{b^t\}\subseteq Q^{t+1}$.
	For $b^t$, we have $\frac{v_{b^t}(X^{t+1}_{b^t})}{w_{b^t}}= \frac{v_{b^t}(X^{t}_{b^t}\setminus\{e^t\})}{w_{b^t}}=\hat{v}^t_{b^t}\geq \hat{v}^{t+1}_{b^{t+1}}$.
	The second equality is due to Lemma \ref{lem: WEQX price and transferred good}. The inequality is due to Invariant \ref{inv: WEQX big spending}. Thus, $Q^t\subseteq Q^{t+1}$.
\end{proof}

\section{Conclusion}
\label{sec: conclusion}

In this paper, we have revisited the problem of finding fair and efficient allocations for goods with bivalued utilities. Our main contributions are as follows:
\begin{enumerate}
	\item \textbf{Correcting and Generalizing Prior Work on EFX:} We identified a termination issue in the polynomial-time algorithm proposed by Garg and Murhekar \cite{GargM21} for computing EFX and fPO allocations. To address this, we designed a new polynomial-time algorithm based on the Fisher market that computes a \emph{weighted} EFX (WEFX) and fPO allocation. This not only remedies the flaw in the previous approach but also generalizes the known result for (unweighted) EFX and fPO allocations \cite{BuLLLT24} to the weighted setting.
	\item \textbf{Extending Results to Equitability (EQX):} We adapted our algorithmic framework to compute weighted equitable allocations up to any good (WEQX) that are also fPO. This generalizes the previous polynomial-time result for EQX and PO allocations in the bivalued setting \cite{GargM21} to the weighted case.
\end{enumerate}

Our work advances the state-of-the-art for weighted fair division with bivalued preferences, demonstrating the versatility of Fisher market-based techniques.

\paragraph{Future Directions.} Several open questions arise from our work. First, it is interesting to investigate whether a WEFX and fPO allocation can be computed in polynomial time for the more general model of \emph{personalized} bivalued goods (where the two values may differ across agents). Second, exploring the existence and efficient computation of WEFX and fPO allocations for \emph{bivalued chores} (undesirable items) presents a natural and challenging direction for future research.

\bibliographystyle{plain}
\bibliography{document}

\appendix

\section{A Counterexample}
\label{sec: a counterexample}

For clarity, we present the algorithm from \cite{GargM21} as Algorithm \ref{alg: Garg and Murhekar}, using the notation consistent with this paper.
Algorithm \ref{alg: Garg and Murhekar} seeks an EFX allocation by iteratively adjusting allocations and prices. Its main loop (lines 4-7) transfers goods along MBB paths, serving the same purpose as the loop in our Algorithm \ref{alg: initial allocation}. Subsequently, the algorithm checks a strong termination condition against the group of agents \emph{not} reachable from the least spender $i$ (line 9). If this condition is not met, it raises the prices of all items belonging to agents in $C_i$ (line 12) and repeats. The process continues until the least spender $i$ satisfies the pWEFX condition.
It is this overly strong termination condition that causes the algorithm to potentially run forever.

\begin{algorithm}[tb]
	\caption{EFX and fPO allocation for bivalued goods \cite{GargM21}}
	\label{alg: Garg and Murhekar}
	\begin{algorithmic}[1]
		\REQUIRE An unweighted bivalued instance $(N, M, \{v_i\}_{i \in N})$
		\STATE $(\x,\p)\gets$ welfare-maximizing allocation, where $p(e)=v_i(e)$ for $e\in \x_i$.
		\STATE Construct the MBB graph $G_{\x}=(N,E)$ by adding an MBB edge from $i$ to $j$ if $\MBB_i\cap \x_j\neq \emptyset$.
		\STATE Let $i \in \operatorname{argmin}\limits_{h \in N} p(X_h)$ be the least spender
		\WHILE{there are a path $i\to i_1\to \cdots \to i_s$ in $G_{\x}$ and good $e\in \x_{i_s}$ s.t $p(X_{i_s} \setminus \{e_s\}) > p(X_i)$}
		\STATE Transfer $e_s$ from $i_s$ to $i_{s-1}$ and update allocation $X$
		\STATE Update the least spender $i \in \operatorname{argmin}\limits_{h \in N} p(X_h)$
		\ENDWHILE
		\STATE Let $C_i$ be the set of agents reachable from $i$ in the MBB graph
		\IF{$\forall  h \notin C_i$ and $\forall e \in X_h$, $p(X_h \setminus \{e\}) \leq p(X_i)$} 
		\RETURN $(X,p)$
		\ELSE
		\STATE Raise prices of goods in $C_i$ by a factor of $k$
		\STATE Repeat from Step 3
		\ENDIF
	\end{algorithmic}
\end{algorithm}

{\ThmCounterexample*}

\begin{proof}
	Consider an instance with 2 agents and 5 goods. The agents' valuations are shown in Table \ref{tab: counterexample}, where the parameter $k$ for bivalued utilities is $5$.
	\begin{table}[hp]
		\label{tab: counterexample}
		\centering
		\caption{An instance on which Algorithm \ref{alg: Garg and Murhekar} never terminates.}
		\begin{tabular}{c c c c c c}
			\hline
			& \(e_1\)  & \(e_2\)  & \(e_3\) & \(e_4\) & \(e_5\) \\
			\hline
			$v_1$   & $5$ & $5$ & $5$ & $1$ & $1$ \\ 
			$v_2$   & $1$ & $1$ & $1$ & $1$ & $5$ \\
			\hline
		\end{tabular}
	\end{table}
	
	We now run Algorithm \ref{alg: Garg and Murhekar} on this instance.
	Initially, goods $e_1, e_2, e_3$ are allocated to agent $1$ with $p(e_1)=p(e_2)=p(e_3)=5$, and $e_4,e_5$ are allocated to agent $2$ with $p(e_4)=1, p(e_5)=5$.
	At this point, agent $2$ is identified as the least spender.
	Then, the algorithm step into the \textbf{while} loop and tries to transfer goods from agent $1$ to agent $2$. However, it is easy to see that $\alpha_2=1$ and $\alpha_{2,e}=1/k$ for any $e\in\{e_1,e_2,e_3\}$. It implies that $e_1,e_2,e_3\notin \MBB_2$ and there is no MBB edge from agent $2$ to agent $1$. As a result, the \textbf{while} loop breaks without execution. Next, the algorithm comes to the \textbf{if} statement.
	Since $p(X_2)=6<10=p(X_1\setminus\{e\})$ for any $e\in\{e_1,e_2,e_3\}$, the algorithm will raise the prices of goods in $X_2$ by a multiplicative factor of $5$, resulting in $p(e_1)=p(e_2)=p(e_3)=p(e_4)=5,p(e_5)=25$.
	
	The algorithm then repeats with agent $1$ now being the least spender. In the \textbf{while} loop, it tries to transfer goods from agent $2$ to agent $1$. However, it is easy to see that $\alpha_1=1$ and $\alpha_{1,e}=1/k$ for any $e\in\{e_4,e_5\}$.
	It implies that $e_4,e_5\notin \MBB_1$ and there is no MBB edge from agent $1$ to agent $2$. As a result, the \textbf{while} loop breaks without execution. For the \textbf{if} statement, since $p(X_1)=15<25=p(e_5)=p(X_2\setminus\{e_4\})$, the algorithm will raise the prices of goods in $X_1$ by a multiplicative factor of $5$, resulting in $p(e_1)=p(e_2)=p(e_3)=25, p(e_4)=5,p(e_5)=25$. At the moment, agent $2$ again becomes the least spender, and everything remains the same as that at the beginning except that the price of all goods are raised by a factor of $5$.
	
	To conclude, the algorithm enters an infinite loop. The allocation $X$ remains unchanged indefinitely, while the prices of all goods are multiplied by a factor of $5$ every two iterations (completing one full cycle: agent $2\to$ agent $1\to$ agent $2$ as least spender). Therefore, Algorithm \ref{alg: Garg and Murhekar} does not terminate on this instance.
\end{proof}

\end{document}